\documentclass[10pt,twocolumn]{article}
\usepackage[margin=0.72in,top=0.70in,bottom=0.78in]{geometry}
\usepackage{amsmath,amssymb,amsthm}
\usepackage{booktabs}
\usepackage{array}
\usepackage{graphicx}
\usepackage{url}
\usepackage{microtype}
\usepackage{xcolor}
\usepackage[numbers,square,sort&compress]{natbib}
\usepackage[hidelinks]{hyperref}
\setcitestyle{numbers,square,semicolon}
\graphicspath{{./}}
\newcommand{\TP}{\mathrm{TP}}
\newcommand{\Rhat}{\widehat R}
\newcommand{\shat}{\widehat\sigma}
\newcommand{\calT}{\mathcal T}
\newcommand{\calC}{\mathcal C}
\newcommand{\calE}{\mathcal E}
\newcommand{\calP}{\mathcal P}
\newcommand{\calU}{\mathcal U}
\theoremstyle{plain}
\newtheorem{proposition}{Proposition}
\newtheorem{lemma}{Lemma}
\newtheorem{corollary}{Corollary}
\theoremstyle{remark}
\newtheorem{remark}{Remark}
\theoremstyle{definition}
\newtheorem{definition}{Definition}
\makeatletter
\newcommand{\fullwidthfrontmatter}[1]{%
  \twocolumn[{\begin{@twocolumnfalse}#1\end{@twocolumnfalse}}]%
}
\makeatother
\begin{document}
\fullwidthfrontmatter{
\begin{center}
{\LARGE\bfseries GoCo: structure-guided conformal calibration for error-controlled Gene Ontology annotation release\par}
\vspace{0.8em}
{\large Chong Wang$^{1,2}$, Yongzhao Shao$^3$, and Peng Liu$^{1,4,*}$\par}
\vspace{0.6em}
{\small
$^1$Department of Statistics, Iowa State University, Ames, Iowa, USA\\
$^2$Department of Veterinary Diagnostic and Production Animal Medicine, College of Veterinary Medicine, Iowa State University, Ames, Iowa, USA\\
$^3$Division of Biostatistics, Department of Population Health, NYU Grossman School of Medicine, New York, New York, USA\\
$^4$Laurence H. Baker Center for Bioinformatics and Biological Statistics, Iowa State University, Ames, Iowa, USA\\
$^*$Corresponding author: Peng Liu, Department of Statistics, Iowa State University, 2117 Snedecor Hall, 2438 Osborn Drive, Ames, IA 50011, USA. Tel: +1 515 294 7806. E-mail: \texttt{pliu@iastate.edu}\par}
\end{center}
\vspace{0.4em}
\begin{abstract}
Computational gene-function prediction generates annotations at a scale that experimental characterization cannot match, yet evaluations using subsequently acquired experimental evidence show that substantial errors remain even among leading methods. Explicit error-controlled calibration for Gene Ontology (GO) prediction remains largely undeveloped. GO prediction is both multilabel and hierarchical, and many predictors incorporate gene--gene relationships such as co-essentiality and network proximity to improve prediction. We introduce GoCo, a structure-guided framework that calibrates how scores from an existing predictor are converted into released annotations. GoCo certifies a global threshold for a GO-hierarchy-aware gene-level false discovery proportion (FDP), then constructs a nested release path under the same asymptotic risk-control guarantee. At the primary benchmark operating point ($\alpha=0.10$, $\delta=0.50$), GoCo-M, developed on Wainberg co-essentiality modules and applied without retuning to three independent studies with the same shared-evidence structure, kept mean held-out gene-level FDP at or below target while releasing 6.3--17.6\% more reference-supported GO terms than Multilabel. For gene-specific neighbourhoods rather than shared module evidence, we developed GoCo-N, where scores are effectively untied and ordering acts on individual calls. Developed on FunMap and validated without retuning on the STRING v12.0 experiments-only network, GoCo-N released 33.8\% and 36.4\% more reference-supported GO terms than Multilabel on the same refined score grid in FunMap and STRING, respectively, while keeping mean held-out FDP below target. The same pattern held under the stricter $\delta=0.10$ setting. Thus, predictor-native biological structure can be reused at calibration---not only prediction---to release more supported annotations at a chosen gene-level error target without retraining the predictor.
\end{abstract}
\noindent\textbf{Keywords:} Gene Ontology; gene function prediction; conformal risk control; Learn-Then-Test; co-essentiality; functional networks
\vspace{1.0em}
}

\section{Introduction}
Computational gene- and protein-function prediction is a rapidly developing area of bioinformatics. Community assessments such as CAFA have tracked hundreds of methods over more than a decade \cite{radivojac2013,zhou2019,jiang2016cafa2,lin2024review}, and computational annotations now outnumber experimentally supported ones by orders of magnitude \cite{go2023,skunca2012quality}. Most methodological development has focused on improving or benchmarking prediction scores rather than controlling error in released annotations. Consequently, predicted annotations enter databases, enrichment analyses and mechanistic interpretation without an accompanying error statement. Benchmark-level $F_{\max}$ (the maximum F-measure across thresholds) and area under the precision--recall curve (AUPR) summarize predictive performance rather than the error of a released annotation set \cite{clark2013information,piovesan2024evaluator,kulmanov2020deepgoplus,yao2021netgo2}, and annotation sets built at $F_{\max}$-optimal thresholds can violate GO's consistency constraints, which rule out mutually exclusive or taxonomically contradictory functions \cite{tawfiq2026completeness}. For downstream users, the practical question is therefore not only how well functions are ranked, but how to release useful annotations while limiting unsupported ones.

Our approach is motivated by two forms of structure in GO prediction that are directly relevant to error-controlled release. The first is the GO hierarchy: GO is a directed acyclic graph (DAG) with multiple inheritance, and under the TruePath rule an annotation to a term implies annotations to its retained ancestors \cite{go2023}. The second is predictor-native gene--gene structure: predictors use co-regulation, co-essentiality, physical interaction, sequence similarity and integrated networks to improve function prediction \cite{zhang2024search,zhu2025mkfgo,hu2025dahgl,kan2026protdml,wang2026ssas,mostafavi2008,kustatscher2019,wainberg2021}. A genome-wide score cutoff does not by itself exploit either structure: it treats the ontology as a flat label list and cannot use predictor-native evidence to retain more supported annotations at the same error target.

Several distribution-free calibration methods can add an error statement to released annotations without retraining the predictor. Split conformal prediction guarantees marginal coverage from an exchangeable calibration sample \cite{angelopoulos2023gentle}. Risk-controlling prediction sets and conformal risk control extend this idea to bounded losses that are monotone in a release threshold \cite{bates2021,angelopoulos2024crc}. Learn-Then-Test (LTT) further allows nonmonotone candidate sequences, returning with probability at least $1-\delta$ only candidate policies whose risk (expected loss) is at most $\alpha$ \cite{angelopoulos2025ltt}. These approaches have been applied in molecular biology research \cite{boger2025,ding2024cpec,lopezdecastro2025conformal,corbetta2026cellontology}. Hierarchical conformal methods also construct prediction sets on trees or DAGs when each unit has one true output \cite{mortier2026hierarchical,zhang2025structured}. Boger et al.\ \cite{boger2025} proposed a hierarchy-aware formulation for tree-structured labels, but the multiple-parent structure of GO is not a tree and therefore requires different treatment. For GO release, the formulation must accommodate two features: a gene can have many true terms, and GO is a DAG with multiple inheritance. We define the release error as the fraction of terms released for a gene that are unsupported by the reference. Our first layer, GoDag, tests whether the average gene-level FDP for a global threshold is within a chosen limit, with a stated bound on the probability of false certification.

Existing release rules also leave two practical gaps. First, accuracy-optimized or fixed-size rules such as an $F_{\max}$ threshold or top-$k$ terms per gene state no gene-level error rate. Second, calibrated methods restricted to a single global threshold can leave supported predictions unreleased when the score path is coarse. Conservative concentration-bound certificates can compound this loss of yield (Results). The balance between avoiding unsupported calls and retaining supported functions depends on the application: some users may place greater weight on avoiding unsupported annotations, whereas exploratory applications may place greater value on retaining potentially informative functions. Accordingly, we evaluate release rules along two complementary dimensions: gene-level FDP for unsupported releases and reference-supported yield, a complementary measure of how many supported functions are retained rather than missed, not a second error rate. The GoCo framework addresses error control and yield in sequence: GoDag supplies the gene-level error certificate, and the second layer enriches the candidate path while retaining that certification, with the appropriate refinement determined by the predictor's evidence structure. For shared evidence, the limitation is especially clear: genes sharing module evidence can receive the same score for a term, so thousands of gene--term pairs tie exactly and no threshold can split such a block---it admits all of it, possibly above target, or none, leaving the budget unspent (Results). But the predictor has already exposed which genes in the block share which evidence, and GoCo turns that into a nested family of release policies certified by one test.

Our contributions are fourfold. First, we define a hierarchy-aware gene-level FDP estimand and its global-threshold certificate, GoDag. Second, we construct candidate release paths for predictors that can identify the evidence behind each call; Proposition~\ref{prop:validity} and Supplementary Corollary~\ref{supp-cor:order} establish asymptotic finite-population fixed-sequence control for any ordering learned from the ranking fold. Third, Supplementary Lemma~\ref{supp-lem:power} decomposes the gain into granularity and ordering terms, clarifying how the candidate path should match the predictor's evidence structure. Fourth, we develop two structure-specific implementations: GoCo-M for evidence shared between genes, developed on Wainberg and evaluated without retuning on Sanger Project Score, Project DRIVE and the HAP1 genetic-interaction map; and GoCo-N for per-gene neighbourhood evidence, developed on FunMap and validated without retuning on STRING. We also provide a gene-level case study of the released annotations. Code, principal analysis inputs and split-level results for the primary analyses are public, together with aggregate summaries for the secondary comparator panel.

\section{Materials and methods}
\subsection{Step 1: define the annotation release problem}
\label{sec:estimand}
GoCo leaves the upstream predictor unchanged. It requires (i) a score $s_{ig}$ for each gene $i$ and GO term $g$ and (ii) the evidence behind each candidate call, such as annotated module members or network neighbours. A call is a proposed gene--GO annotation. A release policy $\pi$ specifies which calls to retain, giving each gene a set $S_i(\pi)$. A global threshold uses $S_i(\lambda)=\{g:s_{ig}\ge\lambda\}$; GoCo also considers more selective policies (Figure~\ref{fig:workflow}).

\begin{figure*}[!t]
\centering
\includegraphics[width=0.99\textwidth]{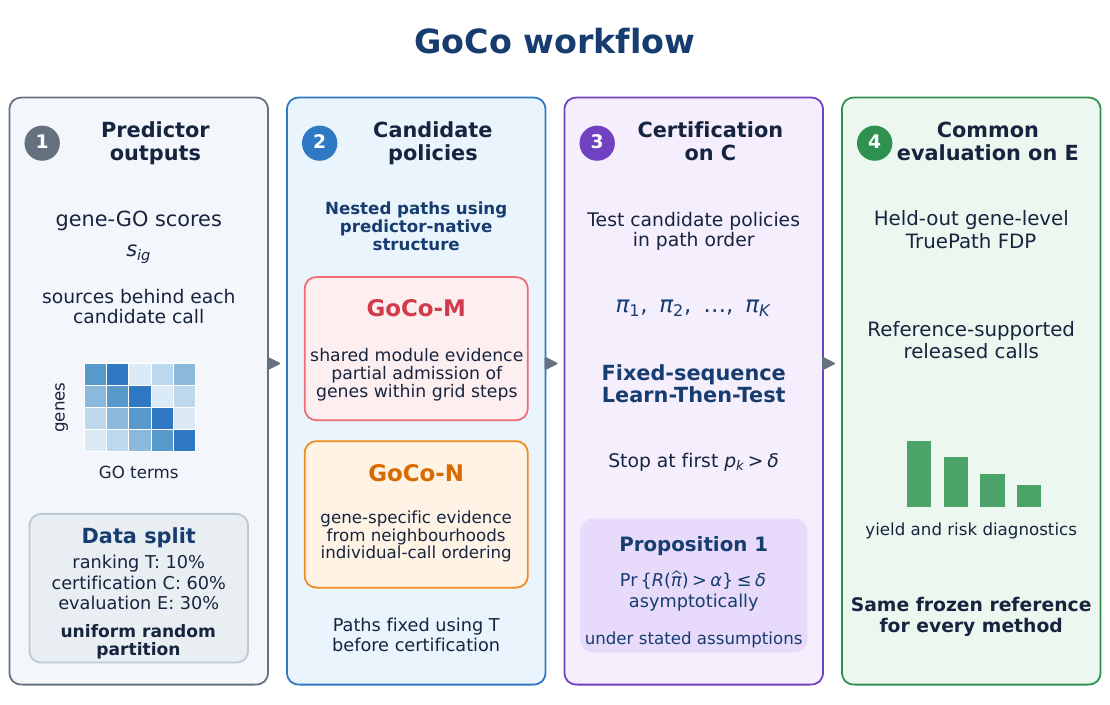}
\caption{Overview of GoCo and its two structural instantiations. Module analyses use 70\% calibration for global methods and 60\% certification for GoCo-M; neighbourhood analyses certify all methods on the same 60\% fold.}
\par\small\textit{Alt text:} Workflow of GoCo. Predictor scores and evidence define two nested candidate-policy families: GoCo-M for shared module evidence with tied scores and GoCo-N for gene-specific neighbourhood evidence. Candidate policies are certified in fixed sequence on the certification fold, and held-out TruePath FDP and reference-supported calls are evaluated on the evaluation fold.
\label{fig:workflow}
\end{figure*}

\label{sec:lossfamily}
The Direct reference contains annotations recorded before adding their GO ancestors. TruePath closure adds the retained ancestors to give the acceptable set $A_i^{\TP}$. The gene-level false discovery proportion is
\begin{equation}
L_i(S_i)=
\begin{cases}
1-\dfrac{|S_i\cap A_i^{\TP}|}{|S_i|}, & |S_i|>0,\\[4pt]
0, & |S_i|=0.
\end{cases}
\label{eq:fdp}
\end{equation}
Closure applies only to the reference; prediction scores are not propagated to ancestors. ``Unsupported'' means absent from this frozen reference, not biologically false. Each gene receives equal weight: one unsupported call among one released term and 20 among 20 both give $L_i=1$; one among 20 gives $L_i=0.05$. Releasing no terms (abstention) gives zero loss \cite{bates2021,angelopoulos2024crc}. We report supported yield separately because FDP does not penalize withheld functions.

\subsection{Step 2: separate ranking, certification and evaluation}
Each of 100 seeded random permutations divides a fixed gene panel into a ranking fold $\calT$ (10\%), a certification fold $\calC$ (60\%) and an evaluation fold $\calE$ (30\%). Ranking labels build the candidate policies; certification labels select a policy; evaluation labels assess its performance. For GoCo, the certified risk is the average FDP over the non-ranking pool $\calP=\calC\cup\calE$, of size $N$:
\begin{equation}
R(\pi)=\frac1N\sum_{i\in\calP}L_i(\pi),\qquad L_i(\pi)=L_i(S_i(\pi)).
\label{eq:risk}
\end{equation}
The target risk is $\alpha=0.10$, with sensitivity analyses at 0.05 and 0.20. Once a candidate policy is fixed, its pool losses are constants. Conditional on the ranking fold, certification genes are sampled uniformly without replacement from that pool. This sampling design supports inference without assuming independent gene losses, an important distinction under biological dependence \cite{shao2007,shang2012fdp}. The guarantee concerns the fixed panel and GO snapshot, not new genes outside the panel.

Module-family global methods have no ranking step and calibrate on $\calT\cup\calC$ (70\%); their certified pool is the whole panel. All neighbourhood-family methods certify on the same 60\% fold. These differences are retained when interpreting comparisons. Seeds, rounding and sample sizes are in Supplementary Note~\ref{supp-note:splits}.

\subsection{Step 3: construct policies from the predictor's evidence}
\label{sec:path}
GoDag, the first layer of GoCo, selects one global threshold using the TruePath loss. GoCo adds policies to exploit evidence already exposed by the predictor. A candidate path is a sequence of annotation sets that only grow as policies are visited. The entire path must be fixed using ranking labels and frozen inputs before certification.

\paragraph{Shared module evidence: GoCo-M.}
Many genes can receive identical scores because they share evidence. A threshold must admit or exclude an exact tie as a whole. GoCo-M keeps the predictor's threshold grid but inserts policies that admit only some affected genes within a grid step. A logistic model trained on ranking-fold calls estimates each call's probability of being unsupported. Features describe the original score, term frequency and the reliability of its supporting sources. For each source, reliability is learned from how often the calls it supports for ranking-fold genes are reference-unsupported. Source rates are shrunk toward the ranking-fold average to stabilize estimates from small counts. Only ranking-fold labels are used as training outcomes; frozen predictor scores remain inputs.

Affected genes are ordered by predicted FDP increase per predicted supported term. The ranking fold determines successive admission cutoffs; deterministic identifier hashes resolve ties. This is a practical ordering heuristic, not an optimality claim. The complete source attribution, shrinkage formula, model, insertion criteria and cutoffs are retained in Supplementary Notes~\ref{supp-note:ranking-model} and~\ref{supp-note:construction}, with constants in Supplementary Table~\ref{supp-tab:constants}.

\paragraph{Gene-specific neighbourhood evidence: GoCo-N.}
FunMap and STRING assemble evidence separately for each gene. Their scores have few substantial ties in the operating range, so a refined global grid provides a stronger threshold baseline. GoCo-N instead orders individual calls by their predicted unsupported probability, admitting the most promising first. This creates a separate nested path, not merely GoCo-M at a finer resolution.

We chose these implementations on Wainberg and FunMap, respectively, then applied them without retuning to external studies. The choice uses the predictor's evidence structure and score multiplicities, not evaluation outcomes. It is not universally optimal; an alternative that chooses between paths using ranking labels alone is assessed in Supplementary Note~\ref{supp-note:adaptive}.

\subsection{Step 4: certify the selected release policy}
\label{sec:cert}
For candidate $\pi_k$, let $\Rhat_k$ be its certification-fold mean FDP and $\shat_k$ its standard deviation, calculated with divisor $m=|\calC|$. We retain the one-sided normal test used in the released Multilabel implementation \cite{boger2025}:
\begin{equation}
p_k=\Phi\!\left(\frac{\Rhat_k-\alpha}{\max(\shat_k,10^{-6})/\sqrt m}\right),
\label{eq:pvalue}
\end{equation}
where $\Phi$ is the standard normal distribution function. Fixed-sequence Learn-Then-Test \cite{angelopoulos2025ltt} tests candidates in their prespecified order, stops at the first $p_k>\delta$ and returns the preceding policy $\widehat\pi$. If the first candidate fails, it abstains; if all pass, it returns the last. For module-family global methods, the mean, standard deviation and sample size use the 70\% calibration fold instead.

Certification means statistical control of the chance of releasing a policy whose pool risk exceeds $\alpha$, not just estimating its FDP. Under the stated finite-population variance and sampling conditions, that chance is asymptotically at most $\delta$ for $0<\delta\le0.50$. Any ordering learned solely from the ranking fold and frozen inputs is covered. The finite-population central limit argument \cite{hajek1960,erdos1959}, assumptions and formal statements are in Proposition~\ref{prop:validity}, Supplementary Corollary~\ref{supp-cor:order} and Supplementary Note~\ref{supp-note:proofs-proposition}. Replacing the normal test with Hoeffding's bound gives finite-sample control on the same path, usually with lower yield \cite{hoeffding1963}.

\begin{proposition}[Conditional fixed-sequence risk control]\label{prop:validity}
Let the candidate path satisfy Supplementary Definition~\ref{supp-def:path}, let $\delta\in(0,\tfrac12]$, and let $k^\ast$ be the first index with $R(\pi_{k^\ast})>\alpha$ (if none exists no error is possible). Consider a sequence of such problems in which $m\to\infty$, $N-m\to\infty$ and $m/N\to f\in[0,1)$, and suppose the pool variance $S^2(\pi_{k^\ast})=(N-1)^{-1}\sum_{i\in\calP}\{L_i(\pi_{k^\ast})-R(\pi_{k^\ast})\}^2$ stays bounded away from zero, and write $z_\delta=\Phi^{-1}(\delta)$. Then, conditional on $\calT$,
\[
\limsup\ \Pr\{R(\widehat\pi)>\alpha\mid\calT\}\ \le\ \Phi\!\bigl(z_\delta/\sqrt{1-f}\,\bigr)\ \le\ \delta,
\]
with equality in the second inequality only when $\delta=\tfrac12$ or $f=0$. The bound holds for every $\calT$-measurable ordering of the affected genes within each block, and unconditionally when the variance condition holds uniformly in $\calT$.
\end{proposition}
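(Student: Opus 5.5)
Conditional on $\calT$, the whole candidate path $\pi_1,\pi_2,\dots$ is fixed, because Supplementary Definition~\ref{supp-def:path} requires it to be built from ranking labels and frozen inputs only. Hence each pool loss $L_i(\pi_k)$, $i\in\calP$, is a deterministic constant in $[0,1]$, and the only randomness left is the simple random sample $\calC$ of size $m$ drawn without replacement from $\calP$. The plan is first to reduce the error event to a single test, then to bound that test's rejection probability with a finite-population central limit theorem.

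\textbf{Reduction to one test.} Since the fixed-sequence procedure returns only candidates preceding the first failure, $R(\widehat\pi)>\alpha$ requires that $\widehat\pi=\pi_k$ for some $k$ with $R(\pi_k)>\alpha$. Because $k^\ast$ is the first such index, this can happen only if the procedure passes $\pi_{k^\ast}$, i.e.\ $p_{k^\ast}\le\delta$. If no $k^\ast$ exists the probability is zero. So
\[
\Pr\{R(\widehat\pi)>\alpha\mid\calT\}\le\Pr\{p_{k^\ast}\le\delta\mid\calT\}.
\]
This step uses no monotonicity of risk along the path, only the stopping rule. Writing $L_i=L_i(\pi_{k^\ast})$ and $R=R(\pi_{k^\ast})$, the event $p_{k^\ast}\le\delta$ is
\[
\frac{\Rhat-\alpha}{\max(\shat,10^{-6})/\sqrt m}\le z_\delta .
\]
Since $z_\delta\le0$ and $\alpha<R$, this implies $\sqrt m(\Rhat-R)/\shat'\le z_\delta$ with $\shat'=\max(\shat,10^{-6})$, because replacing $\alpha$ by the larger $R$ only decreases the left side. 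This is the conservative step that uses $\delta\le\tfrac12$.

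\textbf{Finite-population CLT.} Apply H\'ajek's theorem to the sample mean under sampling without replacement. The statement's assumptions supply what it needs: the losses are bounded in $[0,1]$, $S^2$ stays bounded away from zero, and $m\to\infty$, $N-m\to\infty$, so the Lindeberg-type condition holds automatically. The conclusion is
\[
\frac{\sqrt m(\Rhat-R)}{S\sqrt{1-m/N}}\Rightarrow N(0,1).
\]
Next, show that $\shat^2-S^2\to0$ in probability. This follows from the variance of a without-replacement sample mean of the bounded squares $L_i^2$, together with $\Rhat\to R$. The floor $10^{-6}$ is then eventually irrelevant because $S$ is bounded below. By Slutsky,
\[
\sqrt m(\Rhat-R)/\shat'\Rightarrow N(0,1-f).
\]
Therefore the limsup of the probability is at most
\[
\Pr\{\sqrt{1-f}\,Z\le z_\delta\}=\Phi\bigl(z_\delta/\sqrt{1-f}\bigr).
\]
Because $m/N$ only converges to $f$, I would pass to subsequences to handle the limsup.

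\textbf{Remaining claims.} For $z_\delta<0$ and $f\in(0,1)$ we have $z_\delta/\sqrt{1-f}<z_\delta$, so the bound is strictly below $\delta$. Equality holds exactly when $z_\delta=0$ (i.e.\ $\delta=\tfrac12$) or $f=0$. Any $\calT$-measurable within-block ordering changes only the path, which remains fixed given $\calT$, so the argument applies unchanged. For the unconditional statement, integrate the conditional bound over $\calT$ using reverse Fatou. This needs the convergence, and in particular the variance bound, to hold uniformly in $\calT$, which is the stated uniformity assumption.

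\textbf{Main obstacle.} The delicate part is that the pool $\calP$ and the index $k^\ast$ themselves depend on $\calT$ and change along the sequence of problems. The CLT must therefore be applied to a triangular array indexed by the problem, with constants determined by $\calT$. I would handle this by stating H\'ajek's condition for arbitrary bounded constant arrays with $S^2$ bounded below, so that it covers whichever $k^\ast$ arises.
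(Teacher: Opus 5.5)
Your proposal follows the paper's proof essentially step for step: reduce the error event to the single test $p_{k^\ast}\le\delta$ using the stopping rule, then apply H\'ajek's finite-population CLT, consistency of $\shat$ and Slutsky to obtain $\Phi(z_\delta/\sqrt{1-f})$. Ordering-freeness comes from $\calT$-measurability, and the unconditional claim from uniformity in $\calT$; the paper gets that uniformity from a Berry--Esseen remainder for sampling without replacement (H\"oglund) rather than from reverse Fatou. One small correction: replacing $\alpha$ by $R$ does not use $z_\delta\le0$; the paper uses $z_\delta\le0$ instead to replace the floored scale $\max(\shat,10^{-6})$ by $\shat$, which avoids your claim that the floor is eventually irrelevant, since that claim needs $\liminf S>10^{-6}$ rather than merely $\liminf S>0$.
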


\noindent\textit{Proof idea.} If the returned policy is invalid, the first invalid candidate on the prespecified path must have passed its test. False certification is therefore bounded by that single test, without assuming independence among candidate tests or monotonicity of FDP along the path. See Supplementary Note~\ref{supp-note:proofs-proposition} for the full proof.

We report $\delta=0.50$ to match the released Multilabel default and $\delta=0.10$ as a stricter analysis. At 0.50, Equation~\eqref{eq:pvalue} reduces to $\Rhat_k\le\alpha$ and its formal false-certification bound is permissive (0.50). Held-out FDP can exceed $\alpha$ even when pool risk does not; these are different events. The exact relationship and stronger asymptotic bound under sampling without replacement are in Supplementary Remark~\ref{supp-rem:delta}.

\subsection{Data, comparators and reporting}
\label{sec:data}
Wainberg's co-essentiality predictor uses leave-one-gene-out GO enrichment of overlapping modules \cite{wainberg2021}. Its reconstructed scores match the archived file to numerical precision. Sanger Project Score \cite{behan2019} and Project DRIVE \cite{mcdonald2017} supply CRISPR and RNAi evidence to select among these modules without GO-based selection; HAP1 uses published genetic-interaction clusters \cite{billmann2026}. Panels contain 15,794, 14,885, 6,283 and 15,020 genes, respectively. All four use the GO release of 1 November 2019. For these module datasets, the reference includes all evidence codes, removes NOT-qualified annotations and root terms, and uses \emph{is\_a} and \emph{part\_of} for TruePath closure. FunMap \cite{shi2025funmap} and STRING v12.0 experiments-only \cite{szklarczyk2023string} use target-excluded top-50 random-walk neighbourhoods and GO release 2024-09-08. Their full construction and leakage checks are in Supplementary Note~\ref{supp-note:neighbourhood}; module reconstruction and reference details are in Supplementary Note~\ref{supp-note:data-details}.

\label{sec:comparators}
All methods use the same frozen scores, split streams and held-out TruePath reference within each dataset. Multilabel uses the Direct loss, whereas GoDag uses TruePath and GoCo also uses the predictor's evidence. Generic Direct-loss comparators use Hoeffding \cite{hoeffding1963}, Hoeffding--Bentkus \cite{bentkus2004}, an independent-observation normal approximation, empirical Bernstein \cite{maurer2009bernstein}, McDiarmid \cite{mcdiarmid1989}, Janson \cite{janson2004}, network heteroskedasticity and autocorrelation consistent (HAC) variance \cite{newey1987}, or Platt \cite{platt1999} and isotonic \cite{zadrozny2002} calibration. Graph/HAC assumptions are not established by our sampling design, so those are working-model comparisons. Platt and isotonic calibration provide no guarantee for the target risk. The method inventory is retained in Supplementary Table~\ref{supp-tab:methods}.

\label{sec:reporting}
We report mean held-out gene-level FDP, supported calls and genes, total calls, and pool-level failure and held-out exceedance frequencies over 100 splits. A supported call is a released gene--GO pair confirmed by the frozen reference. Split variation reflects random partitioning of this panel, not sampling new biological populations; frequency estimates have Monte Carlo standard errors up to 0.05. Supplementary Note~\ref{supp-note:conventions} defines all quantities and the handling of abstention.

\section{Results}
\subsection{Tied scores leave supported annotations unreleased}
\label{sec:ties}
Wainberg contains 2,470 gene--GO calls with the identical score 789.650. Its threshold step from 800 to 775 raises full-panel TruePath risk from 0.0924 to 0.1398 (Figure~\ref{fig:ties}). No threshold can divide the dominant tie, so GoDag stops at 800 in all 100 splits despite a target of 0.10. Partial admission provides intermediate policies. Random admission reaches the target at about 16\% in a descriptive full-panel interpolation; GoCo's informed ordering allowed certification at admission settings of 35--40\% in 89 splits. The interpolation is not itself a certificate. Supplementary Lemma~\ref{supp-lem:power} separates the contributions of finer admission steps and informative ordering without claiming guaranteed yield improvement. Across the certified module-family blocks, the ordering statistic's Spearman correlation with realised loss increments was $0.22$--$0.73$ on non-ranking genes (Supplementary Table~\ref{supp-tab:blockdiag}).

\begin{figure*}[!t]
\centering
\includegraphics[width=0.99\textwidth]{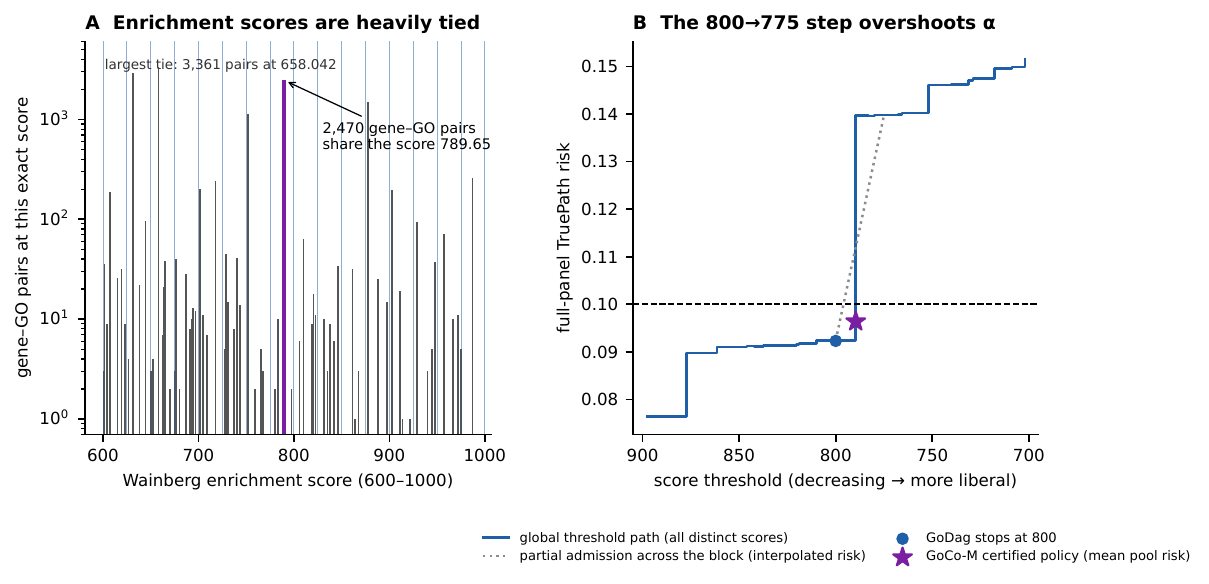}
\caption{Score ties and the global-threshold ceiling in Wainberg. (A) Exact-score multiplicities between 600 and 1000. (B) Full-panel TruePath risk across distinct thresholds, highlighting the $800\to775$ step that crosses the $\alpha=0.10$ target and the partial-admission path within that step.}
\par\small\textit{Alt text:} Two-panel illustration of score ties in Wainberg. Panel A shows large multiplicities of identical enrichment scores, including 2,470 gene--GO pairs at score 789.65. Panel B shows the full-panel TruePath risk jumping across the 0.10 target between thresholds 800 and 775, whereas partial admission interpolates within the tied block.
\label{fig:ties}
\end{figure*}

\subsection{Shared module evidence: GoCo-M}
\label{sec:module}
At $\alpha=0.10$, $\delta=0.50$, GoCo-M's mean held-out FDP was 0.0964, 0.0980, 0.0971 and 0.0943 in Wainberg, Sanger, DRIVE and HAP1 (Table~\ref{tab:panelmodule50}; Figure~\ref{fig:primary}). Supported yield increased by 8.7\%, 14.3\%, 17.6\% and 6.3\% relative to Multilabel. GoDag improved on Multilabel when TruePath changed the selected threshold; GoCo-M added supported calls in all four datasets. Comparisons with GoDag assess the full procedure, not the path alone, because their certification folds differ (60\% versus 70\%). Gains persisted at $\delta=0.10$: 7.8\%, 7.7\%, 15.6\% and 6.3\%, respectively (Supplementary Table~\ref{supp-tab:detail}; Supplementary Figure~\ref{supp-fig:supp-primary}).

No generic Direct-loss comparator exceeded GoDag in supported yield: the concentration-bound members annotated fewer genes at a realised FDP below target, and at $\delta=0.50$ the normal-type rules coincide with Multilabel exactly (Supplementary Remark~\ref{supp-rem:delta}). GoCo-M's distinguishing feature is a richer set of candidate policies, not a sharper bound on the same threshold path.

\begin{table*}[htbp]
\centering
\caption{Module-structure comparison of mean held-out FDP and supported yield at $\alpha=0.10$, $\delta=0.50$, over 100 splits. Wainberg is the development data set; Sanger, DRIVE and HAP1 are external studies with the same shared-evidence structure.}
\label{tab:panelmodule50}
\scriptsize
\setlength{\tabcolsep}{3pt}
\resizebox{\linewidth}{!}{\begin{tabular}{l rrr rrr rrr rrr}
\toprule
Method & \multicolumn{3}{c}{Wainberg ($n_{\mathcal E}$=4,738)} & \multicolumn{3}{c}{Sanger ($n_{\mathcal E}$=4,465)} & \multicolumn{3}{c}{DRIVE ($n_{\mathcal E}$=1,885)} & \multicolumn{3}{c}{HAP1 ($n_{\mathcal E}$=4,506)} \\
\cmidrule(lr){2-4}\cmidrule(lr){5-7}\cmidrule(lr){8-10}\cmidrule(lr){11-13}
 & Gene FDP & Genes with a & Correct & Gene FDP & Genes with a & Correct & Gene FDP & Genes with a & Correct & Gene FDP & Genes with a & Correct \\
 & & correct call & calls & & correct call & calls & & correct call & calls & & correct call & calls \\
\midrule
\multicolumn{13}{l}{\emph{Direct loss, one global threshold (ontology-blind)}} \\
Multilabel & 0.0923 & 185.7 & 339.6 & 0.0953 & 155.4 & 333.0 & 0.0945 & 96.1 & 248.1 & 0.0719 & 77.0 & 129.4 \\
Hoeffding/LTT & 0.0914 & 182.6 & 333.3 & 0.0876 & 148.1 & 308.0 & 0.0864 & 84.2 & 206.1 & 0.0719 & 77.0 & 129.4 \\
Hoeffding--Bentkus & 0.0923 & 185.7 & 339.6 & 0.0916 & 150.9 & 316.4 & 0.0923 & 91.8 & 233.9 & 0.0719 & 77.0 & 129.4 \\
IID-normal & 0.0923 & 185.7 & 339.6 & 0.0953 & 155.4 & 333.0 & 0.0945 & 96.1 & 248.1 & 0.0719 & 77.0 & 129.4 \\
Empirical Bernstein & 0.0919 & 183.6 & 335.7 & 0.0894 & 149.3 & 311.5 & 0.0882 & 86.0 & 213.1 & 0.0719 & 77.0 & 129.4 \\
McDiarmid (binary incidence) & 0.0804 & 166.1 & 291.8 & 0.0777 & 142.2 & 288.6 & 0.0782 & 78.4 & 183.3 & 0.0719 & 77.0 & 129.4 \\
Janson dependency graph & 0.0775 & 162.1 & 281.0 & 0.0769 & 141.7 & 287.0 & 0.0746 & 76.0 & 175.7 & 0.0719 & 77.0 & 129.4 \\
Network-HAC ($b=1$) & 0.0923 & 185.7 & 339.6 & 0.0953 & 155.4 & 333.0 & 0.0945 & 96.1 & 248.1 & 0.0719 & 77.0 & 129.4 \\
Platt calibration & 0.0908 & 182.3 & 332.1 & 0.0949 & 155.6 & 334.4 & 0.0914 & 91.7 & 233.6 & 0.0719 & 77.0 & 129.4 \\
Isotonic calibration & 0.0919 & 184.8 & 337.6 & 0.0956 & 156.5 & 337.8 & 0.0917 & 92.0 & 235.3 & 0.0719 & 77.0 & 129.4 \\
\addlinespace[2pt]\multicolumn{13}{l}{\emph{TruePath loss, one global threshold (GO hierarchy)}} \\
GoDag & 0.0923 & 185.7 & 339.6 & 0.0980 & 160.9 & 353.8 & 0.0965 & 98.7 & 258.6 & 0.0719 & 77.0 & 129.4 \\
\addlinespace[2pt]\multicolumn{13}{l}{\emph{TruePath loss, structure-guided partial admission (GO hierarchy and module-shared evidence)}} \\
\textbf{GoCo-M} & \textbf{0.0964} & \textbf{197.1} & \textbf{369.1} & \textbf{0.0980} & \textbf{166.9} & \textbf{380.7} & \textbf{0.0971} & \textbf{105.7} & \textbf{291.7} & \textbf{0.0943} & \textbf{80.8} & \textbf{137.7} \\
\bottomrule
\end{tabular}}
\par\vspace{3pt}\begin{minipage}{\linewidth}\footnotesize Gene FDP: mean held-out TruePath false discovery proportion. Correct calls: reference-supported gene--GO pairs. Genes with a correct call: genes with at least one supported release. All global methods use the same 25-unit grid and 70\% calibration; GoCo-M uses 10\% ranking and 60\% certification. Platt/isotonic rules are $\delta$-free and have no risk guarantee. Normal-type rules coincide at $\delta=0.50$ on the same path, loss and calibration sample (Supplementary Remark~\ref{supp-rem:delta}).\end{minipage}
\end{table*}
\begin{figure*}[!t]
\centering
\includegraphics[width=0.99\textwidth]{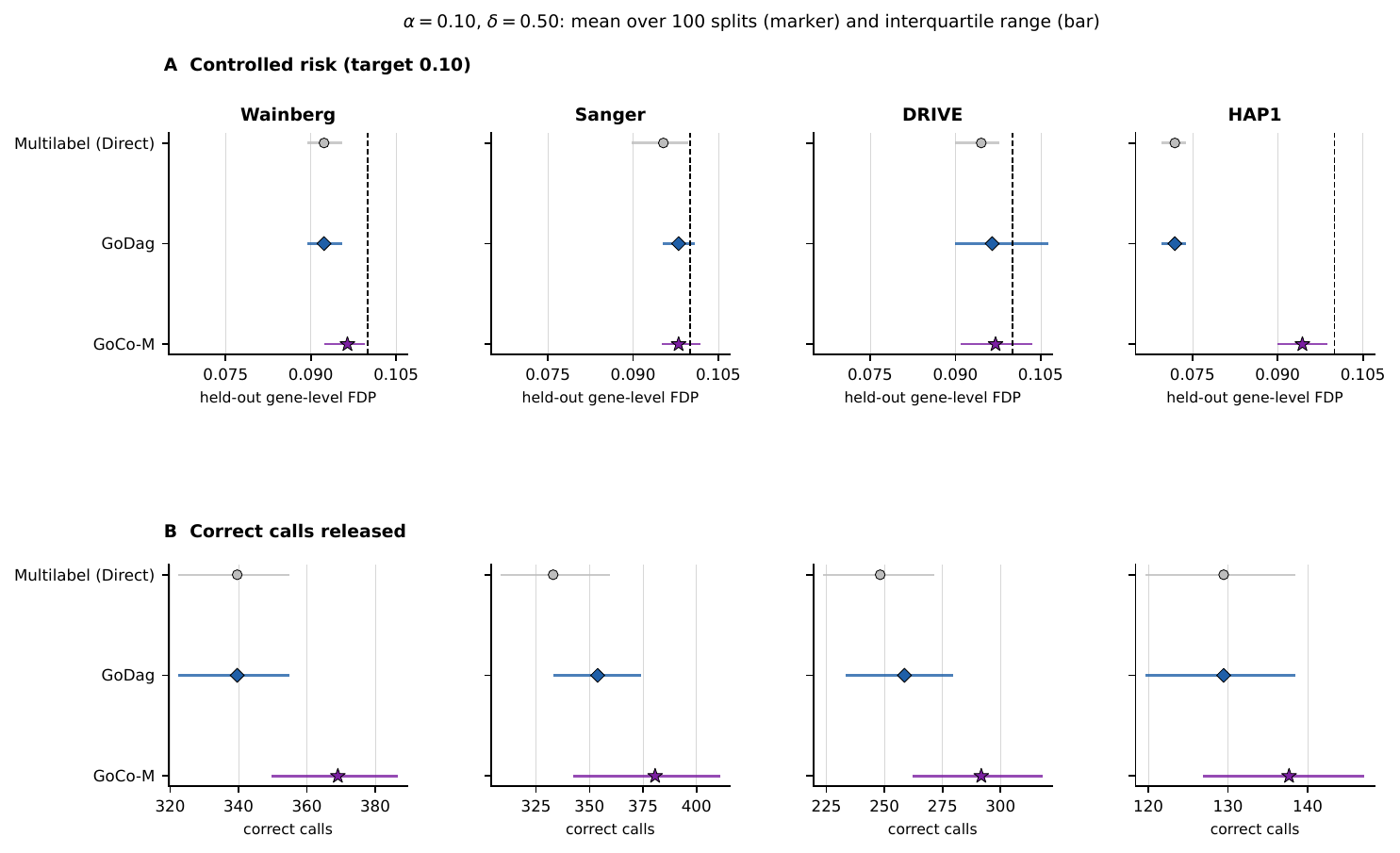}
\caption{Primary comparison of Multilabel, GoDag and GoCo-M at $\alpha=0.10$, $\delta=0.50$: held-out gene-level TruePath FDP and reference-supported calls over 100 splits.}
\par\small\textit{Alt text:} Primary comparison across Wainberg, Sanger, DRIVE and HAP1. The upper row shows mean held-out gene-level TruePath FDP and interquartile ranges for Multilabel, GoDag and GoCo-M relative to the 0.10 target. The lower row shows reference-supported released gene--GO calls, with GoCo-M releasing the most in each data set.
\label{fig:primary}
\end{figure*}

\subsection{Confidence and error-target sensitivity}
GoCo-M exceeded the pool-risk target in 5--26\% of splits at $\delta=0.50$, below that setting's permissive formal bound. At $\delta=0.10$, pool failure occurred in zero or one split per module dataset. These frequencies are consistent with, but too coarse to assess precisely, asymptotic upper bounds near 0.013 for GoCo and 0.010 for global methods. Held-out exceedance remained more frequent (Supplementary Tables~\ref{supp-tab:detail} and~\ref{supp-tab:riskdecomp}).

At $\alpha=0.05$ and 0.20, mean held-out FDP remained below target for the principal methods, and GoCo-M matched or exceeded Multilabel's supported yield (Figure~\ref{fig:sweep}; Supplementary Tables~\ref{supp-tab:sweep50} and~\ref{supp-tab:sweep10}). Gains were largest where a grid step crossed the target sharply. The zero gain in DRIVE at $\alpha=0.20$, $\delta=0.50$ arose because both methods reached the grid's reporting floor.

\begin{figure*}[!t]
\centering
\includegraphics[width=0.99\textwidth]{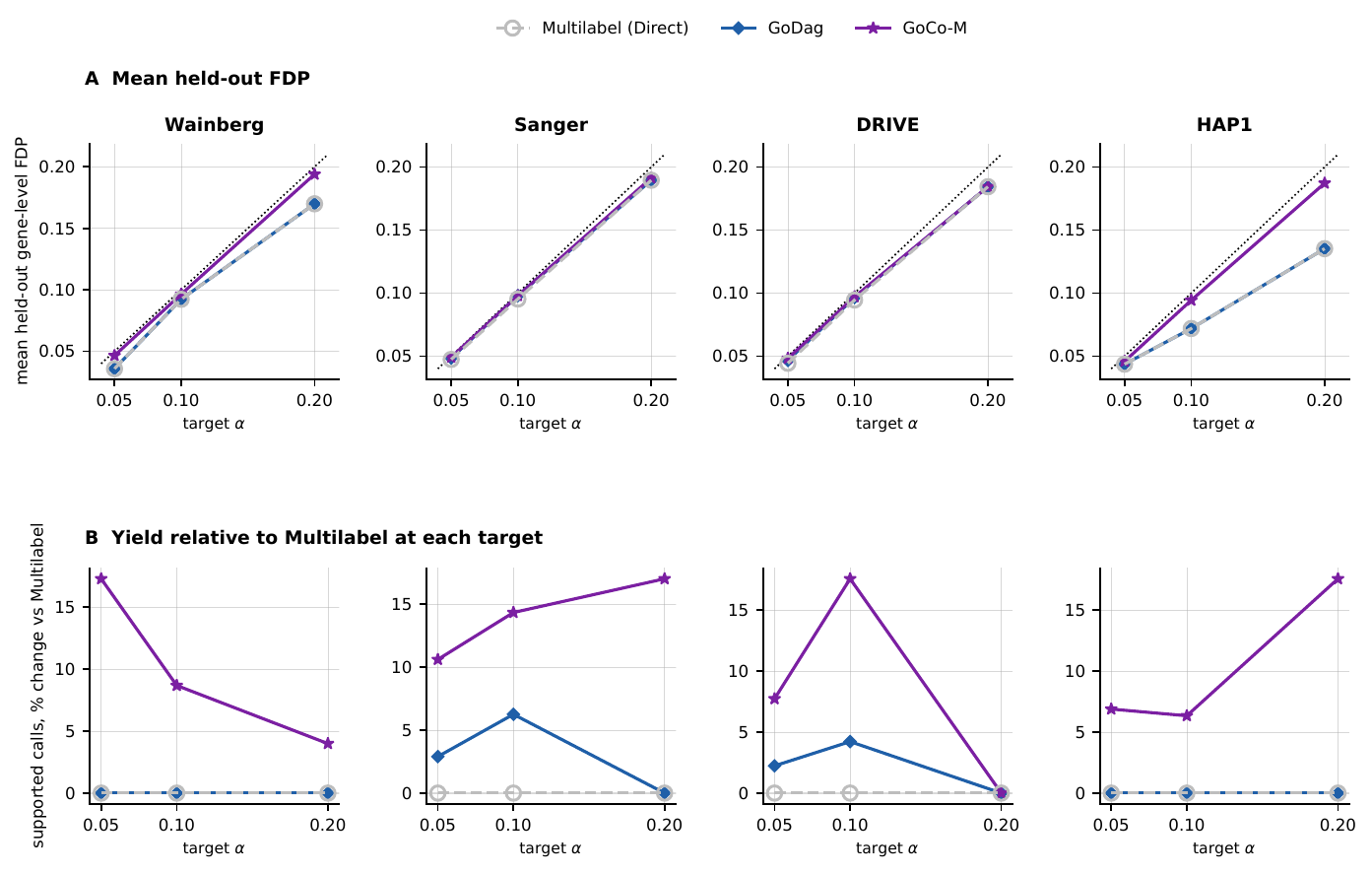}
\caption{Error-target sensitivity at $\delta=0.50$. Top: mean held-out gene-level FDP; the dotted line is the target. Bottom: supported-call gain relative to Multilabel. Method colours match Figures~\ref{fig:ties} and~\ref{fig:primary} and Supplementary Figure~\ref{supp-fig:supp-primary}. Open grey circles keep Multilabel visible where it overlaps GoDag. Full values are in Supplementary Table~\ref{supp-tab:sweep50}.}
\par\small\textit{Alt text:} Four-column sensitivity plots for Wainberg, Sanger, DRIVE and HAP1. Top panels show mean held-out gene-level TruePath FDP for Multilabel, GoDag and GoCo-M at $\alpha=0.05$, 0.10 and 0.20 relative to the target. Bottom panels show supported-call gain relative to Multilabel across the same targets.
\label{fig:sweep}
\end{figure*}

\subsection{Yield and biological interpretation}
Only about 8--12\% of held-out genes received calls in the module analyses. The gene-averaged target includes abstentions and controls neither the FDP among annotated genes alone nor the pooled fraction of unsupported calls. Additional yield therefore comes with additional unsupported releases. At $\delta=0.10$, GoCo-M released 1,896 versus GoDag's 1,614 calls in Wainberg and 1,252 versus 847 in HAP1; the supplement reports their supported fractions (Supplementary Tables~\ref{supp-tab:riskdecomp} and~\ref{supp-tab:estimands}).

In Wainberg split~0 at $\delta=0.50$, GoCo-M released 351 held-out calls that GoDag did not, over 158 genes; 37 (10.5\%) were reference-supported. All four interleukin-27/35 signalling calls added for IL6ST and JAK1 in an interleukin-6 signal-transduction module were supported. CDKN1A belonged to a module mixing a regulator with its targets; none of the seven transferred TP53 terms was supported. These examples illustrate both useful evidence sharing and its limits. N4BP1's added Toll-like receptor signalling annotation was absent from the frozen reference despite published biological support \cite{shi2021n4bp1}. The full case study is retained in Supplementary Note~\ref{supp-note:case-study} and Table~\ref{supp-tab:casestudy}.

\subsection{Gene-specific neighbourhood evidence: GoCo-N}
\label{sec:secondfamily}
Developed on FunMap and validated without retuning on STRING's experiments-only network, GoCo-N increased supported calls by 33.8\% and 36.4\% relative to Multilabel on the same refined score grid. Mean held-out FDP was 0.0921 and 0.0944 (Table~\ref{tab:panelnbr50}). Compared with refined-grid GoDag, it added 342 and 246 supported calls. The stricter $\delta=0.10$ gains over Multilabel were 34.0\% and 35.4\%. Supplementary Table~\ref{supp-tab:frontier} compares yield with the best global threshold at matched realised risk.

\begin{table*}[htbp]
\centering
\caption{Neighbourhood-structure comparison at $\alpha=0.10$, $\delta=0.50$, over 100 splits. FunMap is the development data set; the STRING v12.0 experiments-only network provides the external validation.}
\label{tab:panelnbr50}
\scriptsize
\setlength{\tabcolsep}{5pt}
\begin{tabular}{l rrr rrr}
\toprule
Method & \multicolumn{3}{c}{FunMap ($n_{\mathcal E}$=3,095)} & \multicolumn{3}{c}{STRING ($n_{\mathcal E}$=2,417)} \\
\cmidrule(lr){2-4}\cmidrule(lr){5-7}
 & Gene FDP & Correct & Genes with & Gene FDP & Correct & Genes with \\
 & & calls & a call & & calls & a call \\
\midrule
\multicolumn{7}{l}{\emph{Direct loss, one global threshold on the refined grid (ontology-blind)}} \\
Multilabel & 0.0888 & 1245 & 734 & 0.0899 & 885 & 432 \\
Hoeffding/LTT & 0.0820 & 1182 & 699 & 0.0820 & 826 & 405 \\
Hoeffding--Bentkus & 0.0855 & 1213 & 717 & 0.0855 & 857 & 419 \\
IID-normal & 0.0888 & 1245 & 734 & 0.0899 & 885 & 432 \\
Empirical Bernstein & 0.0835 & 1193 & 706 & 0.0831 & 837 & 410 \\
Platt calibration & 0.0647 & 1061 & 615 & 0.1003 & 989 & 487 \\
Isotonic calibration & 0.0824 & 1184 & 701 & 0.0874 & 869 & 424 \\
\addlinespace[2pt]\multicolumn{7}{l}{\emph{TruePath loss, one global threshold (GO hierarchy)}} \\
GoDag & 0.0901 & 1261 & 742 & 0.0828 & 828 & 406 \\
GoDag, refined grid & 0.0986 & 1324 & 783 & 0.0978 & 962 & 472 \\
\addlinespace[2pt]\multicolumn{7}{l}{\emph{TruePath loss, structure-guided ordering of calls (GO hierarchy and neighbourhood evidence)}} \\
\textbf{GoCo-N} & \textbf{0.0921} & \textbf{1666} & \textbf{1110} & \textbf{0.0944} & \textbf{1208} & \textbf{837} \\
\bottomrule
\end{tabular}
\par\vspace{3pt}\begin{minipage}{\linewidth}\footnotesize Gene FDP: mean held-out TruePath false discovery proportion. Correct calls: reference-supported gene--GO pairs. Genes with a call includes unsupported releases. Direct-loss rules use the refined grid; GoDag is shown on both coarse and refined grids. All methods certify on the same 60\% fold. Platt/isotonic rules have no risk guarantee. McDiarmid, Janson and network-HAC variants were not evaluated here because their graph/variance specification was not prespecified.\end{minipage}
\end{table*}

The structural choice is useful but not uniformly best. At $\alpha=0.10$, $\delta=0.10$, GoCo-N lost 142 supported terms relative to GoCo-M on Wainberg, while it performed better on Sanger and HAP1. A ranking-fold-only selection rule agreed with the held-out winner in 69--100\% of splits across predictors (Supplementary Note~\ref{supp-note:adaptive}). This remains a refinement rather than the primary prespecified analysis.

\subsection{Comparison with accuracy-based release rules}
A validation $F_{\max}$ threshold and one top-scoring term per gene produced gene-level FDPs of 0.18--0.99 and 0.17--0.98, respectively. Both exceeded 0.10 in every split of every predictor (Supplementary Table~\ref{supp-tab:current}). They can recover more supported functions by releasing more calls, but provide no way to certify the accompanying gene-level error rate.

\section{Discussion}
\label{sec:discussion}
GoCo addresses a practical problem: deciding which annotations to release from an existing predictor at a stated error target. GoDag supplies a hierarchy-aware global-threshold baseline; GoCo uses evidence behind calls to improve supported yield. The protocol separates ordering from certification, allowing users to compare release policies without retraining their predictor.

The target must match the user's purpose. GoCo's gene-averaged FDP follows risk-control formulations used in conformal calibration and enzyme annotation \cite{angelopoulos2024crc,ding2024cpec}. It is not pooled call-level FDR \cite{jin2023selection,bates2023outliers} or a guarantee for each gene or subgroup \cite{vovk2013conditional,gibbs2025conditional}. The pooled-call comparator in Supplementary Table~\ref{supp-tab:estimands} releases almost nothing at the same numerical target, but addresses a different estimand; pair-level exchangeability is not established by our gene-level split. Using side information to improve yield also connects GoCo to covariate-adaptive testing \cite{ignatiadis2016ihw,lei2018adapt} and smoothed conformal prediction \cite{vovk2005book}, with different targets and constructions.

GO is incomplete and open-world: missing annotations are not evidence of absent function \cite{warwick2020negative,jiang2014incomplete,zhapa2024pugo}. Functional relevance also varies by tissue, cell state, development, species and experimental condition. Supported yield and reference-based FDP should therefore be interpreted together, not as complete measures of biological correctness. Evidence sources must exclude the target gene's own annotations to avoid circular support. Supplementary Note~\ref{supp-note:conventions} records the remaining label information already present in frozen enrichment scores.

\subsection*{Limitations and use beyond these predictors}
The guarantee is asymptotic, requires nonvanishing variance at the first invalid candidate and applies to this fixed panel and reference. The primary $\delta=0.50$ setting is a benchmark convention, not a strong deployment confidence level. Sanger, DRIVE and HAP1 contribute independent studies but share Wainberg's gene universe and reference; they are not independent predictor families. The better path cannot yet be predicted reliably from structure alone.

GoDag needs only scores. GoCo additionally needs interpretable evidence behind individual calls. Protein complexes and co-regulation clusters are possible shared sources; homology hits \cite{kulmanov2020deepgoplus,zhang2024search}, embedding neighbours \cite{littmann2021embedding,boger2025,ding2024cpec} and integrated networks \cite{mostafavi2008} are possible gene-specific sources. These extensions require empirical validation, not simply adoption of the structural default.

\section{Conclusions}
GoCo combines an unchanged predictor, GO-aware error assessment and structure-guided release policies. Across six applications, the structure-specific GoCo implementations increased supported yield while keeping mean held-out gene-level FDP at or below target; the four external applications were evaluated without retuning. The practical contribution is a reproducible workflow for comparing annotation release rules, interpreting their error statements and retaining more reference-supported functions.

\section*{Key Points}
\begin{itemize}
\item Accuracy-optimized and fixed-size release rules provide no gene-level error certificate; FDP ranged from 0.17 to 0.99 across these applications.
\item GoCo combines GO-aware calibration with evidence already exposed by an unchanged predictor.
\item An asymptotic certificate controls pool-average gene-level FDP for paths built without certification or evaluation labels, not pooled-call FDR or each gene's error.
\item GoCo-M increased supported yield by 6.3--17.6\% over Multilabel across four module datasets at the primary operating point.
\item GoCo-N increased supported yield by 33.8\% and 36.4\% in FunMap and STRING against Multilabel on the same refined score grid.
\end{itemize}

\section*{Supplementary data}
Supplementary material accompanies this article online.

\section*{Acknowledgements}
We thank the authors of the co-essential module atlas, Sanger Project Score, Project DRIVE, the HAP1 genetic-interaction map, FunMap and STRING for making their data publicly available.

\textit{Declaration of the use of artificial intelligence.} The authors used Claude Opus 5 (Anthropic) and ChatGPT GPT-5.6 Sol (OpenAI) as aids for language editing of author-written text, consistency checking, and assistance with code writing and documentation. All scientific decisions, interpretations and conclusions were made by the authors. All model-assisted suggestions and code were reviewed and verified by the authors, who take full responsibility for the manuscript.

\section*{Funding}
This work was supported by the Plant Sciences Institute, Iowa State University; and the Laurence H. Baker Center for Bioinformatics and Biological Statistics, Iowa State University.

\section*{Author contributions}
Chong Wang (Conceptualization, Data curation, Formal analysis, Investigation, Methodology, Software, Validation, Visualization, Writing---original draft, Writing---review \& editing), Yongzhao Shao (Methodology, Writing---review \& editing), Peng Liu (Methodology, Funding acquisition, Writing---review \& editing). All authors read and approved the final manuscript.

\section*{Conflict of interest}
None declared.

\section*{Data availability}
No new experimental data were generated in this study. The reference implementation of both families (\texttt{goco\_rerun.py} and \texttt{goco\_learned.py} for the module family; \texttt{goco\_second\_family.py} and \texttt{run\_goco\_second\_\allowbreak family.py} for the neighbourhood family), the measurability and verification tests, the table and figure scripts, the explicit fold assignments of all 100 splits, the tie-multiplicity counts, the principal split-level analysis outputs and the aggregate summaries used for the secondary comparator panel are openly available in the GitHub repository \url{https://github.com/ChongWangStat/ConformalGoCo}. The released software and accompanying reproducibility materials are archived on Zenodo under the stable concept DOI 10.5281/zenodo.22677009 \cite{wang2026gocozenodo}.

\section*{Author descriptions}
Chong Wang is Professor in the Department of Statistics and in the Department of Veterinary Diagnostic and Production Animal Medicine at Iowa State University. His research covers biostatistics, epidemiology, diagnostic test evaluation, network meta-analysis and bioinformatics.

Yongzhao Shao is Professor in the Division of Biostatistics, Department of Population Health, NYU Grossman School of Medicine. His research spans biostatistics for precision medicine, cancer and environmental epidemiology, survival analysis and gene--environment interaction.

Peng Liu is Professor of Statistics at Iowa State University and Director of the Laurence H. Baker Center for Bioinformatics and Biological Statistics. Her research covers statistical genomics, bioinformatics, Bayesian methods and high-dimensional inference.

\clearpage
\setcounter{equation}{0}
\setcounter{table}{0}
\setcounter{figure}{0}
\setcounter{proposition}{0}
\setcounter{lemma}{0}
\setcounter{corollary}{0}
\setcounter{remark}{0}
\setcounter{definition}{0}
\renewcommand{\theequation}{S\arabic{equation}}
\renewcommand{\theproposition}{S\arabic{proposition}}
\renewcommand{\thelemma}{S\arabic{lemma}}
\renewcommand{\thecorollary}{S\arabic{corollary}}
\renewcommand{\theremark}{S\arabic{remark}}
\renewcommand{\thedefinition}{S\arabic{definition}}
\renewcommand{\thetable}{S\arabic{table}}
\renewcommand{\thefigure}{S\arabic{figure}}
\newcounter{suppnote}
\newcounter{suppsubnote}[suppnote]
\renewcommand{\thesuppnote}{S\arabic{suppnote}}
\renewcommand{\thesuppsubnote}{\thesuppnote.\arabic{suppsubnote}}
\newcommand{\suppnote}[2]{%
  \refstepcounter{suppnote}%
  \section*{Supplementary Note \thesuppnote: #1}%
  \label{supp-#2}%
}
\newcommand{\suppsubnote}[2]{%
  \refstepcounter{suppsubnote}%
  \subsection*{\thesuppsubnote\ #1}%
  \label{supp-#2}%
}
\fullwidthfrontmatter{
\begin{center}
{\LARGE\bfseries Supplementary material for: GoCo: structure-guided conformal calibration for error-controlled Gene Ontology annotation release\par}
\vspace{0.8em}
{\large Chong Wang$^{1,2}$, Yongzhao Shao$^3$, and Peng Liu$^{1,4,*}$\par}
\vspace{0.6em}
{\small
$^1$Department of Statistics, Iowa State University, Ames, Iowa, USA\\
$^2$Department of Veterinary Diagnostic and Production Animal Medicine, College of Veterinary Medicine, Iowa State University, Ames, Iowa, USA\\
$^3$Division of Biostatistics, Department of Population Health, NYU Grossman School of Medicine, New York, New York, USA\\
$^4$Laurence H. Baker Center for Bioinformatics and Biological Statistics, Iowa State University, Ames, Iowa, USA\\
$^*$Corresponding author: Peng Liu, Department of Statistics, Iowa State University, 2117 Snedecor Hall, 2438 Osborn Drive, Ames, IA 50011, USA. Tel: +1 515 294 7806. E-mail: \texttt{pliu@iastate.edu}\par}
\end{center}
\vspace{0.4em}
\begin{abstract}
This supplement contains the proofs supporting the main-text risk guarantee and a definition of the GoDag baseline (Supplementary Note~\ref{supp-note:proofs}), the construction and implementation details of the method---split mechanics, interpretation of the frozen GO reference, the per-call ranking model, the arithmetic behind Remark~\ref{supp-rem:delta}, the uniform grid path and a worked micro-example of one block (Supplementary Note~\ref{supp-note:methods})---and the second structural instantiation with its validation without retuning and ranking-fold selection rule (Supplementary Note~\ref{supp-note:neighbourhood}). Tables report split-level operating characteristics (Table~\ref{supp-tab:detail}); the error-target sweep at $\delta=0.10$ (Table~\ref{supp-tab:sweep10}); certification, pool and held-out risks with yield diagnostics (Table~\ref{supp-tab:riskdecomp}); block diagnostics (Table~\ref{supp-tab:blockdiag}); illustrative released genes (Table~\ref{supp-tab:casestudy}); the gene-level versus pooled-call estimand comparison (Table~\ref{supp-tab:estimands}); source-object structures (Table~\ref{supp-tab:structures}); gains over the matched-risk frontier in the neighbourhood-structure family (Table~\ref{supp-tab:frontier}); the ranking-fold selection rule (Table~\ref{supp-tab:selection}); commonly used release rules on all six predictors (Table~\ref{supp-tab:current}); prespecified constants (Table~\ref{supp-tab:constants}); the method inventory (Table~\ref{supp-tab:methods}); and the $\delta=0.50$ error-target sweep (Table~\ref{supp-tab:sweep50}). Figure~\ref{supp-fig:supp-primary} gives the primary comparison at $\delta=0.10$. The full construction and additional formal statements are given in Supplementary Note~\ref{supp-note:extended}, with AI-use disclosure in Supplementary Note~\ref{supp-note:ai}. Main-text cross-references are explicitly identified; equations and formal results labelled S refer to this supplement.
\end{abstract}
\noindent\textbf{Keywords:} Gene Ontology; conformal risk control; Learn-Then-Test
\vspace{1.0em}
}

\setcounter{lemma}{0}\renewcommand{\thelemma}{S\arabic{lemma}}
\setcounter{remark}{0}\renewcommand{\theremark}{S\arabic{remark}}

\suppnote{Proofs}{note:proofs}

\suppsubnote{Setting and notation}{note:proofs-setting}
The gene panel $\calU=\{1,\ldots,n\}$, the score matrix $(s_{ig})$, the GO snapshot and the reference annotations are fixed. For every release policy $\pi$ and gene $i$ the loss $L_i(\pi)=L_i(S_i(\pi))\in[0,1]$ of Equation~\eqref{eq:fdp} is therefore a fixed number. The only random object is a uniformly distributed permutation $\varsigma$ of $\calU$. With $n_{\calT}=\operatorname{round}(0.10n)$ and $n_{\calT}+m=\operatorname{round}(0.70n)$, the ranking fold is $\calT=\varsigma(\{1,\ldots,n_{\calT}\})$, the certification fold is $\calC=\varsigma(\{n_{\calT}+1,\ldots,n_{\calT}+m\})$ and the evaluation fold is $\calE=\varsigma(\{n_{\calT}+m+1,\ldots,n\})$. The pool is $\calP=\calU\setminus\calT$ with $N=|\calP|=m+n_{\calE}$ and $f=m/N$. For a policy $\pi$ write $R(\pi)=N^{-1}\sum_{i\in\calP}L_i(\pi)$, $\Rhat_{\calC}(\pi)=m^{-1}\sum_{i\in\calC}L_i(\pi)$, $\Rhat_{\calE}(\pi)=n_{\calE}^{-1}\sum_{i\in\calE}L_i(\pi)$, $\sigma^2(\pi)=N^{-1}\sum_{i\in\calP}\{L_i(\pi)-R(\pi)\}^2$ and $S^2(\pi)=N\sigma^2(\pi)/(N-1)$. For a candidate $\pi_k$ abbreviate $\Rhat_k=\Rhat_{\calC}(\pi_k)$ and $\shat_k^2=m^{-1}\sum_{i\in\calC}\{L_i(\pi_k)-\Rhat_k\}^2$ (divisor $m$, matching \texttt{ddof=0} in the implementation), and let $\tilde\sigma_k=\max(\shat_k,10^{-6})$ and $z_\delta=\Phi^{-1}(\delta)$.

\suppsubnote{The split as a sampling design}{note:proofs-design}

\begin{lemma}[Conditional sampling design]\label{supp-lem:design}
Conditional on $\calT$, the certification fold $\calC$ is uniformly distributed over the $m$-element subsets of $\calP$, i.e.\ a simple random sample without replacement (SRSWOR) of size $m$ from $\calP$, and $\calE=\calP\setminus\calC$ is a SRSWOR of size $n_{\calE}$ from $\calP$. Unconditionally, $\calT\cup\calC$ is a SRSWOR of size $n_{\calT}+m$ from $\calU$.
\end{lemma}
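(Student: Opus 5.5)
The plan is to derive everything from the single assumption that the permutation $\varsigma$ is uniform on the $n!$ permutations of $\calU$. Then, by the construction in Supplementary Note~\ref{supp-note:proofs-setting}, the folds are the images under $\varsigma$ of fixed consecutive index blocks $B_1=\{1,\ldots,n_{\calT}\}$, $B_2=\{n_{\calT}+1,\ldots,n_{\calT}+m\}$ and $B_3=\{n_{\calT}+m+1,\ldots,n\}$. I will count permutations directly. The aim is to show that, for any fixed pair $(T,C)$ with $|T|=n_{\calT}$, $|C|=m$, $T\cap C=\emptyset$, the probability $\Pr\{\calT=T,\calC=C\}$ does not depend on $(T,C)$.

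\emph{Counting.} The event $\{\calT=T,\calC=C\}$ requires $\varsigma$ to map $B_1$ bijectively onto $T$, $B_2$ onto $C$, and $B_3$ onto $\calU\setminus(T\cup C)$. The number of such permutations is $n_{\calT}!\,m!\,n_{\calE}!$, so
\[
\Pr\{\calT=T,\calC=C\}=\frac{n_{\calT}!\,m!\,n_{\calE}!}{n!},
\]
which does not involve $(T,C)$. In the same way, $\Pr\{\calT=T\}=n_{\calT}!\,(n-n_{\calT})!/n!$.

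\emph{Conditioning on $\calT$.} Dividing the two probabilities gives
\[
\Pr\{\calC=C\mid\calT=T\}=\frac{m!\,n_{\calE}!}{N!}=\binom{N}{m}^{-1}
\]
for every $m$-subset $C$ of $\calP=\calU\setminus T$, using $N=n-n_{\calT}=m+n_{\calE}$. This is exactly SRSWOR of size $m$ from $\calP$. Because $\calE=\calP\setminus\calC$ is determined by $\calC$, and complementation is a bijection between $m$-subsets and $n_{\calE}$-subsets of $\calP$, $\calE$ is uniform over the $n_{\calE}$-subsets of $\calP$, i.e.\ SRSWOR of size $n_{\calE}$.

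\emph{Unconditional claim.} The event $\{\calT\cup\calC=D\}$ holds exactly when $\varsigma$ maps $B_1\cup B_2$ onto $D$. The number of such permutations is $(n_{\calT}+m)!\,n_{\calE}!$, which is again uniform in $D$, so $\calT\cup\calC$ is SRSWOR of size $n_{\calT}+m$ from $\calU$.

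The only step needing care is the implicit assumption that the fold sizes $n_{\calT}$, $m$ and $n_{\calE}$ are deterministic, i.e.\ fixed by rounding $n$ and not depending on $\varsigma$. I will state this explicitly, since it is what makes the counts above valid. No exchangeability or independence of the losses is used, because the losses are fixed constants and all randomness comes from $\varsigma$.
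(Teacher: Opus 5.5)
Your proposal is correct and follows essentially the same route as the paper: count the $n_{\calT}!\,m!\,n_{\calE}!$ permutations that realise a given pair $(\calT,\calC)=(T,C)$, note that this count does not depend on $C$, pass to $\calE$ by complementation, and use the count $(n_{\calT}+m)!\,n_{\calE}!$ for the unconditional claim. Your explicit division by $\Pr\{\calT=T\}$ and your remark that the fold sizes are fixed by rounding make the conditioning step slightly more explicit than the paper does, but the argument is otherwise the same.
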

\begin{proof}
Fix $T_0$ with $|T_0|=n_{\calT}$ and an $m$-subset $C_0\subseteq\calU\setminus T_0$. The number of permutations with $\varsigma(\{1,\ldots,n_{\calT}\})=T_0$ and $\varsigma(\{n_{\calT}+1,\ldots,n_{\calT}+m\})=C_0$ is $n_{\calT}!\,m!\,n_{\calE}!$, independent of $C_0$; hence all $\binom{N}{m}$ subsets are equally likely given $\calT=T_0$. The complement of a uniform $m$-subset of $\calP$ is a uniform $n_{\calE}$-subset. The last claim follows in the same way from the count $(n_{\calT}+m)!\,n_{\calE}!$.
\end{proof}

\begin{lemma}[Moments under SRSWOR]\label{supp-lem:moments}
Let $b_i\in[0,1]$, $i\in\calP$, be fixed numbers with mean $\bar b_{\calP}$ and $S_b^2=(N-1)^{-1}\sum_{i\in\calP}(b_i-\bar b_{\calP})^2$. If $\calC$ is a SRSWOR of size $m$ from $\calP$ and $\bar b_{\calC}=m^{-1}\sum_{i\in\calC}b_i$, then $\mathbb E\,\bar b_{\calC}=\bar b_{\calP}$, $\operatorname{Var}\bar b_{\calC}=(1-f)S_b^2/m\le1/(2m)$, and with $\bar b_{\calE}$ the mean over $\calE$, $N\bar b_{\calP}=m\bar b_{\calC}+n_{\calE}\bar b_{\calE}$.
\end{lemma}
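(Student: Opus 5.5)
The statement to prove is Lemma~\ref{supp-lem:moments}. It is the classical SRSWOR moment computation, and I would prove it with indicator variables. For $i\in\calP$ let $Z_i=\mathbf 1\{i\in\calC\}$, so that $\bar b_{\calC}=m^{-1}\sum_{i\in\calP}Z_ib_i$. Lemma~\ref{supp-lem:design} makes $\calC$ a uniform $m$-subset of $\calP$. A direct count then gives
\[
\Pr(Z_i=1)=\frac{m}{N},\qquad \Pr(Z_i=Z_j=1)=\frac{m(m-1)}{N(N-1)}\quad (i\ne j).
\]
Linearity of expectation immediately yields $\mathbb E\,\bar b_{\calC}=N^{-1}\sum_i b_i=\bar b_{\calP}$.

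For the variance, I would first centre by writing $c_i=b_i-\bar b_{\calP}$, which satisfies $\sum_{i\in\calP} c_i=0$. This gives $\bar b_{\calC}-\bar b_{\calP}=m^{-1}\sum_i Z_ic_i$. From the indicator probabilities,
\[
\operatorname{Var}Z_i=\frac{m}{N}\Bigl(1-\frac{m}{N}\Bigr),\qquad \operatorname{Cov}(Z_i,Z_j)=\frac{m(m-1)}{N(N-1)}-\frac{m^2}{N^2}=-\frac{m(N-m)}{N^2(N-1)} .
\]
Expanding the variance of the sum, I would use $\sum_{i\ne j}c_ic_j=(\sum_i c_i)^2-\sum_i c_i^2=-\sum_i c_i^2$. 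Collecting terms gives
\[
\operatorname{Var}\Bigl(\sum_i Z_ic_i\Bigr)=\frac{m(N-m)}{N(N-1)}\sum_i c_i^2 .
\]
Dividing by $m^2$ and recalling $S_b^2=(N-1)^{-1}\sum_i c_i^2$ and $f=m/N$ produces $(1-f)S_b^2/m$. Keeping this algebra straight is the only real work in the proof.

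For the bound, I would argue as follows. Since $b_i\in[0,1]$, we have $\sum_i c_i^2=N\sigma_b^2\le N/4$, where $\sigma_b^2$ is the population variance. This is Popoviciu's inequality, because the variance of a $[0,1]$-valued quantity satisfies $\bar b(1-\bar b)\le1/4$. Hence $S_b^2\le N/(4(N-1))$, and
\[
(1-f)\frac{S_b^2}{m}\le\frac{N-m}{N}\cdot\frac{N}{4m(N-1)}=\frac{N-m}{4m(N-1)}\le\frac{1}{4m}\le\frac{1}{2m},
\]
where the second-to-last inequality uses $N-m\le N-1$ whenever $m\ge1$. (A cruder bound, such as $S_b^2\le N/(N-1)\cdot 1/4\le1/2$ for $N\ge2$, would also suffice.)

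The final identity $N\bar b_{\calP}=m\bar b_{\calC}+n_{\calE}\bar b_{\calE}$ holds deterministically, because $\calP$ is the disjoint union of $\calC$ and $\calE$ and both sides equal $\sum_{i\in\calP}b_i$.
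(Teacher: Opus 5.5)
Your proposal is correct and takes essentially the same route as the paper. Both use indicator variables with the SRSWOR inclusion probabilities and the negative pairwise covariance $-f(1-f)/(N-1)$, and both read off the final identity from the partition of the pool; centring first and the sharper intermediate bound $(1-f)S_b^2/m\le 1/(4m)$ (valid for $m\ge1$, $N\ge2$) are minor refinements of the paper's cruder $S_b^2\le 1/2$ step.
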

\begin{proof}
Let $I_i=\mathbf 1\{i\in\calC\}$; then $\Pr(I_i=1)=f$ and $\Pr(I_i=I_j=1)=m(m-1)/\{N(N-1)\}$ for $i\ne j$, so $\operatorname{Var}I_i=f(1-f)$ and $\operatorname{Cov}(I_i,I_j)=-f(1-f)/(N-1)$. Unbiasedness follows from $\mathbb E\sum_iI_ib_i=f\sum_ib_i$. For the variance, $\operatorname{Var}(\sum_iI_ib_i)=f(1-f)\{\sum_ib_i^2-(N-1)^{-1}\sum_{i\ne j}b_ib_j\}=f(1-f)N S_b^2$, and dividing by $m^2=f^2N^2$ gives $(1-f)S_b^2/m$. Since $b_i\in[0,1]$, $\sigma_b^2\le1/4$ and $S_b^2\le N/\{4(N-1)\}\le1/2$ for $N\ge2$; the variance formula is Theorem 2.2 of Cochran (1977). The identity is $\sum_{\calP}b_i=\sum_{\calC}b_i+\sum_{\calE}b_i$.
\end{proof}

\suppsubnote{Finite-population central limit theorem}{note:proofs-clt}

\begin{lemma}[H\'ajek 1960; Erd\H{o}s and R\'enyi 1959]\label{supp-lem:fpclt}
For each $t$ let $b_{t,i}\in[0,1]$, $i\in\calP_t$, $|\calP_t|=N_t$, be fixed numbers with mean $\bar b_t$, variance $\sigma_t^2=N_t^{-1}\sum_i(b_{t,i}-\bar b_t)^2$ and $S_t^2=N_t\sigma_t^2/(N_t-1)$, and let $\calC_t$ be a SRSWOR of size $m_t$ from $\calP_t$ with $f_t=m_t/N_t$. If $m_t(1-f_t)\sigma_t^2\to\infty$, then
\[
W_t=\frac{\bar b_{\calC_t}-\bar b_t}{\sqrt{(1-f_t)S_t^2/m_t}}\ \xrightarrow{\ d\ }\ \mathcal N(0,1).
\]
\end{lemma}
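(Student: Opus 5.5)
The plan is to deduce the lemma directly from Hájek's (1960) Lindeberg-type theorem for simple random sampling without replacement, rather than re-deriving a central limit theorem. Hájek's theorem, in the form I would quote, says the following. Take a triangular array of finite populations and sample sizes with $m_t\to\infty$ and $N_t-m_t\to\infty$. Suppose that for every $\varepsilon>0$ the Lindeberg condition
\[
\frac{1}{N_t\sigma_t^2}\sum_{i\in A_t(\varepsilon)}(b_{t,i}-\bar b_t)^2\to0,\qquad A_t(\varepsilon)=\Bigl\{i:\ |b_{t,i}-\bar b_t|>\varepsilon\sqrt{m_t(1-f_t)\sigma_t^2}\Bigr\},
\]
holds. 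Then the sample mean standardized by its exact variance is asymptotically standard normal. By Lemma~\ref{supp-lem:moments}, that exact variance is $(1-f_t)S_t^2/m_t$, so the standardized mean is exactly the statistic $W_t$ in the statement.

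The key steps, in order, are these.

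First, I derive the growth conditions from the single hypothesis $m_t(1-f_t)\sigma_t^2\to\infty$. Since $b_{t,i}\in[0,1]$, we have $\sigma_t^2\le 1/4$, so $m_t(1-f_t)\to\infty$. Because $1-f_t\le1$, this gives $m_t\to\infty$. Because $m_t(1-f_t)=m_t(N_t-m_t)/N_t\le N_t-m_t$, it also gives $N_t-m_t\to\infty$, and in particular $N_t\to\infty$.

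Second, I verify the Lindeberg condition. Boundedness gives $|b_{t,i}-\bar b_t|\le1$ for all $i$. The threshold $\varepsilon\sqrt{m_t(1-f_t)\sigma_t^2}$ diverges, so $A_t(\varepsilon)$ is empty for all large $t$ and the Lindeberg sum is identically zero from some $t$ onward. This is the whole content of the bounded case.

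Third, I reconcile normalizations. Hájek's theorem is often stated with $\sigma_t^2$ in place of $S_t^2$, or with the finite-population factor written as $(N_t-m_t)/(N_t-1)$. Since $S_t^2/\sigma_t^2=N_t/(N_t-1)\to1$, Slutsky's lemma transfers the limit between these forms. Hence $W_t\xrightarrow{d}\mathcal N(0,1)$.

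I expect the main obstacle to be bookkeeping rather than analysis: stating Hájek's condition with precisely the normalization used in his paper and checking that our hypothesis $m_t(1-f_t)\sigma_t^2\to\infty$ is at least as strong as what he requires. If a fully self-contained argument were preferred, I would fall back on Hájek's coupling route. That route represents SRSWOR as a conditioned version of Poisson (Bernoulli($f_t$)) sampling, applies the ordinary Lindeberg–Feller theorem to the independent-sampling sum, which is again trivial for bounded summands with diverging variance, and then shows that the $L^2$ distance between the two standardized sums vanishes. I would only take that route if the citation's form proved awkward to match.
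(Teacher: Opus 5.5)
Your proposal is correct and follows essentially the same route as the paper. You quote H\'ajek's Lindeberg-type condition, note that $|b_{t,i}-\bar b_t|\le1$ while the threshold diverges, so the exceptional index set is eventually empty, and obtain $m_t\to\infty$ and $N_t-m_t\to\infty$ from $\sigma_t^2\le1/4$ and $m_t(1-f_t)\le\min(m_t,N_t-m_t)$; the only cosmetic difference is that your threshold uses $\sigma_t$ where the paper uses $s_t^2=m_t(1-f_t)S_t^2$, which is immaterial since $S_t^2/\sigma_t^2=N_t/(N_t-1)\to1$.
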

\begin{proof}
Let $s_t^2=\operatorname{Var}\sum_{i\in\calC_t}b_{t,i}=m_t(1-f_t)S_t^2$ (Lemma~\ref{supp-lem:moments}). H\'ajek (1960) proved that $W_t$ is asymptotically standard normal if and only if, for every $\tau>0$, $\{\sum_i(b_{t,i}-\bar b_t)^2\}^{-1}\sum_{i:|b_{t,i}-\bar b_t|>\tau s_t}(b_{t,i}-\bar b_t)^2\to0$; Erd\H{o}s and R\'enyi (1959) gave a sufficient condition of the same type. Since $|b_{t,i}-\bar b_t|\le1$ and $s_t^2\ge m_t(1-f_t)\sigma_t^2\to\infty$, the index set $\{i:|b_{t,i}-\bar b_t|>\tau s_t\}$ is empty once $\tau s_t>1$, so the condition holds. Note that $m_t(1-f_t)\sigma_t^2\to\infty$ forces $m_t\to\infty$ and $N_t-m_t\to\infty$, because $\sigma_t^2\le1/4$ and $m_t(1-f_t)\le\min(m_t,N_t-m_t)$.
\end{proof}

\suppsubnote{Structure of the candidate path}{note:proofs-path}

\begin{lemma}[Nestedness and $\calT$-measurability]\label{supp-lem:nested}
Let $B_j$, $\calT_j$, $u_i$, $h_i$, $k(q)$, $Q_{j,q}$ and $\pi_{j,q}$ be as in Supplementary Note~\ref{supp-note:construction}. (i) For every block $j$, every $q\le q'$ in $\mathcal Q\cup\{1\}$ and every gene $i$, $S_i(\lambda_j)\subseteq S_i(\pi_{j,q})\subseteq S_i(\pi_{j,q'})\subseteq S_i(\lambda_{j+1})$; consequently the GoCo path satisfies $S_i(\pi_k)\subseteq S_i(\pi_{k+1})$ for all $i,k$. (ii) The sequence $\pi_1,\ldots,\pi_K$, including $K$, is a function of $\calT$ and frozen objects only, and $K\le J+(J-1)|\mathcal Q|$.
\end{lemma}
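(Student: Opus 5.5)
The plan is to read off both parts directly from the construction in Supplementary Note~\ref{supp-note:construction}, treating the path as a concatenation of the grid thresholds $\lambda_1>\lambda_2>\cdots>\lambda_J$ with, inside each grid step $(\lambda_{j+1},\lambda_j]$, the partial-admission policies $\pi_{j,q}$, $q\in\mathcal Q$. Write $D_{ij}=S_i(\lambda_{j+1})\setminus S_i(\lambda_j)$ for the calls of gene $i$ that enter in step $j$, and $B_j$ for the affected genes, namely those with $D_{ij}\ne\emptyset$. I will use the defining form
\[
S_i(\pi_{j,q})=S_i(\lambda_j)\cup\bigl(D_{ij}\text{ if } i\in Q_{j,q},\ \emptyset\text{ otherwise}\bigr).
\]
Here $Q_{j,q}\subseteq B_j$ is the set of affected genes whose position in the order induced by $(u_i,h_i)$ is at most $k(q)$. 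I also take the convention that $q=1$ gives $Q_{j,1}=B_j$.

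For (i), the first inclusion $S_i(\lambda_j)\subseteq S_i(\pi_{j,q})$ is immediate from the displayed form. The last inclusion $S_i(\pi_{j,q'})\subseteq S_i(\lambda_{j+1})$ holds because $S_i(\lambda_j)\cup D_{ij}=S_i(\lambda_{j+1})$, using that thresholds are nested, $S_i(\lambda_j)\subseteq S_i(\lambda_{j+1})$ for $\lambda_{j+1}<\lambda_j$. The middle inclusion reduces to $Q_{j,q}\subseteq Q_{j,q'}$ for $q\le q'$. This follows once $k(\cdot)$ is nondecreasing in $q$, which I will check from its definition as successive admission cutoffs. The order itself is a fixed total order, since deterministic hashes break ties, so prefixes of it are nested.

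Path nestedness then follows by chaining. The path visits $\lambda_1$, then $\pi_{1,q}$ for $q\in\mathcal Q$ in increasing order, then $\lambda_2$, and so on. Consecutive elements are either two policies in the same block, handled by the middle inclusion, or a block boundary, handled by the outer inclusions. Any inserted policy that duplicates a neighbour, or any $\mathcal Q$-level omitted by the insertion criteria, only removes elements from the sequence, and removing elements preserves a chain of inclusions.

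For (ii), I will list every ingredient of the path and verify that it depends only on $\calT$ and frozen inputs. The frozen inputs are the scores, the grid, the GO snapshot and the evidence sources with the target gene excluded. The $\calT$-dependent ingredients are:
\begin{itemize}
\item the shrunken source reliabilities, which are computed from ranking-fold outcomes only;
\item the logistic model, which is fitted on ranking-fold calls;
\item the scores $u_i$ for every gene, which are predicted values from frozen features evaluated under a $\calT$-measurable model;
\item the hashes $h_i$, which are deterministic functions of identifiers;
\item the cutoffs $k(q)$ and the insertion criteria deciding which blocks receive intermediate policies, which are computed from ranking-fold quantities.
\end{itemize}
Hence every $Q_{j,q}$, and so the whole ordered list including its length $K$, is a measurable function of $\calT$. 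The count $K\le J+(J-1)|\mathcal Q|$ holds because there are $J$ grid policies and at most $|\mathcal Q|$ insertions in each of the $J-1$ steps.

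The main obstacle is the bookkeeping in (ii). I must confirm that no quantity in the construction touches labels of $\calC\cup\calE$. The cases to check are: any normalisation of $u_i$ that uses block-level statistics; insertion criteria that might use pool risks; and the direction of monotonicity of $k(q)$. I would first rewrite each definition in Supplementary Note~\ref{supp-note:construction} as an explicit function of $(\calT,\text{frozen})$ before assembling the argument.
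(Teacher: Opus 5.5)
Your proposal is correct and follows the paper's proof essentially step for step. For (i) you reduce the middle inclusion to $Q_{j,q}\subseteq Q_{j,q'}$ using a deterministic total order, monotonicity of $k(q)=\max\{1,\lfloor q|\calT_j|\rfloor\}$ and the convention $Q_{j,1}=B_j$, then chain across block boundaries; for (ii) you check that the source rates, per-call model, $u_i$, $h_i$, cutoffs and triggers (ranking-fold mean loss versus $0.8\alpha$, and $|\calT_j|\ge5$) depend only on $\calT$ and frozen inputs, and you count $K$ as $J$ thresholds plus at most $|\mathcal Q|$ insertions in each of the $J-1$ steps.

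One correction to your description of $Q_{j,q}$. It is not the first $k(q)$ genes of $B_j$. It is every $i\in B_j$ whose triple $(u_i,h_i,\operatorname{id}(i))$ is lexicographically at most the $k(q)$-th smallest triple among $\calT_j$; the identifier makes the order strict, and the trigger $|\calT_j|\ge5$ ensures this order statistic exists. Nestedness therefore comes from that ranking-fold order statistic being nondecreasing in $q$, which is the same one-line step and is equally $\calT$-measurable.
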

\begin{proof}
(i) Because $\lambda_j>\lambda_{j+1}$, $S_i^{H}=S_i(\lambda_j)\subseteq S_i(\lambda_{j+1})=S_i^{L}$. For $i\notin B_j$ all four sets coincide. For $i\in B_j$, $S_i(\pi_{j,q})\in\{S_i^{H},S_i^{L}\}$ with $S_i(\pi_{j,q})=S_i^{L}$ exactly when $i\in Q_{j,q}$, so it suffices to show $Q_{j,q}\subseteq Q_{j,q'}$. The lexicographic order on $(u_i,h_i,\operatorname{id}(i))$ is a deterministic total order on $B_j$. For $q\le q'<1$, $k(q)=\max\{1,\lfloor q|\calT_j|\rfloor\}\le k(q')$, so the $k(q)$-th smallest ranking-fold triple precedes or equals the $k(q')$-th, whence $Q_{j,q}\subseteq Q_{j,q'}$. If $q'=1$, then $Q_{j,1}=B_j$ by definition, so the inclusion also holds. Consecutive elements of the path are $(\lambda_j,\pi_{j,q_{\min}})$, $(\pi_{j,q},\pi_{j,q'})$ with $q<q'<1$, $(\pi_{j,q_{\max}},\pi_{j,1})$ or $(\lambda_j,\lambda_{j+1})$; because $\pi_{j,1}=\lambda_{j+1}$, the chain of inclusions gives the claim.

(ii) The grid and the sets $S_i^{H},S_i^{L}$, and hence $c_i$, $D_i$, $B_j$ and the source-attribution weights $a_{ir}$, are functions of frozen predictor objects. The true quantities $\Delta_i$ and $\phi_i$ are label-dependent; in constructing the path they are evaluated only for ranking-fold genes $i\in\mathcal T_j$. The block means and the source estimates of Equation~\eqref{supp-eq:sourcerisk} therefore use only ranking-fold labels. The per-call model behind Equation~\eqref{supp-eq:utility} is fitted on ranking-fold calls alone, and its features are frozen predictor output together with rate estimates built from ranking-fold labels, so the fitted $\widehat p_{ig}$ are $\calT$-measurable; hence so are $\widehat\Delta_i$ and $\widehat\tau_i$ for a pool gene, which sum those probabilities over that gene's frozen added calls $D_i$, and so is $u_i=\widehat\Delta_i/\max(\widehat\tau_i,10^{-3})$; $h_i$ is a fixed function of the gene identifier. For $q<1$, the cutoff $(u^\ast,h^\ast,\operatorname{id}^\ast)$ is an order statistic among ranking-fold genes and $Q_{j,q}$ compares each affected gene's fixed-given-$\calT$ triple with it; the endpoint $Q_{j,1}=B_j$ is fixed by definition. The insertion triggers compare $|\calT|^{-1}\sum_{i\in\calT}L_i(\lambda_{j+1})$ with $0.8\alpha$ and $|\calT_j|$ with $5$. Thus every decision shaping the sequence is determined by $\calT$; each of the $J$ thresholds is visited once and each of the $J-1$ blocks contributes at most $|\mathcal Q|$ policies. If $\calT_j$ is empty the trigger fails and no adaptive policy is inserted, so Equation~\eqref{supp-eq:sourcerisk} is never evaluated on an empty fold.
\end{proof}
Nestedness of the released sets does not imply monotonicity of the losses: $L_i(\pi_{k+1})<L_i(\pi_k)$ is possible when added terms are supported. Only the total number of released calls is monotone along the path; the validity argument below uses neither property.

\suppsubnote{Consistency of the variance estimator}{note:proofs-variance}

\begin{lemma}[Consistency of $\shat_k$]\label{supp-lem:sigma}
Condition on $\calT$ and fix a policy $\pi$ with pool values $b_i=L_i(\pi)$, pool variance $\sigma^2(\pi)$ and sample variance $\shat^2=m^{-1}\sum_{i\in\calC}b_i^2-\bar b_{\calC}^2$. If $m\to\infty$ then $\shat^2-\sigma^2(\pi)\to0$ in probability. If in addition $\liminf\sigma^2(\pi)>0$, then $\shat/S(\pi)\to1$ in probability.
\end{lemma}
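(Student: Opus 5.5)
The plan is to write $\shat^2-\sigma^2(\pi)$ as a combination of two sample means under SRSWOR and control each with Supplementary Lemma~\ref{supp-lem:moments} via Chebyshev's inequality. Conditional on $\calT$, Supplementary Lemma~\ref{supp-lem:design} makes $\calC$ a SRSWOR of size $m$ from $\calP$, and the pool values $b_i=L_i(\pi)\in[0,1]$ are fixed constants. We have
\[
\sigma^2(\pi)=N^{-1}\sum_{i\in\calP}b_i^2-R(\pi)^2
\]
and, by definition, $\shat^2=m^{-1}\sum_{i\in\calC}b_i^2-\bar b_{\calC}^2$. Hence
\[
\shat^2-\sigma^2(\pi)=\bigl(\overline{b^2}_{\calC}-\overline{b^2}_{\calP}\bigr)-\bigl(\bar b_{\calC}-R(\pi)\bigr)\bigl(\bar b_{\calC}+R(\pi)\bigr).
\]

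For the first term, apply Supplementary Lemma~\ref{supp-lem:moments} to the fixed numbers $b_i^2\in[0,1]$. This gives mean zero and variance at most $1/(2m)$, so Chebyshev's inequality yields convergence to $0$ in probability as $m\to\infty$. For the second term, the same lemma applied to $b_i$ gives $\bar b_{\calC}-R(\pi)=O_p(m^{-1/2})$. Since $|\bar b_{\calC}+R(\pi)|\le 2$, the product also tends to zero in probability.

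These bounds are uniform: they depend only on $m$, not on $N$, $f$ or the particular values $b_i$. That uniformity matters, because the population itself changes along the sequence of problems. The first claim therefore holds for any such sequence, including one where $\sigma^2(\pi)$ varies with the problem index.

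For the ratio claim, note that $S^2(\pi)=N\sigma^2(\pi)/(N-1)$. Since $N\ge m\to\infty$, we have $S^2(\pi)-\sigma^2(\pi)\to0$ deterministically. Combining this with the first claim gives $\shat^2-S^2(\pi)\to0$ in probability. Because $\liminf\sigma^2(\pi)>0$, the quantity $S^2(\pi)$ is eventually bounded below by a positive constant $c$. Then
\[
\bigl|\shat^2/S^2(\pi)-1\bigr|\le c^{-1}\bigl|\shat^2-S^2(\pi)\bigr|\to0
\]
in probability, and the continuous mapping $x\mapsto\sqrt x$ near $1$ gives $\shat/S(\pi)\to1$.

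The only delicate point is working with a triangular array, where $\pi$ and the pool change with the problem index, rather than a fixed population. That is why I rely on non-asymptotic Chebyshev bounds with constants free of $N$, rather than a law of large numbers. The truncation $\max(\shat,10^{-6})$ is irrelevant here; it is handled downstream because $\shat$ is eventually bounded away from zero with probability tending to one.
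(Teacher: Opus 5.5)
Your proof is correct and takes essentially the paper's approach: it applies Supplementary Lemma~\ref{supp-lem:moments} with Chebyshev's inequality to the sample means of $b_i$ and $b_i^2$, then passes to the ratio using $S^2(\pi)/\sigma^2(\pi)=N/(N-1)\to1$. The only difference is cosmetic: the paper goes through $\shat/\sigma(\pi)\to1$ by continuous mapping, whereas you bound $\shat^2-S^2(\pi)$ directly using an explicit eventual lower bound on $S^2(\pi)$, which makes the triangular-array step slightly more explicit.
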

\begin{proof}
Apply Lemma~\ref{supp-lem:moments} to $b_i$ and to $b_i^2$, both in $[0,1]$: each sample mean has variance at most $1/(2m)\to0$, so by Chebyshev's inequality $\bar b_{\calC}\to\bar b_{\calP}$ and $m^{-1}\sum_{\calC}b_i^2\to N^{-1}\sum_{\calP}b_i^2$ in probability; hence $\shat^2\to\sigma^2(\pi)$. If $\liminf\sigma^2(\pi)>0$, the continuous-mapping theorem gives $\shat/\sigma(\pi)\to1$, and since $S(\pi)/\sigma(\pi)=\sqrt{N/(N-1)}\to1$, it follows that $\shat/S(\pi)\to1$ in probability.
\end{proof}

\suppsubnote{Proof of main-text Proposition~\ref{prop:validity}}{note:proofs-proposition}
\begin{proof}
Condition on $\calT$ throughout. By Lemma~\ref{supp-lem:nested}(ii) the sequence $\pi_1,\ldots,\pi_K$ is fixed and by Lemma~\ref{supp-lem:design} $\calC$ is a SRSWOR of size $m$ from $\calP$. The rule returns $\widehat\pi=\pi_{\hat k}$ with $\hat k=\max\{k:p_1\le\delta,\ldots,p_k\le\delta\}$, or abstains (risk $0\le\alpha$) if $p_1>\delta$.

\emph{Step 1: reduction to the first invalid candidate.} Let $\mathcal K_0=\{k:R(\pi_k)>\alpha\}$. If $\mathcal K_0=\emptyset$ there is nothing to prove. Otherwise let $k^\ast=\min\mathcal K_0$, a function of $\calT$. On $\{R(\widehat\pi)>\alpha\}$ a policy was certified and $\hat k\in\mathcal K_0$, so $\hat k\ge k^\ast$ and, by definition of $\hat k$, $p_k\le\delta$ for all $k\le\hat k$, in particular $p_{k^\ast}\le\delta$. Hence $\Pr\{R(\widehat\pi)>\alpha\mid\calT\}\le\Pr\{p_{k^\ast}\le\delta\mid\calT\}$. This step uses only the stopping rule; the joint distribution of $(p_1,\ldots,p_K)$, all computed from the same fold, plays no role, and neither does nestedness.

\emph{Step 2: asymptotic level at $k^\ast$.} Write $k=k^\ast$, $b_i=L_i(\pi_k)$ and $S_k^2=S^2(\pi_k)$. Since $\Phi$ is increasing, $p_k\le\delta$ if and only if
\[
\sqrt m(\Rhat_k-\alpha)\le z_\delta\tilde\sigma_k.
\]
Because $R(\pi_k)>\alpha$, on the event $\{p_k\le\delta\}$,
\[
\sqrt m\{\Rhat_k-R(\pi_k)\}
<
\sqrt m(\Rhat_k-\alpha)
\le z_\delta\tilde\sigma_k.
\]
Since $\delta\le\tfrac12$, $z_\delta\le0$, and since $\tilde\sigma_k=\max(\shat_k,10^{-6})\ge\shat_k$,
\[
z_\delta\tilde\sigma_k\le z_\delta\shat_k.
\]
Thus, on $\{\shat_k>0\}$,
\[
\{p_k\le\delta\}
\subseteq
\left\{
\frac{\sqrt m\{\Rhat_k-R(\pi_k)\}}{\shat_k}\le z_\delta
\right\}.
\]
The hypotheses $m\to\infty$, $N-m\to\infty$ and $\liminf S_k^2>0$ imply $m(1-f_t)\sigma_k^2\to\infty$, where $f_t=m/N$, so Lemma~\ref{supp-lem:fpclt} gives
\[
W_k=
\frac{\Rhat_k-R(\pi_k)}
{\sqrt{(1-f_t)S_k^2/m}}
\xrightarrow{\ d\ }\mathcal N(0,1).
\]
By Lemma~\ref{supp-lem:sigma}, $\shat_k/S_k\to1$ in probability, and therefore, by Slutsky's theorem,
\[
\frac{1}{\sqrt{1-f_t}}
\frac{\sqrt m\{\Rhat_k-R(\pi_k)\}}{\shat_k}
=
W_k\frac{S_k}{\shat_k}
\xrightarrow{\ d\ }\mathcal N(0,1).
\]
Moreover, Lemma~\ref{supp-lem:sigma} and the variance assumption imply $\Pr(\shat_k=0\mid\calT)\to0$. Hence, with $f_t\to f$,
\[
\limsup\Pr(p_k\le\delta\mid\calT)
\le
\Phi\!\Big(\frac{z_\delta}{\sqrt{1-f}}\Big).
\]
For $\delta\le\tfrac12$, $z_\delta\le0$ and $\sqrt{1-f}\le1$, so $\Phi(z_\delta/\sqrt{1-f})\le\Phi(z_\delta)=\delta$, with equality iff $z_\delta=0$ ($\delta=\tfrac12$) or $f=0$. Combining with Step 1 proves the conditional bound. The $o(1)$ is in $m$: it collects the normal-approximation error for $W_{k^\ast}$ and the Slutsky remainder for a single, $\calT$-determined index; the path length $K$ enters nowhere.

\emph{Numerical value.} When the variance floor is asymptotically inactive, $\tilde\sigma_k/S_k\to1$, so the ratio of the scale used by Equation~\eqref{supp-eq:pvalue} to the without-replacement standard error approaches $(1-f)^{-1/2}\approx1.73$ at $f=2/3$, making the normal test conservative. At $\delta=0.10$, main-text Proposition~\ref{prop:validity} therefore gives the asymptotic upper bound $\Phi(-1.2816/\sqrt{1/3})=\Phi(-2.22)\approx0.0132$; at $\delta=0.50$ the formal upper bound is $0.50$. These are upper bounds, not claims about the actual failure probability of the selected procedure.

\emph{Step 3: ordering-free validity (Corollary~\ref{supp-cor:order}).} Steps 1--2 used three properties of the candidate sequence: it is fixed given $\calT$; $K$ is finite; and the pool variance is bounded below at the first invalid candidate. The first holds for any rule that orders affected genes by a statistic computed from $\calT$, frozen objects and randomisation independent of the split (hash only; mean added-call score; $\widehat\Delta_i/c_i^{\,e}$; Equation~\eqref{supp-eq:utility}), with or without the insertion triggers, which are themselves $\calT$-measurable. Nothing in the proof refers to the association between the ordering and $\Delta_i$, which therefore affects only which index $\hat k$ is reached.

\emph{Step 4: unconditional statement.} Steps 1--2 give $\Pr\{R(\widehat\pi)>\alpha\mid\calT\}\le\Phi(z_\delta/\sqrt{1-f})+\varepsilon_m(\calT)$ with $\varepsilon_m(\calT)\to0$. If the pool variance at the first invalid candidate is bounded below uniformly in $\calT$, a Berry--Esseen bound for sampling without replacement (H\"oglund 1978), whose remainder depends on the population only through moment ratios that are bounded for values in $[0,1]$ with variance bounded below, makes $\varepsilon_m$ uniform in $\calT$, and $\Pr\{R(\widehat\pi)>\alpha\}=\mathbb E[\Pr\{R(\widehat\pi)>\alpha\mid\calT\}]\le\Phi(z_\delta/\sqrt{1-f})+o(1)$.

\emph{Global methods.} In the module-family analyses, for GoDag and the global baselines that use Equation~\eqref{supp-eq:pvalue}, the candidate sequence is the deterministic grid, the calibration sample $\calT\cup\calC$ is a SRSWOR of size $n_{\calT}+m$ from $\calU$ (Lemma~\ref{supp-lem:design}) and the estimand is the panel mean $n^{-1}\sum_{i\in\calU}L_i(\pi)$; Steps 1--2 apply verbatim with $\calP$ replaced by $\calU$ and $f$ by $0.7$. In the neighbourhood-family analyses, the deterministic global paths are certified on $\calC$, so Steps 1--2 apply directly with pool $\calP$ and $f=m/N$. Network-HAC variance substitution is not covered by this argument for $\delta<\tfrac12$; at $\delta=\tfrac12$ a normal-type rule with positive scale reduces to the same empirical-mean check. Empirical Bernstein and Hoeffding-type bounds use different tests, so this cancellation does not apply to them.

\emph{Finite-sample alternative.} Hoeffding's inequality holds for the mean of a SRSWOR of bounded values (Hoeffding 1963, Theorem~4), so $p_k^{\mathrm H}=\exp\{-2m(\alpha-\Rhat_k)_+^2\}$ is exactly super-uniform under $H_k$ conditional on $\calT$, and Step 1 gives a finite-sample bound $\delta$ for the same path at the cost of a wider margin ($\sqrt{\log(1/\delta)/(2m)}\approx0.011$ at $\delta=0.10$, $m=9{,}477$, versus $0.0038$ for Equation~\eqref{supp-eq:pvalue}).
\end{proof}

\suppsubnote{Proof of Remark~\ref{supp-rem:delta}}{note:proofs-remark}
\begin{proof}
\emph{$\delta$ identity.} Since $\Phi$ is strictly increasing and $\tilde\sigma_k>0$, $p_k\le\delta\iff\sqrt m(\Rhat_k-\alpha)/\tilde\sigma_k\le z_\delta\iff\Rhat_k\le\alpha+z_\delta\tilde\sigma_k/\sqrt m$. At $\delta=0.50$, $z_\delta=0$. At $\delta=0.10$, $z_\delta=-1.2816$; with $\tilde\sigma_k=0.29$ and $m=9{,}477$, $|z_\delta|\tilde\sigma_k/\sqrt m=0.0038$. Since $\Rhat_k$ depends only on $\calC$ and the candidate, any two methods that share a path, a loss and a certification fold and differ only in a normal-type variance estimator certify the same policies at $\delta=0.50$.

\emph{Split identity.} By Lemma~\ref{supp-lem:moments} with $b_i=L_i(\pi)$, $NR(\pi)=m\Rhat_{\calC}(\pi)+n_{\calE}\Rhat_{\calE}(\pi)$ for every policy, including $\pi=\widehat\pi$ (the identity is algebraic and holds whether or not the policy was selected using $\calC$). Subtracting $N\alpha=m\alpha+n_{\calE}\alpha$ and dividing by $n_{\calE}$ gives $\Rhat_{\calE}(\pi)-\alpha=(N/n_{\calE})\{R(\pi)-\alpha\}-(m/n_{\calE})\{\Rhat_{\calC}(\pi)-\alpha\}$. In the $10/60/30$ design $N/n_{\calE}=3.000$ and $m/n_{\calE}=2.000$ up to rounding (Wainberg: $14{,}215/4{,}738$ and $9{,}477/4{,}738$); these are also the factors for the neighbourhood-family global methods, which certify on $\calC$. For module-family global methods, with $n$ in place of $N$ and $n_{\calT}+m$ in place of $m$, the factors are $10/3$ and $7/3$.

\emph{Exceedance at $\delta=0.50$.} If a policy is certified at $\delta=0.50$ then $\Rhat_{\calC}(\widehat\pi)\le\alpha$, so the second term is nonnegative and $\Rhat_{\calE}(\widehat\pi)-\alpha\ge(N/n_{\calE})\{R(\widehat\pi)-\alpha\}$; hence $\{R(\widehat\pi)>\alpha\}\subseteq\{\Rhat_{\calE}(\widehat\pi)>\alpha\}$ and the held-out exceedance frequency bounds the pool-level failure frequency from above. The converse fails: $\Rhat_{\calE}(\widehat\pi)>\alpha$ also occurs when $R(\widehat\pi)\le\alpha$ but $\Rhat_{\calC}(\widehat\pi)$ is sufficiently far below $\alpha$. Under SRSWOR the standard deviation of $\Rhat_{\calC}(\pi)$ for a fixed policy with $\sigma(\pi)=0.29$ is $\sqrt{(1-f)/m}\,\sigma\approx0.0017$ in Wainberg; because the fixed-sequence rule stops at the first candidate whose certification mean exceeds $\alpha$, the selected candidate typically has $\Rhat_{\calC}(\widehat\pi)$ a few thousandths below $\alpha$ (a selection effect), $R(\widehat\pi)$ within a few thousandths of $\alpha$, and the evaluation-fold mean above $\alpha$ in a substantial fraction of splits.

\emph{Exceedance at $\delta=0.10$.} More generally, certification gives $\Rhat_{\calC}(\widehat\pi)-\alpha\le z_\delta\tilde\sigma/\sqrt m$, and the exact split identity therefore gives \[ \Rhat_{\calE}(\widehat\pi)-\alpha \ge \frac{N}{n_{\calE}}\{R(\widehat\pi)-\alpha\} -\frac{m}{n_{\calE}}\frac{z_\delta\tilde\sigma}{\sqrt m}. \] For illustration, with $\tilde\sigma=0.29$ and $m=9{,}477$, the certification margin is approximately $0.0038$, giving the corresponding approximations $0.0076$ and $0.0025$ under the Wainberg fold ratios. These numerical margins are illustrative and candidate-specific. A held-out exceedance does not, in general, imply $R(\widehat\pi)>\alpha$; conversely, for a certified policy with $\delta\le0.50$, $R(\widehat\pi)>\alpha$ implies $\Rhat_{\calE}(\widehat\pi)>\alpha$.
\end{proof}

\suppsubnote{Proof of Lemma~\ref{supp-lem:power}}{note:proofs-lemma}
\begin{proof}
Condition on $\calT$, fix a block and write $B=B_j\cap\calP$, $M=|B|$.

\emph{(a) Granularity.} For a policy that relaxes exactly the set $Q\subseteq B$, $L_i(\pi)=L_i(S_i^{H})+\Delta_i$ for $i\in Q$ and $L_i(\pi)=L_i(S_i^{H})$ otherwise, so $R(\pi)=R(S^{H})+N^{-1}\sum_{i\in Q}\Delta_i$ by linearity of the pool mean; with $Q=\{i:\operatorname{rk}(i)\le k\}$ this is the first claim. If $Q^{\mathrm{rand}}$ is a uniformly random $k$-subset of $B$, each $i\in B$ belongs to it with probability $k/M$, so $\mathbb E\sum_{i\in Q^{\mathrm{rand}}}\Delta_i=k\bar\Delta_B=(k/M)\sum_{i\in B}\Delta_i$, and $\sum_{i\in B}\Delta_i=N\{R(S^{L})-R(S^{H})\}$ because genes outside $B$ have $\Delta_i=0$. For the global threshold, $S_i(\lambda)=S_i^{H}\cup\{g\in D_i:s_{ig}\ge\lambda\}$ for $\lambda\in[\lambda_{j+1},\lambda_j]$, so the released sets change only when $\lambda$ crosses one of the $r$ distinct scores in the block, giving at most $r-1$ policies strictly between the endpoints; if all added rows share one score $s^\ast$, then $S_i(\lambda)=S_i^{H}$ for every $i\in B$ when $\lambda>s^\ast$ and $S_i(\lambda)=S_i^{L}$ for every $i\in B$ when $\lambda\le s^\ast$, so no global threshold separates affected genes. In Wainberg 2,470 of the 2,484 rows of the block $800\to775$ share the score $789.65$.

\emph{(b) Ordering value.} With $Q(k)=\{i:\operatorname{rk}(i)\le k\}$, $\sum_{i\in Q(k)}\Delta_i-\mathbb E\sum_{i\in Q^{\mathrm{rand}}}\Delta_i=\sum_{i\in B}(\Delta_i-\bar\Delta_B)\mathbf 1\{\operatorname{rk}(i)\le k\}=M\operatorname{Cov}_B\{\mathbf 1(\operatorname{rk}(i)\le k),\Delta_i\}$, where $\operatorname{Cov}_B(x,y)=M^{-1}\sum_{i\in B}(x_i-\bar x_B)(y_i-\bar y_B)=M^{-1}\sum_{i\in B}(x_i-\bar x_B)y_i$; dividing by $N$ gives the first identity. Summing over $k=1,\ldots,M$ and using $\sum_{k=1}^M\mathbf 1\{\operatorname{rk}(i)\le k\}=M+1-\operatorname{rk}(i)$,
\begin{align*}
&\frac1M\sum_{k=1}^{M}\Big\{\sum_{i\in Q(k)}\Delta_i-\mathbb E\sum_{i\in Q^{\mathrm{rand}}_{(k)}}\Delta_i\Big\}\\
&\qquad=\frac1M\sum_{i\in B}(\Delta_i-\bar\Delta_B)\{M+1-\operatorname{rk}(i)\}\\
&\qquad=-\operatorname{Cov}_B\{\operatorname{rk}(i),\Delta_i\},
\end{align*}
so the average risk saving of the ordering is $N^{-1}\operatorname{Cov}_B\{\operatorname{rk}(i),\Delta_i\}$, positive exactly when late-admitted genes tend to have larger increments. For a uniformly random ordering independent of the increments, $\mathbb E\operatorname{rk}(i)=(M+1)/2$ for every $i$, so the expected covariance is zero. If $\Delta_i$ is nondecreasing in $\operatorname{rk}(i)$, then for every $k$ the first $k$ genes have the $k$ smallest increments, so $\sum_{i\in Q(k)}\Delta_i$ is minimal among all $k$-subsets and in particular at most its random expectation (Chebyshev's sum inequality). Ties at the cutoff are resolved by the fixed gene hash to provide a deterministic reproducible total order. The random-order expectation above is a theoretical benchmark and does not require or imply that this fixed hash is probabilistically independent of $\Delta$.
\end{proof}

\begin{remark}[Interpretation and empirical counterparts]\label{supp-rem:interp}
Lemma~\ref{supp-lem:power} is a statement about fixed numbers given $\calT$. It explains how partial admission and an informative ordering can reduce fixed-pool risk at a given admission level, but it does not guarantee a larger certified prefix or an optimal yield because certification also depends on sampling variation and first-failure stopping. Part (a) describes granularity that a global threshold can forgo, especially within an exact tie, and part (b) expresses the ordering contribution through a rank--increment covariance on unseen genes. Whether that covariance is favourable for the ranking-fold-trained $u_i$ on pool genes is an empirical question; the out-of-sample Spearman correlation between $u_i$ and $\Delta_i$ over affected pool genes is reported per certified block in Table~\ref{supp-tab:blockdiag}. GoCo-M and GoCo-N are two prespecified candidate-path constructions whose relative performance is assessed empirically.
\end{remark}

\suppsubnote{The GoDag baseline}{note:godag}
GoDag is the hierarchy-aware global-threshold comparator used throughout: its candidate path is a deterministic score grid and its loss is the gene-level FDP of Equation~\eqref{eq:fdp} with the TruePath acceptable set. In the module-family analyses it calibrates on the first 70\% of each permutation, whereas in the neighbourhood-family analyses it certifies on $\calC$ (60\%) to match the other methods. The GoDag rows of the main text use the normal $p$-value of Equation~\eqref{supp-eq:pvalue}; replacing it by the corresponding Hoeffding bound with the appropriate calibration-sample size gives an exact finite-sample certificate at a cost in yield.

\suppsubnote{References for this note}{note:proofs-refs}
Cochran WG. \textit{Sampling Techniques}, 3rd ed. Wiley; 1977.

Erd\H{o}s P, R\'enyi A. On the central limit theorem for samples from a finite population. \textit{Publ Math Inst Hungar Acad Sci}. 1959;4:49--61.

H\'ajek J. Limiting distributions in simple random sampling from a finite population. \textit{Publ Math Inst Hungar Acad Sci}. 1960;5:361--374.

H\"oglund T. Sampling from a finite population: a remainder term estimate. \textit{Scand J Stat}. 1978;5:69--71.

Hoeffding W. Probability inequalities for sums of bounded random variables. \textit{J Am Stat Assoc}. 1963;58:13--30.
 
\clearpage

\suppnote{Construction and implementation details}{note:methods}

This note collects the construction and implementation details of the main text: the split mechanics (Section~\ref{supp-note:splits}), the interpretation of the frozen TruePath reference (Section~\ref{supp-note:reference}), the source-level estimate and the per-call ranking model (Section~\ref{supp-note:ranking-model}), the derivation arithmetic behind Remark~\ref{supp-rem:delta} (Section~\ref{supp-note:split-identity}), the uniform grid path and the choice of ranking statistic (Section~\ref{supp-note:grid-path}), the worked micro-example of a block (Section~\ref{supp-note:block-example}), further conventions and definitions (Section~\ref{supp-note:conventions}), source-object generalization (Section~\ref{supp-note:source-objects}), and the gene-level case study of the released annotations (Section~\ref{supp-note:case-study}). Equations and results are cross-referenced to their current locations.

\suppsubnote{Splits: seeds, rounding and realised fold sizes}{note:splits}
Each of the 100 analysis streams draws a uniformly random permutation of the panel (seed $20260905+s$ for stream $s$ in the module family; the neighbourhood family of Supplementary Note~\ref{supp-note:neighbourhood} uses seed $20260910+s$ with the same cut) and cuts it into a \emph{ranking fold} $\calT$ (the first $n_{\calT}=\operatorname{round}(0.10\,n)$ genes; its labels build the ordering and candidate path), a \emph{certification fold} $\calC$ (the next $m=\operatorname{round}(0.70\,n)-n_{\calT}$ genes, about 60\%) and an \emph{evaluation fold} $\calE$ (the remaining $n_{\calE}$ genes, about 30\%; rounding is the implementation's round-half-to-even). The realised sizes $(n_{\calT},m,n_{\calE})$ are $(1{,}579,\,9{,}477,\,4{,}738)$ in Wainberg, $(1{,}488,\,8{,}932,\,4{,}465)$ in Sanger, $(628,\,3{,}770,\,1{,}885)$ in DRIVE and $(1{,}502,\,9{,}012,\,4{,}506)$ in HAP1. The pool of non-ranking genes is $\calP=\calC\cup\calE$ with $N=n-n_{\calT}$ and $f=m/N\approx2/3$. GoCo uses $\calT$ only to build its candidate path and $\calC$ only to certify it. In the module family, global methods including GoDag have no ranking step and calibrate on $\calT\cup\calC$ (70\%); in the neighbourhood family, all compared methods certify on the same $\calC$ (60\%) fold. Evaluation-fold labels are used by no method before evaluation; the upstream predictor is frozen and treated as a constant.

\suppsubnote{Interpretation of the frozen TruePath reference}{note:reference}
The main analysis uses the frozen TruePath reference throughout. A released term is counted as \emph{unsupported} when it is absent from that reference for the gene after TruePath closure. This is an operational evaluation label, not a claim that the biological function is absent. Because GO is open-world and incomplete, the biological interpretation of reference-unsupported calls is treated separately in the Discussion; all methods here are evaluated against the same frozen reference.

\suppsubnote{The per-call ranking model}{note:ranking-model}

\paragraph{Where the estimate comes from.} The structure that creates the ties also supplies the evidence for the estimate. A call $(i,g)$ exists because the annotated members $\mathcal R_{ig}$ of gene $i$'s winning module carry term $g$; genes sharing module evidence can be offered overlapping terms by overlapping sets of members. If those members generated unsupported terms for ranking-fold genes, they are likely to do so again for $i$. Let $\widehat\rho_r$ be the fraction of ranking-fold calls supported by member $r$ whose term is unsupported, shrunk toward the overall ranking-fold rate as in Equation~\eqref{supp-eq:sourcerisk} with the weight of two pseudo-calls, and let $\widehat\rho_{ig}$ be the mean of $\widehat\rho_r$ over $r\in\mathcal R_{ig}$. This is the co-essentiality signal GoCo uses; Equation~\eqref{supp-eq:sourcerisk} could equally be applied to gene-level increments through the attribution matrix $a_{ir}$; any such variant is $\calT$-measurable and so covered by Corollary~\ref{supp-cor:order}.

\paragraph{Features.} GoCo fits a logistic regression of the indicator ``call unsupported'' on seven per-call features, over all ranking-fold calls at or above the lowest grid threshold. Three are the co-essentiality signal and its companions, learned from ranking-fold labels only: $\widehat\rho_{ig}$; the shrunk unsupported rate of the term $g$ itself across ranking-fold calls; and the log frequency of $g$ among the TruePath annotations of ranking-fold genes, a proxy for how specific the term is. Four are frozen predictor output: the log score, the number of supporting members $|\mathcal R_{ig}|$, the number of calls the gene has above the grid floor, and the module size excluding the target (Wainberg) or the target--module correlation (Sanger, DRIVE) or the cluster $z$-score (HAP1). We deliberately exclude the predictor's background count: in Wainberg it is the term's annotation count \emph{excluding the target}, so it would encode the target's own label.

\paragraph{Cross-fitting and hyperparameters.} For ranking-fold calls, the classifier fits and the raw term- and source-rate counts are cross-fitted between two hash-defined halves. The module implementation uses a common overall unsupported-rate shrinkage target computed from the whole ranking fold for both halves, and the term-frequency feature is computed from whole-ranking-fold TruePath counts with the scored gene removed. Thus the ranking-fold predictions are cross-fitted at the classifier and raw-rate-count levels but are not fully held-half-exclusive for every label-derived summary. Pool calls are scored by the model fitted on the whole ranking fold. Writing $\widehat p_{ig}$ for the fitted probabilities, define $\widehat L_i(S)=|S|^{-1}\sum_{g\in S}\widehat p_{ig}$ for $|S|>0$ and $\widehat L_i(\varnothing)=0$. The estimates entering Equation~\eqref{supp-eq:utility} are $\widehat\Delta_i=\widehat L_i(S_i^{L})-\widehat L_i(S_i^{H})$ and $\widehat\tau_i=\sum_{g\in D_i}(1-\widehat p_{ig})$. The feature set, the model class and its hyperparameters ($C=0.5$ on standardised features) were fixed on ten Wainberg development splits (streams $0$--$9$) before the external analyses; the neighbourhood-family instantiation was fixed on FunMap streams $0$--$9$. The model-class check replaces the logistic model by a histogram gradient-boosting classifier (150 trees, depth 3, learning rate 0.06). All label-derived quantities remain functions of $\calT$ alone: no certification- or evaluation-fold labels enter the ordering, so Corollary~\ref{supp-cor:order} applies.

\suppsubnote{Remark~\ref{supp-rem:delta}: the split identity and held-out exceedance}{note:split-identity}
The split identity gives, for every policy,
\[
\Rhat_{\calE}(\pi)-\alpha=\frac{N}{n_{\calE}}\{R(\pi)-\alpha\}-\frac{m}{n_{\calE}}\{\Rhat_{\calC}(\pi)-\alpha\},
\]
with $(N/n_{\calE},m/n_{\calE})=(3,2)$ for GoCo and for all methods in the neighbourhood-family matched-certification analysis, and $(10/3,7/3)$ for module-family global methods, up to rounding. At $\delta=0.50$ the certified policy is the last one before the first candidate whose certification-fold mean exceeds $\alpha$; on a fine path $\Rhat_{\calC}(\widehat\pi)$ sits just below $\alpha$, $R(\widehat\pi)$ lies within about one standard error ($\shat\sqrt{(1-f)/m}\approx0.0017$) of $\alpha$, and the evaluation-fold mean overshoots $R(\widehat\pi)$ by twice the certification-fold undershoot. Held-out exceedance of $\alpha$ in a substantial fraction of GoCo splits at $\delta=0.50$ is therefore expected---19--42\% across the four data sets (Table~\ref{supp-tab:detail})---and it is an upper bound on the pool-level failure frequency: if $\Rhat_{\calC}(\widehat\pi)\le\alpha$ then $R(\widehat\pi)>\alpha$ implies $\Rhat_{\calE}(\widehat\pi)>\alpha$. Because every label in this benchmark is known, $R(\widehat\pi)$ is computable in each split, and the fraction of splits with $R(\widehat\pi)>\alpha$ is the direct empirical check of main-text Proposition~\ref{prop:validity}. GoDag on the Wainberg grid stops at 800 with a calibration mean $\Rhat_{\calT\cup\calC}\approx0.092$, far below $\alpha$, which explains both its 1\% exceedance and its unused budget; on the finer external grids GoDag's held-out exceedance (36\% and 28\% in Sanger and DRIVE) is comparable to GoCo's. Section~\ref{supp-note:proofs-remark} proves these statements.

\suppsubnote{The uniform grid path and the choice of ranking statistic}{note:grid-path}
Two design choices define the construction certified here. First, the candidate path is defined uniformly: the same grid as GoDag with partial admission inserted in every step that meets a ranking-fold trigger, rather than a hand-picked block. This makes later thresholds available after inserted policies and helps explain the observed gains, but does not guarantee a larger certified prefix: a construction restricted to a single block could not move beyond it when a larger target permits further release. Second, the ordering statistic is the predicted loss increment per predicted supported term, used as a cost-per-gain heuristic for prioritising candidates expected to add less gene-level FDP per supported term; its form and the model behind it were fixed on the development data and applied unchanged externally.

\suppsubnote{A worked micro-example of one block}{note:block-example}
The Wainberg grid step from $\lambda_j=800$ to $\lambda_{j+1}=775$ is the block that GoCo splits at $\alpha=0.10$. It contains 2,484 gene--GO rows, of which 2,470 share the score $789.650$ exactly and the remaining 14 are spread over the scores $797.626$, $783.124$ and $779.901$; the affected set $B_j$ contains 1,267 genes, of which about 127 fall in the ranking fold. The dominant 2,470-call exact tie at score $789.650$ cannot be divided by any global threshold; the remaining 14 rows provide only three additional score locations within the step. The full-panel TruePath risk moves from $0.0924$ at threshold 800 to $0.1398$ at 775. GoCo instead evaluates, for each admission fraction $q$ in the grid $\mathcal Q$, the policy that gives $S_i^{L}$ to the $q$-quantile of affected genes with the smallest $u_i$ of Equation~\eqref{supp-eq:utility} and $S_i^{H}$ to the rest. In the full-panel linear interpolation of Lemma~\ref{supp-lem:power}(a), the risk reaches $\alpha=0.10$ at about $q=0.16$; this is not itself a certification calculation. The GoCo ordering bends the path below that interpolation and was certified at $q=0.35$ or $q=0.40$ in 89 of the 100 splits ($q=0.40$ in 64 of them). In split~0 the certified policy admitted 40\% of the affected genes.

\paragraph{Where the ties come from, in detail.} The enrichment score of a Wainberg call is a function of three small integers: the number $k_{\mathrm{sup}}$ of annotated module members supporting the term, the module size $n_{\mathrm{mod}}$ excluding the target, and the background count $K_{\mathrm{bg}}$. Because 85\% of scored calls come from modules with $n_{\mathrm{mod}}\le5$, many (module, term) pairs share a count triple and therefore a score. The 2,470 calls at $789.65$ belong to 1,256 genes, 482 terms and 347 different modules, and 2,018 of them have $(k_{\mathrm{sup}},n_{\mathrm{mod}},K_{\mathrm{bg}})=(1,4,5)$. The external resources tie for the same reason: Sanger has 1,880 pairs at one value among the 29,120 pairs at or above score 200, spread over 847 genes and 393 modules; DRIVE 723 of 9,258; HAP1 14,245 of 122,262. The Sanger and DRIVE grid steps used at $\alpha=0.10$ contain 20 and 36 distinct values, so a finer global grid would recover part of the lost granularity there, whereas the HAP1 step $475\to450$ is again almost a single tie: 2,089 of its 2,115 calls share one value. Full multiplicities are in \texttt{tie\_counts.csv} in the reproducibility package.

\suppsubnote{Further conventions and definitions}{note:conventions}
\emph{What the estimand does and does not control.} The certified risk $R(\pi)$ averages the per-gene FDP of Equation~\eqref{eq:fdp} over all pool genes, with abstaining genes contributing zero. The loss penalizes reference-unsupported terms among those released but does not penalize reference-supported terms that are withheld; reference-supported yield therefore measures retained supported functions separately from the error rate; it is not a false-negative rate. In the module-structure analyses, only about 8--12\% of genes receive any term at $\alpha=0.10$, so the average is dominated by abstentions and constrains no individual annotated gene: because abstention carries no loss, a policy that releases entirely unsupported sets for 9\% of pool genes and abstains on the rest has $R=0.09$ and is admissible at $\alpha=0.10$. Nor does $R(\pi)$ bound the pooled fraction of unsupported gene--GO calls, which weights genes by set size. GoCo does not control this pooled quantity. Table~\ref{supp-tab:estimands} includes a pair-based conformal-selection comparator targeting pooled call-level FDR as a descriptive comparison; because our random split samples genes rather than individual gene--GO pairs, the usual pair-level exchangeability conditions are not established here and we make no formal pooled-call FDR guarantee for these benchmark runs. We certify $R(\pi)$ and report, for every method, the number of released calls, the number of annotated genes, the reference-supported term count and the supported terms per supported gene as descriptive yield diagnostics (Table~\ref{supp-tab:riskdecomp}).

\emph{The three qualifications of main-text Proposition~\ref{prop:validity}.} The result is asymptotic in $m$ ($m=3{,}770$--$9{,}477$ here); the path length is irrelevant because a single test matters. It controls the pool risk $R(\widehat\pi)$, not the evaluation-fold mean, which is itself a noisy estimate of $R(\widehat\pi)$ (Remark~\ref{supp-rem:delta}). At $\delta=0.50$ the formal bound is $\tfrac12$ and the normal certificate reduces to the empirical calibration check; we use this setting as the primary benchmark, while $\delta=0.10$ provides the stricter certification analysis.

\emph{The target-excluded enrichment background.} One property of the Wainberg predictor deserves mention: its enrichment background excludes the target gene, so a gene that is itself annotated with a term receives a slightly larger background-corrected score for that term (by a factor $K/(K-1)$ for a term with $K$ annotated panel genes). This is a feature of the published predictor, enters the frozen scores of every method identically, and is not used by any ordering beyond the score itself; we excluded the raw background count from the learned features for this reason. For GoCo-M, the grid, insertion triggers, admission grid, $\kappa$, ordering family and learned model specification were fixed on Wainberg; the neighbourhood-family instantiation and its model specification were fixed on FunMap streams $0$--$9$. These choices are listed in Table~\ref{supp-tab:constants} and none affects validity.

\emph{Score comparison and reporting.} Scores are compared after rounding to six decimals. Besides the means and exceedance frequencies of the main text we report, for each paired comparison, the fraction of splits in which one method exceeds another.

\emph{Source sets.} The sources entering Equation~\eqref{supp-eq:sourcefrac} are restricted to the gene panel, and a gene whose added calls had no retained source would fall back to the block mean of Equation~\eqref{supp-eq:sourcerisk}. In all four data sets every added call has at least one retained source, so $\sum_r a_{ir}=1$ for every affected gene in every certified block and this fallback is never used (Table~\ref{supp-tab:blockdiag}).

\emph{Sign of the increment.} A gene moving out of abstention has $S_i^{H}=\emptyset$ and $\Delta_i$ equal to the false discovery proportion of its added calls, which is nonnegative; $\Delta_i<0$ occurs only when $S_i^{H}$ is nonempty and the added calls are better supported than the existing set.

\emph{Compatibility of the gene-averaged target.} Because abstention costs nothing, a gene-averaged target of $0.10$ is compatible with a low supported fraction of the released call list, which is why the released-call and supported-term counts are reported beside the certified risk.

\emph{What the ordering has to estimate.} Admitting gene $i$ changes pool risk by $\Delta_i/N$ and gains $\tau_i=|D_i\cap A_i^{\TP}|$ reference-supported terms. We use $\Delta_i/\tau_i$ as a cost-per-gain ordering heuristic, prioritising genes expected to add less gene-level FDP per supported term; this ordering is not claimed to solve the discrete allocation problem exactly. Neither quantity can be evaluated for a gene the certificate will later use because both depend on that gene's labels. The ordering therefore estimates this ratio from the ranking fold and frozen predictor output, and Corollary~\ref{supp-cor:order} ensures that the quality of that estimate affects yield rather than validity.

\emph{Evaluation and reporting.} For every method and split we record, on the evaluation fold, the mean gene-level TruePath FDP (a held-out assessment of the selected policy), the number of genes with a nonempty set, the number of released calls, the supported fraction of calls, the supported-gene yield (evaluation-fold genes with at least one TruePath-supported released term) and the GO-term yield (supported evaluation-fold gene--GO calls); on the certification fold, the certification mean; and the pool risk $R(\widehat\pi)$ of the certified policy, which every label in the benchmark makes computable and which main-text Proposition~\ref{prop:validity} controls asymptotically. The pool risk is the mean loss over the calibration sample together with its corresponding evaluation fold: the whole panel for module-family global methods, and $\calC\cup\calE$ for GoCo and for all methods in the neighbourhood-family matched-certification comparison. For descriptive comparison the tables report this quantity using the TruePath loss for every method. Multilabel certifies the Direct loss instead, so for it the event formally controlled by its certificate is the Direct-loss pool risk exceeding $\alpha$; those frequencies are reported separately in the footnote of Table~\ref{supp-tab:detail}. Because the Direct acceptable set is contained in its TruePath closure, the Direct loss is at least as large as the TruePath loss for the same release. Reported are means over the 100 splits, the fraction of splits with $\Rhat_{\calE}>\alpha$ (held-out exceedance) and with $R(\widehat\pi)>\alpha$ (pool-level failure, the event bounded by main-text Proposition~\ref{prop:validity}), and paired split-wise differences in GO-term yield with the fraction of splits in which one method exceeds another. Splits are overlapping random partitions of one fixed panel, and every reported quantity concerns comparisons of release rules on that panel rather than a biological superpopulation. Variation across the 100 split streams reflects the randomisation design and Monte Carlo error; an exceedance or failure frequency $f$ estimated from 100 splits has Monte Carlo standard error $\sqrt{f(1-f)/100}\le0.05$, so frequencies of $0.00$--$0.01$ are consistent with, but cannot resolve, the asymptotic boundary values of main-text Proposition~\ref{prop:validity}. The most liberal candidate on every path is the predictor's own reporting floor---score $600$ in Wainberg, $100$ in the three external resources and $-\log_{10}q=1$ in FunMap and STRING---and it is worth recording when that floor binds, since a comparison in which the global baseline runs out of grid is not a comparison of methods. Over the 100 splits it binds only in DRIVE at $\alpha=0.20$, where at $\delta=0.50$ every method reaches the end of the grid in every split (which is why the difference there is exactly zero; at $\delta=0.10$ GoCo-M stops one step short in one split), and in one Sanger split at $\alpha=0.20$, $\delta=0.50$. Elsewhere the certified cutoff sits well above it: $625$--$650$ in Wainberg at $\alpha=0.20$, and $16$--$62$ against a floor of $1$ in FunMap and STRING, so extending either grid further down---as a coarser published grid would---changes no arm and no comparison.

\emph{Comparators.} Every method uses the same 100 split streams, frozen scores and held-out TruePath truth within each data set. \emph{Multilabel} is our re-implementation of the released multilabel statistic of Boger et al.\ (one-sided normal $p$-value with empirical standard deviation, $\delta=0.50$ default) on the Direct loss; it calibrates on $\calT\cup\calC$ in the module family and on $\calC$ in the neighbourhood family, and ignores the ontology. \emph{GoDag} applies the same certificate to the TruePath loss and selects a single global threshold, using the same family-specific calibration sample. The generic Direct-loss calibrators are reported in the corresponding main-text comparison tables. \emph{GoCo-M} adds predictor-native shared-evidence structure through the candidate path of Supplementary Note~\ref{supp-note:construction}, whereas \emph{GoCo-N} adds predictor-native neighbourhood evidence through the call-resolution path of the same section (Supplementary Note~\ref{supp-note:neighbourhood}).

\emph{Run time.} The reference implementation is vectorised over the certification fold. On a laptop CPU, all methods for 100 splits at three error targets and two values of $\delta$ took 3 minutes for Wainberg, 13 minutes for Sanger and 5 minutes each for DRIVE and HAP1, dominated by the sparse source-attribution products.

\emph{Out-of-sample informativeness of the ordering.} Across the certified blocks, GoCo's statistic $u_i$ is rank-correlated with the realised increment $\Delta_i$ on affected pool genes at Spearman $0.22$--$0.73$, against $-0.08$--$0.43$ for the raw score of the added calls; the per-block values are in Table~\ref{supp-tab:blockdiag}.

\suppsubnote{Source objects beyond co-essentiality}{note:source-objects}
GoCo is defined for any predictor that exposes a pair $(s_{ig},\mathcal R_{ig})$: a frozen score and, for each candidate call, the evidence objects that produced it. The distinction relevant to the candidate path is whether that evidence is shared across genes or assembled separately for each gene. Shared evidence can create exact tied blocks addressed by GoCo-M, whereas gene-specific evidence permits call-level ordering as in GoCo-N. Table~\ref{supp-tab:structures} summarizes the source-object structures considered here, and Supplementary Note~\ref{supp-note:neighbourhood} develops the neighbourhood family on FunMap with validation on STRING. Every source object must be target-excluded: allowing the focal gene's own annotations to contribute to the evidence for the call later evaluated would inflate retrospective support. Corollary~\ref{supp-cor:order} ensures that the quality of the resulting ordering affects yield rather than validity.

\suppsubnote{Gene-level case study of the released annotations}{note:case-study}
In Wainberg split~0 GoCo-M admitted 40\% of the affected genes and released 351 evaluation-fold gene--GO calls that GoDag did not, over 158 genes, of which 37 (10.5\%) are supported by the frozen reference. Table~\ref{supp-tab:casestudy} lists the eight co-essential modules that contributed the most of them; the 28 that contributed at least four are all in the shipped split-level output. The interleukin-6 signal-transduction module \{IL6ST, JAK1, LIFR, SLC39A1, STAT3\} transfers correctly: IL6ST and JAK1 each acquire interleukin-27- and interleukin-35-mediated signalling (GO:0070106, GO:0070757) from the other members of the receptor complex, and the reference supports all four calls. In the TNF-receptor-associated module \{BIRC2, CYLD, N4BP1, TRAF2, TRAF3\}, BIRC2 acquires the CD40 receptor complex (GO:0035631) from TRAF2 and TRAF3 and regulation of the necroptotic process (GO:0060544) from CYLD, both supported. In the anaphase-promoting complex module \{ANAPC5, ANAPC13, ANAPC15, UBE2C, UBE2S\}, which contributes five admitted calls and so falls below the table's cut, three are supported, including GO:0005680 for ANAPC15 and UBE2C. Where a module mixes a regulator with its targets the transfer fails: CDKN1A, in \{CDKN1A, TP53, TP53BP1, UBE2K, USP28\}, acquires seven terms that belong to TP53 as a transcription factor---core promoter sequence-specific DNA binding, transcription cofactor binding, regulation of thymocyte apoptosis---and the reference supports none of them; DDX21 and NCL likewise each receive four nucleolar terms from NOLC1. Distinguishing these two regimes is exactly what the ordering is estimating, and it does so imperfectly but informatively (Table~\ref{supp-tab:blockdiag}). The examples also show that a reference-unsupported call is not necessarily biologically false: N4BP1 acquires regulation of Toll-like receptor signalling (GO:0034121) from BIRC2, which the frozen reference does not carry, although N4BP1 is an established negative regulator of TLR-induced NF-$\kappa$B signalling (Shi et al.\ 2021). Core promoter binding for CDKN1A, by contrast, illustrates an unsupported transfer of a regulator's function to its target; these retrospective data do not establish biological absence. Reference-supported yield is therefore a conservative measure of recovered biological support, while the measured gene-level FDP can overstate biological error relative to a more complete or context-specific annotation reference.

\suppsubnote{References for this note}{note:methods-refs}

Shi H, Sun L, Wang Y, et al. N4BP1 negatively regulates NF-$\kappa$B by binding and inhibiting NEMO oligomerization. \textit{Nat Commun}. 2021;12:1379.

Wainberg M, Kamber RA, Balsubramani A, et al. A genome-wide atlas of co-essential modules assigns function to uncharacterized genes. \textit{Nat Genet}. 2021;53:638--649.
 
\clearpage

\suppnote{The neighbourhood-structure instantiation, its validation, and the choice of candidate path}{note:neighbourhood}

This note reports the framework applied to a predictor of a different kind---a random-walk
neighbourhood on a functional gene network rather than a co-essential module---together with validation without retuning on a second, unrelated network. It also records the converse: the candidate
path that is best for this structure, GoCo-N, is \emph{worse} than GoCo-M on Wainberg, and the
decomposition of Lemma~\ref{supp-lem:power} says why the two structures call for different paths.

\suppsubnote{FunMap: the predictor and the frozen objects}{note:funmap}
FunMap is a machine-learned pan-cancer functional gene network derived from CPTAC proteogenomics.
We take the authors' released network and their own top-50 random-walk-with-restart (RWR)
neighbourhoods and retrain nothing. FunMap's edges were learned from CPTAC proteomic and transcriptomic
co-expression against a co-functionality gold standard drawn from Reactome, ProteomeHD, BioPlex, HuRI,
HI-Union and BioGRID; Gene Ontology was not among its training labels or its edge scores and was used
by its authors only to interpret modules after the fact. Because Reactome co-membership is correlated
with GO Biological Process co-annotation, we treat FunMap as the development predictor of this family
and use STRING for validation without retuning (Section~\ref{supp-note:string}). For a focal gene the wrapper removes that gene from its own
neighbourhood and removes its annotations from the enrichment background, then runs hypergeometric
over-representation over GO terms of size $[10,2000]$ with Benjamini--Hochberg adjustment within
aspect; the score of a call is $-\log_{10}$ of the adjusted $p$-value. The GO release is the
2024-09-08 archive, which predates the FunMap paper's acceptance; the same frozen tables are used for STRING. The panel is the 10,316
GO-annotated genes of the 10,525-gene network, giving 3,303,702 scored gene--GO calls.

The predictor therefore exposes exactly the two objects the construction needs
(Section~\ref{sec:estimand} of the main text): a frozen score $s_{ig}$, and for each call a source set
$\mathcal R_{ig}$, here the annotated neighbours carrying the term. One refinement is introduced by the structure. Wainberg's module members are unordered, so Equation~\eqref{supp-eq:sourcefrac} spreads each call's unit of evidence uniformly over them; a random-walk neighbourhood is \emph{ranked} by proximity, so the same attribution is weighted by $w(\text{rank})=1/\log_2(1+\text{rank})$. Equation~\eqref{supp-eq:sourcerisk} is used with this weighted attribution to construct the source-derived rate features; GoCo-N then orders individual calls by the fitted unsupported probabilities $\widehat p_{ig}$.

\paragraph{The per-call model.} A logistic regression of ``call reference-unsupported'' is fitted on ranking-fold calls and applied to every call. In the neighbourhood implementation, the two label-derived rate features, their shrinkage targets and the classifier coefficients are fitted on one hash-defined half of the ranking fold and used to score calls of the other half; every certification- or evaluation-fold call is scored by the model fitted on the whole ranking fold. This differs slightly from the module implementation described in Section~\ref{supp-note:ranking-model}, but both constructions use ranking-fold labels only. Splits of this family use seed $20260910+s$ (Section~\ref{supp-note:splits}). Its features are the neighbourhood analogues of the module features: from frozen predictor output, the log score, the number of annotated neighbours carrying the term, the number of annotated
neighbours, the call's BH rank within its gene and aspect, the number of sources, their total
proximity weight, the proximity rank of the closest source and the gene's number of calls; from
ranking-fold labels only, the proximity-weighted unsupported rate of the call's sources and the
term's unsupported rate, each shrunk by two pseudo-calls. The enrichment's background count for the
term is deliberately \emph{not} a feature. Because the focal gene is removed from the background,
that count equals the term's annotation count minus one exactly when the focal gene is itself
Direct-annotated with the term, so it would encode the label being predicted; this is the same
construction excluded from the Wainberg model for the same reason (Section~\ref{supp-note:conventions}). A draft of this analysis
that included it gave GoCo-N yields about 2--4\% higher than those reported here; every number in
this paper is from the model without it.

\suppsubnote{STRING: validation without retuning, and why this network and not another}{note:string}
Validation uses STRING v12.0 (human), restricted to the \emph{experiments} evidence channel. Three properties support its use as an external validation here.

\emph{Independent of the development predictor.} STRING v12.0 was frozen in July 2023, eighteen
months before FunMap's publication, and is built from physical-assay evidence (DIP, BioGRID, HPRD,
IntAct, MINT, PDB), not from CPTAC proteogenomics. Neither network is an input to the other.

\emph{Free of the annotations being evaluated.} This is the binding constraint, and it eliminates
most candidate networks. GO enters STRING through two named columns, \texttt{database} (which lists
Gene Ontology among its curated sources) and \texttt{textmining}; we read neither, using only
column~10. The experiments channel is calibrated against KEGG and Complex Portal rather than GO.
Selection cannot be GO-informed either: the minimum \texttt{combined\_score} over the 13,715,404
shipped rows is exactly 150 and \texttt{combined\_score} $\ge$ \texttt{experiments}, so no pair with
an experiments score above that floor can have been censored on GO-derived grounds. Networks whose
edge weights are \emph{fitted} to GO co-annotation---HumanNet, whose log-likelihood scores are
trained against a GO Biological Process gold standard; HumanBase/GIANT, whose posteriors use
co-annotation to propagated GO BP terms as the positive class; and GeneMANIA, whose integration
weights are regressed on GO BP co-annotation---were excluded for this reason, since evaluating GO
predictions against a network built from GO would be circular.

\emph{Identical downstream pipeline.} At STRING's published medium-confidence cut ($\ge400$) the
experiments channel gives 92,885 undirected edges over 13,079 genes. Everything after the edge list
is unchanged from FunMap: top-50 RWR neighbourhood, focal gene removed from the neighbourhood and
from the background, hypergeometric over-representation on terms of size $[10,2000]$, BH within
aspect. Two facts must be declared rather than assumed. First, STRING publishes no neighbourhoods,
so ours are reconstructed (restart $0.5$, 50 power iterations); the validation therefore exercises
the same predictor \emph{recipe} rather than a second frozen author artefact, a weaker claim than
FunMap's. Second, we hold the GO truth fixed at the same frozen 2024-09-08 tables used for FunMap and
restrict the analysis universe to the 8,496 genes those tables cover (the enrichment background is
this universe minus the focal gene), of which 8,055 have a neighbourhood inside it and form the
calibrated panel; the network is then the only thing that differs between the two applications.
This costs coverage relative to STRING's full 13,079 genes and buys exact comparability of the
reference. Our BH implementation reproduces FunMap's
published adjusted $p$-values on 97.6\% of 3.3 million calls to within $10^{-9}$, which fixes the
enrichment convention to the published one rather than to our reading of it.

\suppsubnote{Why the structure differs, and what Lemma~\ref{supp-lem:power} implies}{note:structure-diff}
Lemma~\ref{supp-lem:power} splits the value of partial admission inside a block into a \emph{granularity} term---the
ability to stop part-way through, worth something only because a global threshold must take the
block whole---and an \emph{ordering} term, the covariance
$(M/N)\,\mathrm{Cov}_B\{\mathbf 1(\mathrm{rk}(i)\le k),\Delta_i\}$, which is negative when
small-increment genes are admitted first. The two applications differ in which term is available.

In Wainberg the score is an enrichment statistic of three small integers, so 2,470 gene--GO pairs
share the value 789.650 and the largest tie in the operating range holds 3,361. An exact tie has no
interior threshold, so \emph{no} global path, however fine, reaches the granularity term. Partial admission is necessary to reach it; GoCo-M uses gene-level admission within grid steps, whereas GoCo-N defines a separate path that orders individual calls by the model.

In FunMap and STRING the adjusted $p$-values carry no material ties in the operating range, so refining the global grid recovers much of the granularity available from threshold refinement at no cost in validity---the grid is a function of frozen scores alone. GoCo-N then uses the remaining predictor-native information to construct an individual-call ordering. This is a prespecified candidate-path heuristic motivated by the decomposition in Lemma~\ref{supp-lem:power}, not an optimality result of that lemma. The candidate policies become the top-$\ell$ calls under the ranking-fold statistic---exactly $\ell$ calls, ties in $\widehat p_{ig}$ being broken by the frozen gene hash and then by call index---for $\ell$ on a fixed geometric grid of fractions of the call pool, nested in $\ell$ and measurable with respect to $\calT$, so Definition~\ref{supp-def:path} is satisfied and main-text Proposition~\ref{prop:validity} and Corollary~\ref{supp-cor:order} apply without modification. This is the instantiation \textbf{GoCo-N} of Supplementary Note~\ref{supp-note:construction}, fixed on FunMap; \textbf{GoCo-M} is the instantiation of Equation~\eqref{supp-eq:policy}, fixed on Wainberg. Both are admissible for every predictor; Section~\ref{supp-note:neighbourhood-comparison} compares them on FunMap and STRING, and Section~\ref{supp-note:both-instantiations} on the four shared-evidence data sets.

The two instantiations use different ranking resolutions. Equation~\eqref{supp-eq:utility} ranks genes within a GoCo-M block by predicted loss increment per predicted supported term. GoCo-N instead defines a separate call-level heuristic that orders individual calls by $\widehat p_{ig}$, with calls predicted most likely to be supported entering first. This probability ordering is not an algebraic specialisation of Equation~\eqref{supp-eq:utility}. The ranking model and certification framework are shared, but the candidate family and the resolution of the ordering differ.

\suppsubnote{What the comparison shows}{note:neighbourhood-comparison}
The neighbourhood-structure panel of the main text gives the primary comparison, and Table~\ref{supp-tab:frontier} reports gains over the exhaustive matched-risk frontier of the global path---for each arm and split, the largest correct-call yield of any global threshold, over every distinct score value of the frozen score file and abstention, whose realised held-out risk does not exceed the risk the arm realised---isolating candidate-path quality from simply spending more of the available risk budget. Three points follow.

First, refining the global grid increased supported-call yield by approximately 5.0\% at the primary $\alpha=0.10$, $\delta=0.50$ operating point in FunMap; at $\delta=0.10$ the corresponding increase was approximately 1.5\%.

Second, GoCo-M---the instantiation fixed on Wainberg---transfers to this structure but captures only part of the available gain.
Measured against the exhaustive matched-risk frontier at the risk each arm actually realised ($\delta=0.10$), GoCo-M gains +101, +196 and +164 supported calls at $\alpha=0.05,0.10,0.20$ in FunMap (ahead of the frontier in 100\%, 100\% and 100\% of splits) and +46, +138 and +197 in STRING (100\%, 100\%, 100\%).

Third, GoCo-N gains +155, +379 and +519 on the same frontier in FunMap (ahead in 95\%, 100\% and 100\% of splits) and +172, +268 and +561 in STRING (100\%, 97\%, 99\%), at realised risk comparable to GoCo-M's (lower in 3 of the six combinations), and it widens coverage substantially: 1,084 against 779 annotated genes in FunMap at $\alpha=0.10$. The pattern replicates on STRING at every target.

\suppsubnote{Both instantiations on the four dependency data sets}{note:both-instantiations}
GoCo-N was also run on Wainberg, Sanger, DRIVE and HAP1 with the per-call model of
Supplementary Note~\ref{supp-note:construction} unchanged---the same features and the same cross-fitting---so that only the
candidate path differed. At $\alpha=0.10$, $\delta=0.10$, GoCo-N released 142 fewer
reference-supported terms than GoCo-M in Wainberg and 69 fewer in DRIVE, but 219 more
in Sanger and 88 more in HAP1. Thus the prespecified structural mapping is a useful
default rather than an oracle choice: even among predictors with shared evidence and
exact ties, the better candidate path can vary. This motivates the ranking-fold selection
rule of Section~\ref{supp-note:adaptive}, which chooses between the two paths before certification without using
evaluation-fold outcomes.

\suppsubnote{An adaptive alternative to the structural mapping}{note:adaptive}
The main text fixes the instantiation by evidence structure, on one development predictor per
structure. Because both paths satisfy Definition~\ref{supp-def:path}, any rule that chooses between them from $\calT$
and frozen inputs alone inherits main-text Proposition~\ref{prop:validity} unchanged (Corollary~\ref{supp-cor:order}), so the choice can also be
made per split at no cost in validity. The rule we report walks each instantiation on $\calT$ alone---the GoCo-M path with its ranking-fold cutoffs, the GoCo-N path over ranking-fold calls ordered by $\widehat p_{ig}$---with the per-call model cross-fitted over ten hash-defined folds of $\calT$ using the family-specific feature construction described above, stops each at the last candidate whose ranking-fold mean loss is at most $\alpha$, and carries forward the instantiation with the larger ranking-fold supported-call yield.

Table~\ref{supp-tab:selection} reports its verdict for every predictor and target beside the instantiation that in fact won on the evaluation fold. At $\alpha=0.10$ it chose GoCo-N in 0\%, 96\%, 18\% and 89\% of splits on Wainberg, Sanger, DRIVE and HAP1 and in 92\% and 76\% on FunMap and STRING---agreeing with the structural mapping on the two development predictors and departing from it on Sanger and HAP1---and it agreed with the held-out winner in 100\%, 96\%, 80\%, 89\%, 83\% and 69\% of splits. Where the two instantiations are far apart the rule is essentially never wrong; where they are closer it errs in a fraction of splits (31\% on STRING at $\alpha=0.10$, where GoCo-N's held-out margin over GoCo-M is 13\%), and the price is paid in power alone: what it selected released 6.0--73.4\% more supported terms than the cutoff across the six predictors at $\alpha=0.10$, $\delta=0.10$, against 6.3--23.3\% for GoCo-M alone and -34.4--75.2\% for GoCo-N alone.

We report the structural mapping in the main text because it is prespecified, needs no per-data-set
decision and is what a user of a new predictor can apply immediately; the adaptive rule is available
to a user who would rather spend ranking-fold labels on the choice. Predicting from a predictor's
structure alone which instantiation wins, or by how much, remains open.

\suppsubnote{Measurability}{note:measurability}
The operational measurability test (\texttt{test\_measurability.py}) was repeated for both applications: scrambling the TruePath labels of every gene outside the ranking fold leaves the per-call ordering bit-identical, over 241,154 pool calls in FunMap and 470,735 in STRING, with maximum absolute difference $0$ in both, and likewise leaves the GoCo-M statistic $u_i$ of every affected pool gene in every block unchanged. Source sets are drawn from neighbours' annotations, never the focal gene's: the FunMap
neighbourhood file contains no self-loops in 499,261 rows, and the STRING neighbourhoods are
constructed with the focal gene removed. The scrambling test detects any use of certification- or
evaluation-fold labels through the truth tables; it cannot detect a label that a frozen predictor
column carries by construction, which is why the background count is excluded by inspection rather
than by test.
 
\suppnote{Candidate construction, formal guarantees and data reconstruction}{note:extended}
This note specifies the candidate construction and formal guarantees underlying the operational steps in the main Methods. Numbered citations refer to the main reference list.
\suppsubnote{Candidate construction}{note:construction}

\paragraph{Blocks and increments.}
On the grid $\lambda_1>\cdots>\lambda_J$ shared by GoDag and GoCo (Table~\ref{supp-tab:constants}), block $j$ steps from conservative sets $S_i^{H}=S_i(\lambda_j)$ to liberal sets $S_i^{L}=S_i(\lambda_{j+1})\supseteq S_i^{H}$, and for gene $i$
\begin{equation}
\Delta_i=L_i(S_i^{L})-L_i(S_i^{H}),\qquad c_i=|S_i^{L}|-|S_i^{H}|,\qquad D_i=S_i^{L}\setminus S_i^{H},
\label{supp-eq:increment}
\end{equation}
with $B_j=\{i:c_i>0\}$ the \emph{affected} genes, $\calT_j=B_j\cap\calT$ the affected ranking genes and $\phi_i=1-|D_i\cap A_i^{\TP}|/c_i$ the unsupported fraction of the added calls.

\paragraph{Fractional source attribution.}
For each added call $g\in D_i$ the predictor names the sources $\mathcal R_{ig}$ behind it---in Wainberg the winning module's annotated members, gene $i$ excluded---and each call spreads a unit of evidence equally over them:
\begin{equation}
a_{ir}=\frac{1}{c_i}\sum_{g\in D_i}\frac{\mathbf 1\{r\in\mathcal R_{ig}\}}{|\mathcal R_{ig}|},\qquad i\in B_j .
\label{supp-eq:sourcefrac}
\end{equation}
Every added call has a retained source in all four data sets (Supplementary Table~\ref{supp-tab:blockdiag}).

\paragraph{Source-level shrinkage.}
For $y_i=\Delta_i$ or $y_i=\phi_i$, $i\in\calT_j$, let $\bar y$ denote the mean over the affected ranking genes. The source estimate is the attribution-weighted mean over the ranking genes $r$ supports, shrunk toward $\bar y$ by one pseudo-gene,
\begin{equation}
\widehat y_r=\frac{\sum_{i\in\calT_j}a_{ir}\,y_i+\kappa\,\bar y}{\sum_{i\in\calT_j}a_{ir}+\kappa},\qquad\kappa=1,
\label{supp-eq:sourcerisk}
\end{equation}
and the prediction $\widehat y_i=\sum_r a_{ir}\widehat y_r$ for an affected gene depends, outside $\calT$, on losses only through ranking-fold genes: its own labels never enter.

\paragraph{The GoCo statistic.}
Admitting gene $i$ changes pool risk by $\Delta_i/N$ and gains $\tau_i=|D_i\cap A_i^{\TP}|$ supported terms. We therefore use $\Delta_i/\tau_i$ as a cost-per-gain ordering heuristic, prioritising genes expected to add less gene-level FDP per supported term; this ordering is not claimed to solve the discrete allocation problem exactly, and optimality is not required for validity. Because both quantities are unknown for pool genes, GoCo estimates per-call probabilities $\widehat p_{ig}$ of being unsupported from ranking-fold information and uses them to construct $\widehat\Delta_i$ and $\widehat\tau_i$. Its leading feature is the shrunk rate at which a call's sources $\mathcal R_{ig}$ produced unsupported ranking-fold terms; all seven features come from ranking-fold labels or frozen output, so the fitted probabilities are $\calT$-measurable (Supplementary Note~\ref{supp-note:ranking-model}). With $\widehat L_i(S)=|S|^{-1}\sum_{g\in S}\widehat p_{ig}$ for $|S|>0$ and $\widehat L_i(\varnothing)=0$, $\widehat\Delta_i=\widehat L_i(S_i^{L})-\widehat L_i(S_i^{H})$ and $\widehat\tau_i=\sum_{g\in D_i}(1-\widehat p_{ig})$, GoCo orders affected genes by
\begin{equation}
u_i=\frac{\widehat\Delta_i}{\max(\widehat\tau_i,\,10^{-3})},
\label{supp-eq:utility}
\end{equation}
the predicted loss increase per predicted supported term, so genes backed by reliable members and gaining several such terms go first. Any other $\calT$-measurable ordering could be substituted without touching the certificate (Corollary~\ref{supp-cor:order}).

\paragraph{Partial admission.}
Order $B_j$ lexicographically by $(u_i,h_i,\operatorname{id}(i))$, where $h_i\in[0,1)$ is a SHA-256 hash of the gene identifier and $\operatorname{id}(i)$ is the gene identifier used as a final deterministic tie-break. For $q\in\mathcal Q$ with $q<1$ (Table~\ref{supp-tab:constants}), let $(u^\ast,h^\ast,\operatorname{id}^\ast)$ be the $k(q)$-th smallest triple among the affected ranking genes, $k(q)=\max\{1,\lfloor q|\calT_j|\rfloor\}$, and set
\begin{equation}
\begin{aligned}
Q_{j,q}&=\{i\in B_j:(u_i,h_i,\operatorname{id}(i))\le(u^\ast,h^\ast,\operatorname{id}^\ast)\},\\
S_i(\pi_{j,q})&=
\begin{cases}
S_i^{L}, & i\in Q_{j,q},\\
S_i^{H}, & i\notin Q_{j,q}.
\end{cases}
\end{aligned}
\label{supp-eq:policy}
\end{equation}
For the endpoints we define $Q_{j,0}=\varnothing$, so $\pi_{j,0}$ is the global-threshold policy at $\lambda_j$ (already on the path, hence never inserted), and $Q_{j,1}=B_j$, so $\pi_{j,1}$ is exactly the next global-threshold policy at $\lambda_{j+1}$. The cutoff is learned on $\calT$ and applied to every affected gene; candidate admission uses this deterministic total order, so the $Q_{j,q}$ are nested, with $Q_{j,1}=B_j$ corresponding to $\lambda_{j+1}$ (Supplementary Note~\ref{supp-note:proofs-path}).

\begin{definition}[Candidate path]\label{supp-def:path}
A candidate path is a finite sequence of policies $\pi_1,\ldots,\pi_K$ that is (i) \emph{nested}, $S_i(\pi_k)\subseteq S_i(\pi_{k+1})$ for all $i$ and $k$, and (ii) \emph{$\calT$-measurable}: the sequence, including its length $K$, is a function of the ranking fold (its members and their labels), the frozen inputs, and any auxiliary randomisation fixed before the split. All conditional statements below also condition on that fixed randomisation. Labels of $\calC$ and $\calE$ are never used.
\end{definition}

GoDag's path is the label-free grid; GoCo-M's visits the same grid and, after $\lambda_j$, inserts $\pi_{j,q}$ for $q\in\mathcal Q$ in increasing order before $\lambda_{j+1}$ whenever the ranking-fold mean loss at $\lambda_{j+1}$ exceeds $0.8\alpha$ and $|\calT_j|\ge5$. Both triggers use $\calT$ only, so every such path satisfies Definition~\ref{supp-def:path} (Supplementary Lemma~\ref{supp-lem:nested}); the construction and its constants (Table~\ref{supp-tab:constants}) are identical across data sets.

GoCo-N instead orders individual calls by $\widehat p_{ig}$; its neighbourhood-specific features and the ranking-fold-only rule for choosing between paths are specified in Supplementary Note~\ref{supp-note:neighbourhood}. Both constructions satisfy Definition~\ref{supp-def:path}.

\suppsubnote{Certification rule and formal guarantees}{note:formal-guarantees}

For candidate $\pi_k$ define
\begin{align}
\Rhat_k &= \frac1m\sum_{i\in\calC}L_i(\pi_k),\qquad
\shat_k^2 = \frac1m\sum_{i\in\calC}\{L_i(\pi_k)-\Rhat_k\}^2,
\label{supp-eq:riskse}\\
p_k &= \Phi\!\left(\frac{\Rhat_k-\alpha}{\max(\shat_k,10^{-6})/\sqrt m}\right),
\label{supp-eq:pvalue}
\end{align}
the one-sided normal $p$-value of the released Multilabel implementation [24] (default $\delta=0.50$), used unchanged by GoDag and GoCo on the hierarchy-aware loss. In the module-family analyses, GoDag and the other global methods use $\calT\cup\calC$ and $n_{\calT}+m$ in place of $\calC$ and $m$; in the neighbourhood-family analyses, all compared methods certify on $\calC$. Fixed-sequence Learn-Then-Test [23] tests $p_1,p_2,\ldots$ in path order against $H_k:R(\pi_k)\ge\alpha$, stops at the first $k$ with $p_k>\delta$ and returns the preceding candidate $\widehat\pi$ (abstention, risk zero, never occurred). Here $\delta$ is the tolerated false-certification level in the asymptotic guarantee in main-text Proposition~\ref{prop:validity}; smaller $\delta$ requires stronger evidence before a candidate is certified. Given $\calT$ the path is fixed and $\calC$ a random sample of the pool, so every $\Rhat_k$ is an unbiased survey estimate.

The conditional fixed-sequence guarantee is stated in main-text Proposition~\ref{prop:validity}, with its proof in Supplementary Note~\ref{supp-note:proofs-proposition}. The argument also covers global normal-test methods on their respective certification folds. Replacing the normal test with Hoeffding's bound [34] gives finite-sample control on the same prespecified path.

\begin{corollary}[Validity does not depend on the ordering]\label{supp-cor:order}
Under the assumptions of main-text Proposition~\ref{prop:validity}, validity does not depend on how the $\calT$-measurable ordering is constructed. It therefore holds for any ordering of the affected genes, or of individual calls, that is a function of the ranking fold and frozen inputs---orderings based only on frozen inputs, such as a deterministic hash or the score of the added calls, Equation~\eqref{supp-eq:utility}, and the GoCo-N path---with or without the insertion triggers, and for a choice between such paths made on $\calT$. Predictor-native evidence changes which policies are reached before the first failure, not the error bound.
\end{corollary}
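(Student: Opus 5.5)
The plan is to reduce the Corollary to the proof of main-text Proposition~\ref{prop:validity} by checking which properties of the candidate sequence that proof actually uses. Conditional on $\calT$ (and on any auxiliary randomisation fixed before the split), that proof needs only three inputs. First, the sequence $\pi_1,\ldots,\pi_K$ is a fixed, finite list given $\calT$. Second, $\calC$ is a SRSWOR of size $m$ from $\calP$; this is Lemma~\ref{supp-lem:design}, which involves no policy at all. Third, the pool variance at the first invalid candidate $k^\ast$ is bounded away from zero, which is an assumption of the Proposition and is carried over unchanged. Step~1 of that proof gives $\Pr\{R(\widehat\pi)>\alpha\mid\calT\}\le\Pr\{p_{k^\ast}\le\delta\mid\calT\}$ from the stopping rule alone. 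Step~2 applies Lemmas~\ref{supp-lem:fpclt} and~\ref{supp-lem:sigma} at the single $\calT$-determined index $k^\ast$. Neither step refers to how the policies were ordered, to monotonicity of losses, or to any association between the ordering statistic and the increments $\Delta_i$.

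I will therefore argue that every ordering named in the Corollary yields a path satisfying Definition~\ref{supp-def:path}. For gene-level orderings within GoCo-M blocks, I will go through Lemma~\ref{supp-lem:nested} with $u_i$ replaced by an arbitrary $\calT$-measurable statistic $v_i$. Nestedness used only that $(v_i,h_i,\operatorname{id}(i))$ is a deterministic total order and that $k(q)$ is nondecreasing in $q$. Measurability used only that $v_i$, the cutoffs and the triggers are functions of $\calT$ and frozen objects. The statistics covered in this way include the hash-only ordering, the mean added-call score, Equation~\eqref{supp-eq:utility}, and the variant with or without triggers. For the GoCo-N path, top-$\ell$ calls under $\widehat p_{ig}$ with deterministic tie-breaking are nested in $\ell$. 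They are also $\calT$-measurable because $\widehat p_{ig}$ is fitted on ranking-fold labels only, as recorded in Supplementary Note~\ref{supp-note:neighbourhood}.

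The step needing most care is the selection between paths made on $\calT$. If $\chi(\calT)$ picks one of finitely many admissible paths, then conditional on $\calT$ the chosen path is a fixed sequence, so Definition~\ref{supp-def:path}(ii) holds for the composite rule. The Proposition then applies to the chosen path, with its own first invalid index $k^\ast$. The subtle point is that the variance hypothesis must hold at the first invalid candidate of whichever path is selected. I will state it that way, or require it for every path in the finite menu, which is the natural reading of ``under the assumptions of Proposition~\ref{prop:validity}''. For the unconditional version, uniformity in $\calT$ must hold over the menu, as in Step~4 of the Proposition's proof.

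Finally, I will close with the interpretive sentence. Since the bound $\Phi(z_\delta/\sqrt{1-f})$ does not involve the ordering, predictor-native evidence affects only which index $\hat k$ is reached before the first failure, and therefore only yield, as Lemma~\ref{supp-lem:power} quantifies.
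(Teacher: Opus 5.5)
Your proposal is correct and follows the paper's own argument (Step~3 in the proof of Proposition~\ref{prop:validity}). Steps~1--2 use only three things: the path is a finite sequence fixed given $\calT$, $\calC$ is a simple random sample without replacement from the pool, and the pool variance is bounded below at the first invalid candidate; every listed ordering, trigger variant and $\calT$-based choice between paths satisfies the first of these, and your explicit nestedness re-check and your remark that the variance hypothesis must hold at the first invalid candidate of the selected path are a slightly more careful version of that argument, not a different route.
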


\begin{remark}[The $\delta$ identity and held-out exceedance]\label{supp-rem:delta}
Certification is equivalent to $\Rhat_k\le\alpha+z_\delta\max(\shat_k,10^{-6})/\sqrt m$. Thus, at $\delta=0.50$, normal-type procedures with a positive scale estimate reduce to $\Rhat_k\le\alpha$. Multilabel, IID-normal and network-HAC therefore select the same policy on a common path and loss; this cancellation does not apply to concentration-bound tests. Pool failure and held-out exceedance remain different events. Supplementary Notes~\ref{supp-note:proofs-remark} and~\ref{supp-note:split-identity} give the exact split identity, the rounding factors and the stricter $\delta=0.10$ calculation.
\end{remark}

Validity does not depend on the ordering; yield does, through two separable effects.
\begin{lemma}[Granularity and the value of an ordering]\label{supp-lem:power}
Fix $\calT$ and a block, let $B=B_j\cap\calP$ with $M=|B|\ge1$ and block mean $\bar\Delta_B=M^{-1}\sum_{i\in B}\Delta_i$. For a $\calT$-measurable strict order on $B$, let $\operatorname{rk}(i)\in\{1,\ldots,M\}$ be the position of gene $i$, let $\pi_{(k)}$ relax exactly the $k$ first-ranked genes, and let $\pi^{\mathrm{rand}}_{(k)}$ relax a uniformly random $k$-subset of $B$.
\begin{enumerate}
\item[(a)] \emph{Granularity.} $R(\pi_{(k)})=R(S^{H})+N^{-1}\sum_{\operatorname{rk}(i)\le k}\Delta_i$ and $\mathbb E\,R(\pi^{\mathrm{rand}}_{(k)})=R(S^{H})+(k/M)\{R(S^{L})-R(S^{H})\}$. Random partial admission interpolates linearly between the two global endpoints, whereas a global threshold realises at most $r-1$ intermediate policies when the block contains $r$ distinct score values, and none for an exact tie.
\item[(b)] \emph{Ordering value.} $R(\pi_{(k)})-\mathbb E\,R(\pi^{\mathrm{rand}}_{(k)})=(M/N)\operatorname{Cov}_B\{\mathbf 1(\operatorname{rk}(i)\le k),\Delta_i\}$, the empirical covariance over $B$. A useful ordering admits small-$\Delta$ genes first, making this covariance negative. Averaged over $k=1,\ldots,M$, the risk saving of the ordering relative to random admission equals $N^{-1}\operatorname{Cov}_B\{\operatorname{rk}(i),\Delta_i\}$; it is zero in expectation for a uniformly random ordering independent of the increments, and if $\Delta_i$ is nondecreasing in $\operatorname{rk}(i)$ the ordering is optimal among all orderings at every $k$.
\end{enumerate}
\end{lemma}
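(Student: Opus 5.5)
Everything here is conditional on $\calT$, so the increments $\Delta_i$, the order $\operatorname{rk}$ and the block $B=B_j\cap\calP$ are fixed numbers, and the only randomness is the uniform $k$-subset in $\pi^{\mathrm{rand}}_{(k)}$. The plan is to use linearity of the pool mean throughout. For any policy that relaxes exactly a set $Q\subseteq B$, gene $i\in Q$ has released set $S_i^{L}$, so $L_i=L_i(S_i^{H})+\Delta_i$; every other gene keeps $L_i(S_i^{H})$. Genes outside $B_j$ have $c_i=0$ and hence $\Delta_i=0$. Averaging over $\calP$ gives
\[
R(\pi_Q)=R(S^{H})+N^{-1}\sum_{i\in Q}\Delta_i .
\]
Taking $Q=\{\operatorname{rk}\le k\}$ yields the first identity in (a).

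For the random subset, each $i\in B$ is included with probability $k/M$, so the expected added sum is $(k/M)\sum_{i\in B}\Delta_i$. The sum $\sum_{i\in B}\Delta_i$ equals $N\{R(S^{L})-R(S^{H})\}$ by the same linearity applied with $Q=B$, since $\pi_B$ is the policy at $\lambda_{j+1}$. This gives the linear interpolation.

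For the global-threshold count, write $S_i(\lambda)=S_i^{H}\cup\{g\in D_i:s_{ig}\ge\lambda\}$ for $\lambda$ between $\lambda_{j+1}$ and $\lambda_j$. The released sets then change only when $\lambda$ crosses one of the $r$ distinct scores among the added calls. This leaves at most $r-1$ policies strictly between the endpoints, and none when $r=1$, because then all of $B$ flips at once.

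For (b), subtract the two expressions from (a) to get
\[
R(\pi_{(k)})-\mathbb E\,R(\pi^{\mathrm{rand}}_{(k)})=N^{-1}\sum_{i\in B}\bigl(\mathbf 1\{\operatorname{rk}(i)\le k\}-k/M\bigr)\Delta_i .
\]
Since $k/M$ is the mean of the indicator over $B$, this is $(M/N)\operatorname{Cov}_B$, using the centred-in-one-argument form of the empirical covariance.

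For the average over $k$, swap the sums and use $\sum_{k=1}^M\mathbf 1\{\operatorname{rk}(i)\le k\}=M+1-\operatorname{rk}(i)$. The constant $M+1$ drops out against the centred $\Delta$, leaving $-N^{-1}\operatorname{Cov}_B\{\operatorname{rk},\Delta\}$ as the average risk difference, that is, a saving of $N^{-1}\operatorname{Cov}_B\{\operatorname{rk},\Delta\}$.

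For a uniformly random order independent of $\Delta$, $\mathbb E\operatorname{rk}(i)=(M+1)/2$ for every $i$, so the expected covariance vanishes. Finally, if $\Delta_i$ is nondecreasing in rank, the first $k$ genes carry the $k$ smallest increments. Their sum is therefore minimal over all $k$-subsets (a rearrangement or Chebyshev-sum argument), which makes the ordering optimal at every $k$ and in particular no worse than the random expectation.

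Most steps are bookkeeping. The step needing the most care is making sure the sign conventions in the average-over-$k$ identity are right. The other point to check is that the "at most $r-1$ intermediate policies" claim counts only distinct released-set configurations strictly inside the step, so the endpoints $\lambda_j$ and $\lambda_{j+1}$ must be excluded. I would handle that by writing the threshold policies explicitly as a step function of $\lambda$ with jumps at the distinct block scores.
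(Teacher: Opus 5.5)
Your proposal is correct and follows essentially the same route as the paper's proof. Both arguments use linearity of the pool mean for a policy relaxing $Q\subseteq B$, the inclusion probability $k/M$ with $\sum_{i\in B}\Delta_i=N\{R(S^{L})-R(S^{H})\}$, and the step-function description of $S_i(\lambda)$ inside the block; for (b) they use the centred-sum form of $\operatorname{Cov}_B$, the identity $\sum_{k=1}^{M}\mathbf 1\{\operatorname{rk}(i)\le k\}=M+1-\operatorname{rk}(i)$, $\mathbb E\operatorname{rk}(i)=(M+1)/2$ under a random order, and Chebyshev's sum inequality, with the only cosmetic difference being that you centre the indicator where the paper centres $\Delta_i$, which gives the same covariance.
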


\noindent\textit{Proof idea.} (a) is linearity of the pool mean; (b) follows from $\sum_{k}\mathbf 1\{\operatorname{rk}(i)\le k\}=M+1-\operatorname{rk}(i)$ and Chebyshev's sum inequality (Supplementary Note~\ref{supp-note:proofs-lemma}).

\suppsubnote{Module reconstruction and frozen reference}{note:data-details}
Wainberg et al.\ built overlapping human co-essential modules from genome-scale CRISPR dependency profiles and assigned functions by leave-one-gene-out GO enrichment of the module [19]. Our reconstruction reproduces the archived Wainberg score file used in this analysis to numerical precision and names, for every row, the winning non-syntenic module and the annotated members carrying the signal: those members are the sources $\mathcal R_{ig}$ (panel 15,794 genes). The Direct reference and its TruePath closure were fixed before analysis (284,700 and 1,851,997 pairs over 19,578 genes; GO release of 1 November 2019, all evidence codes, NOT-qualified annotations removed, closure over \emph{is\_a} and \emph{part\_of}), and every method uses the same 100 split streams, score matrices and held-out truth. Sanger Project Score [35] and Project DRIVE [36] add CRISPR and RNAi evidence, choosing among the same Wainberg modules by correlation with the leave-target-out centroid---no GO information enters that choice---before the same target-excluded enrichment rule (14,885 and 6,283 eligible genes); HAP1 uses the published Global-L2 parent clusters of its genetic-interaction network [37] (15,020 genes). All three share the gene universe and GO reference with Wainberg; only the dependency evidence changes. The neighbourhood-structure family---FunMap random-walk neighbourhoods [38] and the experiments channel of STRING v12.0 [39]---is described in Supplementary Note~\ref{supp-note:neighbourhood}.
\suppnote{Declaration of artificial intelligence assistance}{note:ai}
\textit{Declaration of the use of artificial intelligence.} The authors used Claude Opus 5 (Anthropic) and ChatGPT GPT-5.6 Sol (OpenAI) as aids for language editing of author-written text, consistency checking, and assistance with code writing and documentation. All scientific decisions, interpretations and conclusions were made by the authors. All model-assisted suggestions and code were reviewed and verified by the authors, who take full responsibility for the manuscript.

\clearpage
\begin{table*}[htbp]
\centering
\caption{Operating characteristics of Multilabel, GoDag, GoCo-M, GoCo-N and $\mathcal T$-selected GoCo at $\alpha=0.10$ and $\delta\in\{0.50,0.10\}$, over 100 splits.}
\label{supp-tab:detail}
\scriptsize
\setlength{\tabcolsep}{3.2pt}
\resizebox{\linewidth}{!}{\begin{tabular}{lll rrr rrr r r}
\toprule
$\delta$ & Data set & Method & Gene FDP & $\Pr(\widehat R_{\mathcal E}>\alpha)$ & $\Pr(R>\alpha)$ & Genes w/ correct call & Calls & Correct calls & $\Delta$ correct calls vs GoDag & $\Pr(>\text{GoDag})$ \\
\midrule
0.50 & Wainberg & Multilabel (Direct loss) & 0.0923 & 0.01 & 0.00 & 185.7 & 1614 & 339.6 & +0.0 & -- \\
 &  & GoDag & 0.0923 & 0.01 & 0.00 & 185.7 & 1614 & 339.6 & -- & -- \\
 &  & \textbf{GoCo-M} & 0.0964 & 0.20 & 0.08 & 197.1 & 1955 & 369.1 & +29.4 & 1.00 \\
 &  & \textbf{GoCo-N} & 0.0944 & 0.16 & 0.07 & 129.3 & 1208 & 227.1 & $-$112.5 & 0.01 \\
 &  & \textbf{GoCo}, $\mathcal T$-selected & 0.0964 & 0.20 & 0.08 & 197.1 & 1955 & 369.1 & +29.4 & 1.00 \\
 & Sanger & Multilabel (Direct loss) & 0.0953 & 0.20 & 0.00 & 155.4 & 1919 & 333.0 & $-$20.8 & -- \\
 &  & GoDag & 0.0980 & 0.36 & 0.00 & 160.9 & 2040 & 353.8 & -- & -- \\
 &  & \textbf{GoCo-M} & 0.0980 & 0.42 & 0.20 & 166.9 & 2380 & 380.7 & +26.9 & 0.81 \\
 &  & \textbf{GoCo-N} & 0.0942 & 0.11 & 0.03 & 206.9 & 1905 & 572.2 & +218.4 & 1.00 \\
 &  & \textbf{GoCo}, $\mathcal T$-selected & 0.0943 & 0.11 & 0.04 & 205.7 & 1916 & 567.2 & +213.4 & 0.99 \\
 & DRIVE & Multilabel (Direct loss) & 0.0945 & 0.13 & 0.06 & 96.1 & 1834 & 248.1 & $-$10.5 & -- \\
 &  & GoDag & 0.0965 & 0.28 & 0.28 & 98.7 & 1954 & 258.6 & -- & -- \\
 &  & \textbf{GoCo-M} & 0.0971 & 0.33 & 0.26 & 105.7 & 2296 & 291.7 & +33.1 & 0.89 \\
 &  & \textbf{GoCo-N} & 0.0984 & 0.36 & 0.25 & 78.3 & 1417 & 212.0 & $-$46.6 & 0.12 \\
 &  & \textbf{GoCo}, $\mathcal T$-selected & 0.0974 & 0.34 & 0.25 & 100.6 & 2160 & 277.5 & +18.9 & 0.72 \\
 & HAP1 & Multilabel (Direct loss) & 0.0719 & 0.00 & 0.00 & 77.0 & 847 & 129.4 & +0.0 & -- \\
 &  & GoDag & 0.0719 & 0.00 & 0.00 & 77.0 & 847 & 129.4 & -- & -- \\
 &  & \textbf{GoCo-M} & 0.0943 & 0.19 & 0.05 & 80.8 & 1285 & 137.7 & +8.2 & 1.00 \\
 &  & \textbf{GoCo-N} & 0.0931 & 0.17 & 0.05 & 110.5 & 997 & 228.7 & +99.3 & 1.00 \\
 &  & \textbf{GoCo}, $\mathcal T$-selected & 0.0929 & 0.15 & 0.05 & 107.7 & 1029 & 220.9 & +91.5 & 1.00 \\
\midrule
0.10 & Wainberg & Multilabel (Direct loss) & 0.0923 & 0.01 & 0.00 & 185.3 & 1612 & 338.9 & $-$0.7 & -- \\
 &  & GoDag & 0.0923 & 0.01 & 0.00 & 185.7 & 1614 & 339.6 & -- & -- \\
 &  & \textbf{GoCo-M} & 0.0948 & 0.13 & 0.00 & 195.2 & 1896 & 365.2 & +25.6 & 0.98 \\
 &  & \textbf{GoCo-N} & 0.0899 & 0.05 & 0.00 & 127.0 & 1150 & 222.8 & $-$116.8 & 0.00 \\
 &  & \textbf{GoCo}, $\mathcal T$-selected & 0.0948 & 0.13 & 0.00 & 195.2 & 1896 & 365.2 & +25.6 & 0.98 \\
 & Sanger & Multilabel (Direct loss) & 0.0913 & 0.01 & 0.00 & 150.7 & 1751 & 315.8 & $-$3.3 & -- \\
 &  & GoDag & 0.0923 & 0.12 & 0.00 & 151.7 & 1789 & 319.1 & -- & -- \\
 &  & \textbf{GoCo-M} & 0.0930 & 0.15 & 0.00 & 156.9 & 1983 & 340.0 & +20.9 & 0.95 \\
 &  & \textbf{GoCo-N} & 0.0910 & 0.04 & 0.00 & 202.7 & 1820 & 559.2 & +240.1 & 1.00 \\
 &  & \textbf{GoCo}, $\mathcal T$-selected & 0.0910 & 0.04 & 0.00 & 201.3 & 1824 & 553.4 & +234.3 & 1.00 \\
 & DRIVE & Multilabel (Direct loss) & 0.0913 & 0.13 & 0.00 & 90.0 & 1609 & 227.9 & $-$9.3 & -- \\
 &  & GoDag & 0.0929 & 0.13 & 0.00 & 92.9 & 1718 & 237.2 & -- & -- \\
 &  & \textbf{GoCo-M} & 0.0931 & 0.16 & 0.01 & 99.4 & 1987 & 263.5 & +26.2 & 0.89 \\
 &  & \textbf{GoCo-N} & 0.0923 & 0.18 & 0.01 & 73.3 & 1145 & 194.5 & $-$42.8 & 0.17 \\
 &  & \textbf{GoCo}, $\mathcal T$-selected & 0.0931 & 0.17 & 0.01 & 94.5 & 1850 & 251.5 & +14.3 & 0.75 \\
 & HAP1 & Multilabel (Direct loss) & 0.0719 & 0.00 & 0.00 & 77.0 & 847 & 129.4 & +0.0 & -- \\
 &  & GoDag & 0.0719 & 0.00 & 0.00 & 77.0 & 847 & 129.4 & -- & -- \\
 &  & \textbf{GoCo-M} & 0.0915 & 0.06 & 0.00 & 80.7 & 1252 & 137.6 & +8.1 & 1.00 \\
 &  & \textbf{GoCo-N} & 0.0895 & 0.05 & 0.00 & 108.4 & 965 & 225.1 & +95.7 & 1.00 \\
 &  & \textbf{GoCo}, $\mathcal T$-selected & 0.0893 & 0.04 & 0.00 & 105.7 & 996 & 217.5 & +88.0 & 1.00 \\
\bottomrule
\end{tabular}}
\par\vspace{3pt}\begin{minipage}{\linewidth}\footnotesize Means over the same 100 split streams at $\alpha=0.10$. Gene FDP: held-out gene-level TruePath FDP $\widehat R_{\mathcal E}$. $\Pr(\widehat R_{\mathcal E}>\alpha)$: fraction of splits whose held-out FDP exceeded $\alpha$. $\Pr(R>\alpha)$: fraction of splits in which the TruePath risk of the certified policy over the pool of its own certificate (the whole panel for Multilabel and GoDag, $\mathcal C\cup\mathcal E$ for GoCo) exceeded $\alpha$; for GoDag and GoCo this is the event main-text Proposition~\ref{prop:validity} bounds. Multilabel certifies the Direct loss instead: on that loss its pool failure frequencies were 0.00/0.13/0.06/0.00 (Wainberg/Sanger/DRIVE/HAP1) at $\delta=0.50$ and 0.00/0.00/0.00/0.00 at $\delta=0.10$, and a policy certified on the stricter Direct loss is conservative on TruePath. Correct calls: released gene--GO pairs the frozen reference confirms; genes w/ correct call: held-out genes receiving at least one of them; Calls: released calls of any kind. Multilabel is the released multilabel calibration of Boger et al.\ on the Direct loss, calibrated on 70\% of the panel; GoDag and GoCo-M use the TruePath loss and share the 25-unit grid; GoCo-N walks the top-$\ell$ calls of the ranking-fold model; GoCo, $\mathcal T$-selected is whichever of the two the ranking-fold rule chose in that split (Table~\ref{supp-tab:selection}). $\Delta$ correct calls: paired mean difference in correct calls from GoDag; $\Pr(>\text{GoDag})$: fraction of splits in which the method released more correct calls than GoDag.\end{minipage}
\end{table*}
 \begin{table*}[htbp]
\centering
\caption{Error-target sweep at $\delta=0.10$: mean held-out gene-level TruePath FDP and reference-supported calls at $\alpha\in\{0.05,0.10,0.20\}$ over 100 splits.}
\label{supp-tab:sweep10}
\scriptsize
\begin{tabular}{ll rr rr rr}
\toprule
Data set & Method & \multicolumn{2}{c}{$\alpha=0.05$} & \multicolumn{2}{c}{$\alpha=0.10$} & \multicolumn{2}{c}{$\alpha=0.20$} \\
 & & FDP & Correct & FDP & Correct & FDP & Correct \\
 & & & calls & & calls & & calls \\ \midrule
\multicolumn{8}{l}{\emph{Module structure: GoCo-M (development: Wainberg)}} \\
Wainberg & Multilabel (Direct loss) & 0.0359 & 183.1 & 0.0923 & 338.9 & 0.1699 & 518.9 \\
 & GoDag & 0.0359 & 183.1 & 0.0923 & 339.6 & 0.1699 & 518.9 \\
 & \textbf{GoCo-M} & 0.0438 & 213.2 & 0.0948 & 365.2 & 0.1882 & 539.2 \\
\addlinespace[2pt]
Sanger & Multilabel (Direct loss) & 0.0450 & 198.2 & 0.0913 & 315.8 & 0.1871 & 1136.0 \\
 & GoDag & 0.0461 & 203.9 & 0.0923 & 319.1 & 0.1895 & 1163.2 \\
 & \textbf{GoCo-M} & 0.0463 & 213.7 & 0.0930 & 340.0 & 0.1907 & 1346.8 \\
\addlinespace[2pt]
DRIVE & Multilabel (Direct loss) & 0.0413 & 90.0 & 0.0913 & 227.9 & 0.1823 & 691.4 \\
 & GoDag & 0.0414 & 90.1 & 0.0929 & 237.2 & 0.1844 & 702.4 \\
 & \textbf{GoCo-M} & 0.0436 & 94.2 & 0.0931 & 263.5 & 0.1841 & 702.0 \\
\addlinespace[2pt]
HAP1 & Multilabel (Direct loss) & 0.0436 & 96.0 & 0.0719 & 129.4 & 0.1353 & 160.2 \\
 & GoDag & 0.0436 & 96.1 & 0.0719 & 129.4 & 0.1353 & 160.2 \\
 & \textbf{GoCo-M} & 0.0450 & 102.4 & 0.0915 & 137.6 & 0.1756 & 187.7 \\
\midrule
\multicolumn{8}{l}{\emph{Neighbourhood structure: GoCo-N (development: FunMap)}} \\
FunMap & Multilabel (Direct loss, shared grid) & 0.0312 & 727.9 & 0.0583 & 1004.3 & 0.1543 & 1869.5 \\
 & GoDag, shared grid & 0.0312 & 727.9 & 0.0901 & 1261.2 & 0.1547 & 1873.0 \\
 & GoDag, refined grid & 0.0448 & 869.7 & 0.0941 & 1279.8 & 0.1900 & 2164.4 \\
 & \textbf{GoCo-N} & 0.0431 & 1016.1 & 0.0878 & 1615.8 & 0.1823 & 2615.8 \\
\addlinespace[2pt]
STRING & Multilabel (Direct loss, shared grid) & 0.0385 & 423.2 & 0.0822 & 815.8 & 0.1458 & 1383.8 \\
 & GoDag, shared grid & 0.0386 & 424.9 & 0.0828 & 828.1 & 0.1464 & 1387.9 \\
 & GoDag, refined grid & 0.0454 & 473.1 & 0.0928 & 909.2 & 0.1867 & 1711.7 \\
 & \textbf{GoCo-N} & 0.0429 & 621.4 & 0.0894 & 1151.1 & 0.1830 & 2243.0 \\
\bottomrule
\end{tabular}
\par\vspace{3pt}\begin{minipage}{\linewidth}\footnotesize FDP: mean held-out gene-level TruePath false discovery proportion. Correct calls: released gene--GO pairs the frozen reference confirms. Each structure is shown with the instantiation prespecified for it. Module-family GoCo-M and GoDag share the 25-unit grid. In the neighbourhood family, Multilabel and GoDag are shown on the shared coarse grid, with refined-grid GoDag separately; GoCo-N uses a distinct call-ordered path. Grid refinement does not change the fixed-sequence multiplicity argument.\end{minipage}
\end{table*}
 \begin{table*}[htbp]
\centering
\caption{Certification, pool and held-out risks with yield diagnostics for GoDag and GoCo-M at $\alpha=0.10$ and $\delta\in\{0.50,0.10\}$, averaged over 100 splits. GoCo-M certifies on $\mathcal C$ (60\%); GoDag calibrates on $\mathcal T\cup\mathcal C$ (70\%).}
\label{supp-tab:riskdecomp}
\scriptsize
\resizebox{\linewidth}{!}{\begin{tabular}{l l l rrr rr rrrr}
\toprule
$\delta$ & Data set & Method & Calib.\ mean & $R$ & $\widehat R_{\mathcal E}$ & $\Pr(R>\alpha)$ & $\Pr(\widehat R_{\mathcal E}>\alpha)$ & Genes w/ calls & Calls & Correct calls & Terms per supp.\ gene \\ \midrule
0.50 & Wainberg & GoDag & 0.0924 & 0.0924 & 0.0923 & 0.00 & 0.01 & 561 & 1614 & 339.6 & 1.83 \\
0.50 & Wainberg & GoCo-M & 0.0964 & 0.0964 & 0.0964 & 0.08 & 0.20 & 575 & 1955 & 369.1 & 1.87 \\
0.50 & Sanger & GoDag & 0.0984 & 0.0983 & 0.0980 & 0.00 & 0.36 & 530 & 2040 & 353.8 & 2.20 \\
0.50 & Sanger & GoCo-M & 0.0986 & 0.0984 & 0.0980 & 0.20 & 0.42 & 529 & 2380 & 380.7 & 2.28 \\
0.50 & DRIVE & GoDag & 0.0955 & 0.0958 & 0.0965 & 0.28 & 0.28 & 215 & 1954 & 258.6 & 2.62 \\
0.50 & DRIVE & GoCo-M & 0.0962 & 0.0965 & 0.0971 & 0.26 & 0.33 & 215 & 2296 & 291.7 & 2.76 \\
0.50 & HAP1 & GoDag & 0.0720 & 0.0720 & 0.0719 & 0.00 & 0.00 & 377 & 847 & 129.4 & 1.68 \\
0.50 & HAP1 & GoCo-M & 0.0939 & 0.0941 & 0.0943 & 0.05 & 0.19 & 475 & 1285 & 137.7 & 1.70 \\
\midrule
0.10 & Wainberg & GoDag & 0.0924 & 0.0924 & 0.0923 & 0.00 & 0.01 & 561 & 1614 & 339.6 & 1.83 \\
0.10 & Wainberg & GoCo-M & 0.0946 & 0.0947 & 0.0948 & 0.00 & 0.13 & 568 & 1896 & 365.2 & 1.87 \\
0.10 & Sanger & GoDag & 0.0925 & 0.0924 & 0.0923 & 0.00 & 0.12 & 502 & 1789 & 319.1 & 2.10 \\
0.10 & Sanger & GoCo-M & 0.0933 & 0.0932 & 0.0930 & 0.00 & 0.15 & 505 & 1983 & 340.0 & 2.17 \\
0.10 & DRIVE & GoDag & 0.0918 & 0.0921 & 0.0929 & 0.00 & 0.13 & 208 & 1718 & 237.2 & 2.55 \\
0.10 & DRIVE & GoCo-M & 0.0921 & 0.0925 & 0.0931 & 0.01 & 0.16 & 208 & 1987 & 263.5 & 2.64 \\
0.10 & HAP1 & GoDag & 0.0720 & 0.0720 & 0.0719 & 0.00 & 0.00 & 377 & 847 & 129.4 & 1.68 \\
0.10 & HAP1 & GoCo-M & 0.0912 & 0.0913 & 0.0915 & 0.00 & 0.06 & 462 & 1252 & 137.6 & 1.70 \\
\bottomrule
\end{tabular}}
\end{table*} \begin{table*}[!htbp]
\centering
\caption{Block diagnostics at $\alpha=0.10$ for grid steps in which GoCo-M policies were certified, averaged over 100 splits. Spearman correlations compare the realised loss increment $\Delta_i$ with the GoCo ordering statistic $u_i$ and with the negative mean added-call score.}
\label{supp-tab:blockdiag}
\scriptsize
\resizebox{\linewidth}{!}{\begin{tabular}{l l r r r r r r r r}
\toprule
Data set & Step & Affected & Ranking & mean $\Delta$ & sd $\Delta$ & full mass & $\rho(u,\Delta)$ & $\rho(-\text{score},\Delta)$ & $\Pr(\rho_u>0)$ \\ \midrule
Wainberg & $800\to 775$ & 1267 & 127 & 0.591 & 0.484 & 1.00 & 0.725 & -0.077 & 1.00 \\
Sanger & $450\to 425$ & 454 & 45 & 0.203 & 0.408 & 1.00 & 0.560 & 0.041 & 1.00 \\
Sanger & $425\to 400$ & 211 & 21 & 0.097 & 0.316 & 1.00 & 0.325 & 0.146 & 1.00 \\
Sanger & $400\to 375$ & 1164 & 117 & 0.163 & 0.363 & 1.00 & 0.496 & 0.020 & 1.00 \\
DRIVE & $275\to 250$ & 419 & 41 & 0.066 & 0.237 & 1.00 & 0.222 & 0.120 & 1.00 \\
DRIVE & $250\to 225$ & 503 & 50 & 0.103 & 0.305 & 1.00 & 0.318 & 0.048 & 1.00 \\
HAP1 & $475\to 450$ & 1021 & 102 & 0.762 & 0.421 & 1.00 & 0.703 & 0.433 & 1.00 \\
\bottomrule
\end{tabular}}
\end{table*} \begin{table*}[!htbp]
\centering
\caption{Additional calls released by GoCo-M but not GoDag in Wainberg split~0 at $\alpha=0.10$, $\delta=0.50$ and $q=0.4$. Shown are the eight co-essential modules contributing the most additional calls; $\ast$ marks terms supported by the frozen TruePath reference.}
\label{supp-tab:casestudy}
\scriptsize
\resizebox{\linewidth}{!}{\begin{tabular}{r p{0.26\textwidth} p{0.16\textwidth} r r p{0.34\textwidth}}
\toprule
Module & Members & Genes admitted & Calls & Correct & Terms transferred \\ \midrule
5071 & \texttt{CHEK2, MDM2, PPM1D, TP53, USP7} & \texttt{MDM2, TP53, USP7} & 18 & 2 & GO:0001046 core promoter sequence-specific DNA binding; GO:0001221 transcription cofactor binding; GO:0002309 T cell proliferation involved in immune response; GO:0010216 maintenance of DNA methylation; GO:0070245 positive regulation of thymocyte apoptotic process; GO:0071494 cellular response to UV-C $\ast$; GO:1901525 negative regulation of mitophagy; \dots \\
\addlinespace
792 & \texttt{CDS2, CHMP3, CHMP4B, INPP5A, VRK1} & \texttt{CDS2, CHMP3, INPP5A} & 11 & 1 & GO:0000815 ESCRT III complex; GO:0090148 membrane fission; GO:0090611 ubiquitin-independent protein catabolic process via the multivesicular body sorting pathway; GO:1902188 positive regulation of viral release from host cell; GO:0050792 regulation of viral process $\ast$ \\
\addlinespace
4493 & \texttt{CDC42, CTNNA1, MLLT4, PARD6B, RAPGEF2, TP53BP2} & \texttt{CTNNA1, TP53BP2} & 11 & 0 & GO:0003161 cardiac conduction system development; GO:0017034 Rap guanyl-nucleotide exchange factor activity; GO:0021884 forebrain neuron development; GO:0060501 positive regulation of epithelial cell proliferation involved in lung morphogenesis; GO:2000481 positive regulation of cAMP-dependent protein kinase activity; GO:0071681 cellular response to indole-3-methanol \\
\addlinespace
3178 & \texttt{DDX21, DNAJC2, HSPA14, NCL, PINX1} & \texttt{DDX21, DNAJC2, NCL} & 11 & 0 & GO:0006450 regulation of translational fidelity; GO:0044547 DNA topoisomerase binding; GO:1902570 protein localization to nucleolus; GO:1904751 positive regulation of protein localization to nucleolus \\
\addlinespace
4091 & \texttt{BIRC2, CYLD, N4BP1, TRAF2, TRAF3} & \texttt{BIRC2, N4BP1, TRAF2} & 9 & 4 & GO:0031996 thioesterase binding; GO:0035631 CD40 receptor complex $\ast$; GO:0060544 regulation of necroptotic process $\ast$; GO:0097300 programmed necrotic cell death $\ast$; GO:0034121 regulation of toll-like receptor signaling pathway; GO:1902527 positive regulation of protein monoubiquitination \\
\addlinespace
3710 & \texttt{IL6ST, JAK1, LIFR, SLC39A1, STAT3} & \texttt{IL6ST, JAK1} & 8 & 4 & GO:0034112 positive regulation of homotypic cell-cell adhesion; GO:0060019 radial glial cell differentiation; GO:0070106 interleukin-27-mediated signaling pathway $\ast$; GO:0070757 interleukin-35-mediated signaling pathway $\ast$; GO:0150105 protein localization to cell-cell junction \\
\addlinespace
5036 & \texttt{CDKN1A, TP53, TP53BP1, UBE2K, USP28} & \texttt{CDKN1A, TP53} & 8 & 0 & GO:0001046 core promoter sequence-specific DNA binding; GO:0001221 transcription cofactor binding; GO:0002309 T cell proliferation involved in immune response; GO:0010994 free ubiquitin chain polymerization; GO:0070245 positive regulation of thymocyte apoptotic process; GO:1901525 negative regulation of mitophagy; GO:1902749 regulation of cell cycle G2/M phase transition \\
\addlinespace
4150 & \texttt{DDX21, HSPA14, NCL, NOLC1, PINX1} & \texttt{DDX21, NCL} & 8 & 0 & GO:0001042 RNA polymerase I core binding; GO:0007000 nucleolus organization; GO:0031428 box C/D snoRNP complex; GO:0042306 regulation of protein import into nucleus \\
\bottomrule
\end{tabular}}
\end{table*} \begin{table*}[!htbp]
\centering
\caption{Comparison of pooled call-level FDR and gene-level FDP estimands at $\alpha=0.10$, $\delta=0.50$, over the same 100 splits.}
\label{supp-tab:estimands}
\scriptsize
\begin{tabular}{l l rrrr}
\toprule
Data set & Procedure / target & Calls released & Correct calls & Realised pooled call FDP & Genes annotated \\ \midrule
Wainberg & Conformal selection (pooled-call FDR) & 0.0 & 0.0 & 0.000 & 0.0 \\
 & GoCo-M (gene-level FDP target) & 1955 & 369.1 & 0.811 & 575 \\
\addlinespace
Sanger & Conformal selection (pooled-call FDR) & 5.6 & 4.8 & 0.045 & 5.4 \\
 & GoCo-M (gene-level FDP target) & 2380 & 380.7 & 0.839 & 529 \\
\addlinespace
DRIVE & Conformal selection (pooled-call FDR) & 3.6 & 2.8 & 0.078 & 2.5 \\
 & GoCo-M (gene-level FDP target) & 2296 & 291.7 & 0.872 & 215 \\
\addlinespace
HAP1 & Conformal selection (pooled-call FDR) & 0.2 & 0.1 & 0.006 & 0.1 \\
 & GoCo-M (gene-level FDP target) & 1285 & 137.7 & 0.893 & 475 \\
\bottomrule
\end{tabular}
\end{table*}
 
\begin{table*}[htbp]
\centering
\caption{Source-object structures, observed score granularity and ranking-fold path choice. Rows without a ranking-fold verdict are structural examples.}
\label{supp-tab:structures}
\tiny
\setlength{\tabcolsep}{3pt}
\renewcommand{\arraystretch}{1.15}
\begin{tabular}{@{}p{0.145\textwidth} p{0.135\textwidth} p{0.05\textwidth} p{0.085\textwidth} p{0.105\textwidth} p{0.135\textwidth} p{0.225\textwidth}@{}}
\toprule
Predictor family & Source set $\mathcal R_{ig}$ & Evidence & Score ties & Structure & A cutoff misses & Ranking-fold verdict \\
\midrule
Co-essential modules (Wainberg; Sanger and DRIVE via the same catalogue) & annotated members of the gene's module & shared & exact, up to 3,361 pairs & module-shared & the interior of every tied block & Wainberg: GoCo-M (100\%); Sanger: GoCo-N (96\%); DRIVE: GoCo-M (82\%) \\
Protein complexes (CORUM, BioPlex) & annotated members of the complex & shared & possible exact & module-shared & the interior of exact ties & structural \\
Co-regulation clusters (ProteomeHD) & annotated members of the co-regulated group & shared & possible exact & module-shared & the interior of exact ties & structural \\
Community / partition detection (HAP1 Global-L2 clusters) & annotated members of the community & shared & exact, up to 14,245 pairs & module-shared & the interior of every tied block & HAP1: GoCo-N (89\%) \\
Random-walk neighbourhoods (FunMap) & annotated neighbours, proximity-ranked & gene-specific & none material & neighbourhood & little: a refined grid recovers granularity & FunMap: GoCo-N (92\%) \\
Interaction neighbourhoods (STRING) & annotated interactors, ranked & gene-specific & none material & neighbourhood & little: a refined grid recovers granularity & STRING: GoCo-N (76\%) \\
Network propagation / diffusion & diffusion-weighted neighbours & gene-specific & predictor-dependent & neighbourhood & to be assessed & structural \\
Multi-network integration (GeneMANIA) & (network, neighbour) pairs with weights & gene-specific & predictor-dependent & neighbourhood & to be assessed & structural \\
Sequence-homology hit lists & homologues with alignment scores & gene-specific & predictor-dependent & neighbourhood & to be assessed & structural \\
Embedding $k$-nearest neighbours & nearest neighbours with distances & gene-specific & predictor-dependent & neighbourhood & to be assessed & structural \\
\bottomrule
\end{tabular}
\end{table*}
 \begin{table*}[htbp]
\centering
\caption{Neighbourhood-structure family: gain over the exhaustive matched-risk frontier of the global path, over 100 splits. For each arm and split, the frontier is the best yield among all global score cutoffs, including abstention, whose realised held-out risk does not exceed that arm's realised risk.}
\label{supp-tab:frontier}
\scriptsize
\begin{tabular}{l l l rr rr}
\toprule
Data & $\alpha$ & $\delta$ & \multicolumn{2}{c}{GoCo-M} & \multicolumn{2}{c}{GoCo-N} \\
 & & & Correct calls vs.\ frontier & Splits ahead & Correct calls vs.\ frontier & Splits ahead \\
\midrule
FunMap & 0.05 & 0.10 & +101 & 1.00 & +155 & 0.95 \\
 & 0.05 & 0.50 & +99 & 1.00 & +175 & 0.97 \\
 & 0.10 & 0.10 & +196 & 1.00 & +379 & 1.00 \\
 & 0.10 & 0.50 & +190 & 1.00 & +393 & 1.00 \\
 & 0.20 & 0.10 & +164 & 1.00 & +519 & 1.00 \\
 & 0.20 & 0.50 & +186 & 1.00 & +535 & 1.00 \\
\midrule
STRING & 0.05 & 0.10 & +46 & 1.00 & +172 & 1.00 \\
 & 0.05 & 0.50 & +40 & 1.00 & +186 & 1.00 \\
 & 0.10 & 0.10 & +138 & 1.00 & +268 & 0.97 \\
 & 0.10 & 0.50 & +135 & 1.00 & +282 & 0.98 \\
 & 0.20 & 0.10 & +197 & 1.00 & +561 & 0.99 \\
 & 0.20 & 0.50 & +205 & 1.00 & +599 & 0.99 \\
\bottomrule
\end{tabular}
\end{table*}
 \begin{table*}[htbp]
\centering
\caption{Ranking-fold selection between GoCo-M and GoCo-N over 100 splits at $\delta=0.10$. In each split, the rule selects the instantiation with larger ranking-fold supported yield at the last candidate whose ranking-fold mean loss does not exceed $\alpha$.}
\label{supp-tab:selection}
\scriptsize
\begin{tabular}{l l l rr r rrr r r}
\toprule
Predictor & Evidence & $\alpha$ & \multicolumn{2}{c}{$\mathcal T$ yield} & Rule chooses & \multicolumn{3}{c}{Held-out supported} & $\Pr(\text{N}>\text{M})$ & Rule agrees \\
 & & & GoCo-M & GoCo-N & GoCo-N & GoCo-M & GoCo-N & selected & held-out & with winner \\
\midrule
Wainberg & module-shared & 0.05 & 68.5 & 55.7 & 0.16 & 213.2 & 168.5 & 206.1 & 0.07 & 0.81 \\
 &  & 0.10 & 117.7 & 73.3 & 0.00 & 365.2 & 222.8 & 365.2 & 0.00 & 1.00 \\
 &  & 0.20 & 176.7 & 131.2 & 0.00 & 539.2 & 392.8 & 539.2 & 0.00 & 1.00 \\
\addlinespace[2pt]
Sanger & module-shared & 0.05 & 77.5 & 114.3 & 0.85 & 213.7 & 351.5 & 331.3 & 0.99 & 0.84 \\
 &  & 0.10 & 127.7 & 185.3 & 0.96 & 340.0 & 559.2 & 553.4 & 1.00 & 0.96 \\
 &  & 0.20 & 478.9 & 495.1 & 0.61 & 1346.8 & 1430.8 & 1411.7 & 0.78 & 0.65 \\
\addlinespace[2pt]
DRIVE & module-shared & 0.05 & 36.3 & 37.8 & 0.52 & 94.2 & 97.4 & 96.8 & 0.52 & 0.54 \\
 &  & 0.10 & 98.0 & 72.0 & 0.18 & 263.5 & 194.5 & 251.5 & 0.02 & 0.80 \\
 &  & 0.20 & 231.1 & 225.9 & 0.00 & 702.0 & 701.5 & 702.0 & 0.00 & 1.00 \\
\addlinespace[2pt]
HAP1 & module-shared & 0.05 & 32.9 & 49.3 & 0.86 & 102.4 & 163.3 & 156.9 & 0.97 & 0.83 \\
 &  & 0.10 & 44.6 & 67.6 & 0.89 & 137.6 & 225.1 & 217.5 & 1.00 & 0.89 \\
 &  & 0.20 & 61.8 & 90.6 & 0.90 & 187.7 & 299.3 & 290.2 & 1.00 & 0.90 \\
\midrule
FunMap & neighbourhood & 0.05 & 319.4 & 349.8 & 0.73 & 954.5 & 1016.1 & 996.7 & 0.72 & 0.61 \\
 &  & 0.10 & 488.3 & 552.9 & 0.92 & 1474.4 & 1615.8 & 1608.1 & 0.89 & 0.83 \\
 &  & 0.20 & 759.2 & 885.5 & 0.96 & 2157.9 & 2615.8 & 2598.2 & 1.00 & 0.96 \\
\addlinespace[2pt]
STRING & neighbourhood & 0.05 & 172.1 & 221.5 & 0.89 & 503.6 & 621.4 & 610.0 & 0.96 & 0.86 \\
 &  & 0.10 & 350.2 & 395.6 & 0.76 & 1021.1 & 1151.1 & 1124.5 & 0.89 & 0.69 \\
 &  & 0.20 & 642.1 & 766.6 & 0.93 & 1732.5 & 2243.0 & 2201.7 & 0.99 & 0.92 \\
\bottomrule
\end{tabular}
\end{table*}
 
\begin{table*}[htbp]
\centering
\caption{Comparison with validation $F_{\max}$ and top-$k$ release rules on the same 100 split streams and held-out folds. The $F_{\max}$ threshold is fitted on the 70\% calibration sample using the CAFA protein-centric $F_1$ measure; GoCo is shown at $\alpha=0.10$, $\delta=0.10$.}
\label{supp-tab:current}
\scriptsize
\setlength{\tabcolsep}{4pt}
\resizebox{\linewidth}{!}{\begin{tabular}{l l r r r r r r r}
\toprule
Predictor & Release rule & Threshold & Validation $F$ & Gene FDP & $\Pr(\widehat R_{\mathcal E}>0.10)$ & Genes with a call & Calls & Correct calls \\
\midrule
Wainberg & Validation $F_{\max}$ threshold & 602 & 0.002 & 0.231 & 1.00 & 1232 & 4941 & 559 \\
 & Top-1 GO term per gene & -- & -- & 0.223 & 1.00 & 1232 & 1232 & 175 \\
 & Top-3 GO terms per gene & -- & -- & 0.228 & 1.00 & 1232 & 2759 & 370 \\
& \textbf{GoCo, certified at $\alpha=0.10$} & nested policy & -- & \textbf{0.095} & \textbf{0.13} & 568 & 1896 & \textbf{365} \\
\midrule
Sanger & Validation $F_{\max}$ threshold & 100 & 0.006 & 0.224 & 1.00 & 1094 & 21840 & 1685 \\
 & Top-1 GO term per gene & -- & -- & 0.211 & 1.00 & 1094 & 1094 & 150 \\
 & Top-3 GO terms per gene & -- & -- & 0.214 & 1.00 & 1094 & 3119 & 401 \\
& \textbf{GoCo, certified at $\alpha=0.10$} & nested policy & -- & \textbf{0.093} & \textbf{0.15} & 505 & 1983 & \textbf{340} \\
\midrule
DRIVE & Validation $F_{\max}$ threshold & 100 & 0.004 & 0.184 & 1.00 & 378 & 8370 & 702 \\
 & Top-1 GO term per gene & -- & -- & 0.173 & 1.00 & 378 & 378 & 51 \\
 & Top-3 GO terms per gene & -- & -- & 0.175 & 1.00 & 378 & 1044 & 141 \\
& \textbf{GoCo, certified at $\alpha=0.10$} & nested policy & -- & \textbf{0.093} & \textbf{0.16} & 208 & 1987 & \textbf{263} \\
\midrule
HAP1 & Validation $F_{\max}$ threshold & 100 & 0.004 & 0.986 & 1.00 & 4506 & 102090 & 1341 \\
 & Top-1 GO term per gene & -- & -- & 0.979 & 1.00 & 4506 & 4506 & 94 \\
 & Top-3 GO terms per gene & -- & -- & 0.982 & 1.00 & 4506 & 13508 & 240 \\
& \textbf{GoCo, certified at $\alpha=0.10$} & nested policy & -- & \textbf{0.091} & \textbf{0.06} & 462 & 1252 & \textbf{138} \\
\midrule
FunMap & Validation $F_{\max}$ threshold & 1 & 0.069 & 0.787 & 1.00 & 2912 & 80278 & 11828 \\
 & Top-1 GO term per gene & -- & -- & 0.599 & 1.00 & 2912 & 2912 & 1057 \\
 & Top-3 GO terms per gene & -- & -- & 0.642 & 1.00 & 2912 & 8526 & 2746 \\
& \textbf{GoCo, certified at $\alpha=0.10$} & nested policy & -- & \textbf{0.088} & \textbf{0.05} & 1084 & 2166 & \textbf{1616} \\
\midrule
STRING & Validation $F_{\max}$ threshold & 1 & 0.098 & 0.828 & 1.00 & 2375 & 156273 & 19717 \\
 & Top-1 GO term per gene & -- & -- & 0.591 & 1.00 & 2375 & 2375 & 946 \\
 & Top-3 GO terms per gene & -- & -- & 0.635 & 1.00 & 2375 & 7118 & 2517 \\
& \textbf{GoCo, certified at $\alpha=0.10$} & nested policy & -- & \textbf{0.089} & \textbf{0.06} & 806 & 1622 & \textbf{1151} \\
\bottomrule
\end{tabular}}
\end{table*}
 
\begin{figure*}[!t]
\centering
\includegraphics[width=0.99\textwidth]{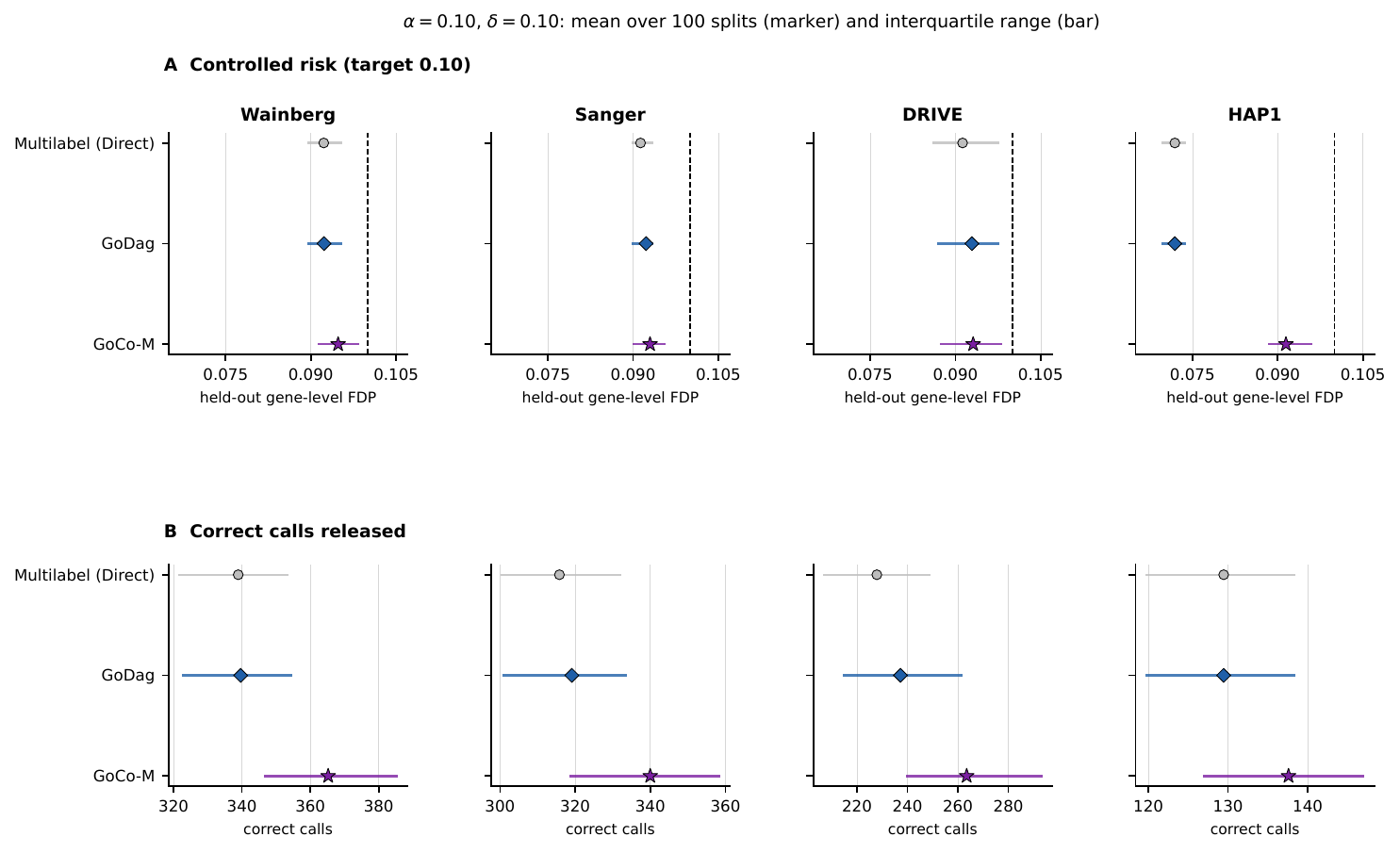}
\caption{Primary comparison of Multilabel, GoDag and GoCo-M at $\alpha=0.10$, $\delta=0.10$: held-out gene-level TruePath FDP and reference-supported calls over 100 splits.}\label{supp-fig:supp-primary}
\end{figure*}

\clearpage
\begin{table*}[!t]
\centering
\caption{Prespecified constants of GoCo. Module-family choices were fixed on Wainberg and the neighbourhood-family instantiation on FunMap; none uses certification- or evaluation-fold labels, and none affects validity (main-text Proposition~\ref{prop:validity}).}
\label{supp-tab:constants}
\scriptsize
\begin{tabular}{p{0.24\textwidth} p{0.70\textwidth}}
\toprule
Constant & Value \\
\midrule
Split & 10\% $\calT$ / 60\% $\calC$ / 30\% $\calE$ by seeded permutation ($20260905+s$ in the module family and $20260910+s$ in the neighbourhood family, $s=0,\ldots,99$); the instantiations were fixed on streams $0$--$9$ of Wainberg and FunMap, respectively; $n_{\calT}=\operatorname{round}(0.10n)$ and $n_{\calT}+m=\operatorname{round}(0.70n)$ with round-half-to-even; module-family global methods calibrate on $\calT\cup\calC$ (70\%), whereas all neighbourhood-family methods certify on $\calC$ (60\%) \\
GoCo-M global grid $\lambda_1>\cdots>\lambda_J$ & 3000, 2500, 2000, 1600, 1400, 1200, 1100, 1050, then 1000 downward in steps of 25 (to 600 in Wainberg, $J=25$; to 100 in Sanger, DRIVE and HAP1, $J=45$); identical for GoDag and GoCo-M in the module family \\
GoCo-M block insertion trigger & ranking-fold mean loss at $\lambda_{j+1}$ exceeds $0.8\alpha$ (i.e.\ 0.08 at $\alpha=0.10$) and $|\calT_j|\ge5$ \\
GoCo-M admission grid $\mathcal Q$ & 0.01, 0.02, 0.03, 0.04, 0.05, 0.075, 0.10, 0.125, 0.15, 0.175, 0.20, 0.225, 0.25, 0.275, 0.30, 0.35, 0.40, 0.50, 0.60, 0.75 ($q=1$ is the next global threshold) \\
Shrinkage weight & $\kappa=1$ in Equation~\eqref{supp-eq:sourcerisk}; two pseudo-calls for the per-call rate features \\
Ranking statistic & GoCo-M: Equation~\eqref{supp-eq:utility}; GoCo-N: individual calls ordered by $\widehat p_{ig}$, with smaller predicted unsupported probability entering first; both use the cross-fitted logistic per-call model ($C=0.5$, standardised features, two hash-defined halves of $\calT$) \\
Instantiation & GoCo-M (partial admission within grid steps) for evidence shared between genes, fixed on Wainberg; GoCo-N (nested top-$\ell$ calls by $\widehat p_{ig}$) for per-gene neighbourhood evidence, fixed on FunMap \\
Tie-break & GoCo-M: deterministic lexicographic ordering by $(u_i,h_i,\text{gene identifier})$, where $h_i$ is the SHA-256-derived hash of the gene identifier; GoCo-N: ties in $\widehat p_{ig}$ are broken by the frozen gene hash and then by call index; exact comparisons, no numerical tolerance \\
Certification & $\alpha=0.10$ (sensitivity 0.05, 0.20); $\delta=0.10$ and $\delta=0.50$; $\shat_k$ floored at $10^{-6}$ \\
GO reference & Direct annotations and TruePath closure frozen at one release for all four dependency data sets (GO release 2019-11-01: \texttt{goa\_human.gaf} and \texttt{go-basic.obo} from the GO archive; all evidence codes; NOT-qualified annotations and root terms removed; closure over \emph{is\_a} and \emph{part\_of}); GO archive 2024-09-08 for FunMap and STRING \\
\bottomrule
\end{tabular}
\end{table*}
\begin{table*}[!t]
\centering
\caption{Methods compared. All are post-prediction release rules applied to the same frozen scores and split streams; columns distinguish the certified loss, candidate path, calibration sample and use of GO hierarchy and predictor-native evidence.}
\label{supp-tab:methods}
\scriptsize
\setlength{\tabcolsep}{4pt}
\resizebox{\linewidth}{!}{%
\begin{tabular}{>{\raggedright\arraybackslash}p{0.15\linewidth}>{\raggedright\arraybackslash}p{0.12\linewidth}>{\raggedright\arraybackslash}p{0.31\linewidth}>{\raggedright\arraybackslash}p{0.15\linewidth}>{\raggedright\arraybackslash}p{0.07\linewidth}>{\raggedright\arraybackslash}p{0.13\linewidth}}
\toprule
Method & Loss certified & Candidate path & Calibration sample & GO hierarchy & Predictor-native evidence \\
\midrule
$F_{\max}$ threshold & none & global cutoff maximising $F_1$ & (threshold fitted on 70\%) & no & no \\
Top-$k$ per gene & none & the $k$ highest-scoring terms of each gene & none & no & no \\
Multilabel [24] & Direct FDP & global grid $\lambda_1>\cdots>\lambda_J$ & 70\% (module family); $\calC$ (60\%) in neighbourhood family & no & no \\
Generic calibrators (nine rules) & Direct FDP & global grid & 70\% (module family); $\calC$ (60\%) in neighbourhood family & no & no \\
GoDag & TruePath FDP, Eq.~\eqref{eq:fdp} & global grid & 70\% (module family); $\calC$ (60\%) in neighbourhood family & yes & no \\
GoCo-M & TruePath FDP & grid, plus partial admission of affected genes within each step, ordered by $u_i$ (Eq.~\eqref{supp-eq:utility}) & $\calC$ (60\%); ordering fitted on $\calT$ (10\%) & yes & yes, at gene resolution within a step \\
GoCo-N & TruePath FDP & top-$\ell$ calls ordered by $\widehat p_{ig}$ & $\calC$; model fitted on $\calT$ & yes & yes, at call resolution \\
\bottomrule
\end{tabular}}
\end{table*}
\begin{table*}[!t]
\centering
\caption{Error-target sweep at $\delta=0.50$: mean held-out gene-level FDP and supported GO-term yield at $\alpha\in\{0.05,0.10,0.20\}$ for every predictor, 100 splits.}
\label{supp-tab:sweep50}
\scriptsize
\begin{tabular}{ll rr rr rr}
\toprule
Data set & Method & \multicolumn{2}{c}{$\alpha=0.05$} & \multicolumn{2}{c}{$\alpha=0.10$} & \multicolumn{2}{c}{$\alpha=0.20$} \\
 & & FDP & Correct & FDP & Correct & FDP & Correct \\
 & & & calls & & calls & & calls \\ \midrule
\multicolumn{8}{l}{\emph{Module structure: GoCo-M (development: Wainberg)}} \\
Wainberg & Multilabel (Direct loss) & 0.0359 & 183.1 & 0.0923 & 339.6 & 0.1699 & 518.9 \\
 & GoDag & 0.0359 & 183.1 & 0.0923 & 339.6 & 0.1699 & 518.9 \\
 & \textbf{GoCo-M} & 0.0466 & 214.8 & 0.0964 & 369.1 & 0.1939 & 539.6 \\
\addlinespace[2pt]
Sanger & Multilabel (Direct loss) & 0.0474 & 212.1 & 0.0953 & 333.0 & 0.1895 & 1163.2 \\
 & GoDag & 0.0484 & 218.2 & 0.0980 & 353.8 & 0.1895 & 1163.2 \\
 & \textbf{GoCo-M} & 0.0491 & 234.5 & 0.0980 & 380.7 & 0.1913 & 1360.9 \\
\addlinespace[2pt]
DRIVE & Multilabel (Direct loss) & 0.0444 & 92.3 & 0.0945 & 248.1 & 0.1844 & 702.4 \\
 & GoDag & 0.0466 & 94.4 & 0.0965 & 258.6 & 0.1844 & 702.4 \\
 & \textbf{GoCo-M} & 0.0482 & 99.4 & 0.0971 & 291.7 & 0.1844 & 702.4 \\
\addlinespace[2pt]
HAP1 & Multilabel (Direct loss) & 0.0436 & 96.1 & 0.0719 & 129.4 & 0.1353 & 160.2 \\
 & GoDag & 0.0436 & 96.1 & 0.0719 & 129.4 & 0.1353 & 160.2 \\
 & \textbf{GoCo-M} & 0.0459 & 102.7 & 0.0943 & 137.7 & 0.1872 & 188.3 \\
\midrule
\multicolumn{8}{l}{\emph{Neighbourhood structure: GoCo-N (development: FunMap)}} \\
FunMap & Multilabel (Direct loss, shared grid) & 0.0312 & 727.9 & 0.0741 & 1129.7 & 0.1543 & 1869.5 \\
 & GoDag, shared grid & 0.0312 & 727.9 & 0.0901 & 1261.2 & 0.1722 & 1995.4 \\
 & GoDag, refined grid & 0.0486 & 911.0 & 0.0986 & 1324.0 & 0.1968 & 2189.4 \\
 & \textbf{GoCo-N} & 0.0462 & 1065.8 & 0.0921 & 1665.8 & 0.1881 & 2671.8 \\
\addlinespace[2pt]
STRING & Multilabel (Direct loss, shared grid) & 0.0386 & 424.9 & 0.0828 & 828.1 & 0.1458 & 1383.8 \\
 & GoDag, shared grid & 0.0386 & 424.9 & 0.0828 & 828.1 & 0.1693 & 1562.5 \\
 & GoDag, refined grid & 0.0487 & 496.1 & 0.0978 & 962.1 & 0.1967 & 1768.4 \\
 & \textbf{GoCo-N} & 0.0466 & 663.9 & 0.0944 & 1208.0 & 0.1905 & 2332.3 \\
\bottomrule
\end{tabular}
\par\vspace{3pt}\begin{minipage}{\linewidth}\footnotesize FDP: mean held-out gene-level TruePath false discovery proportion. Correct calls: released gene--GO pairs the frozen reference confirms. Each structure is shown with the instantiation prespecified for it. Module-family GoCo-M and GoDag share the 25-unit grid. In the neighbourhood family, Multilabel and GoDag are shown on the shared coarse grid, with refined-grid GoDag separately; GoCo-N uses a distinct call-ordered path. Grid refinement does not change the fixed-sequence multiplicity argument.\end{minipage}
\label{supp-supp:last-page}
\end{table*}

\begin{thebibliography}{99}
\bibitem{radivojac2013} Radivojac P, Clark WT, Oron TR, et al. A large-scale evaluation of computational protein function prediction. \textit{Nat Methods}. 2013;10:221--227. doi:10.1038/nmeth.2340.
\bibitem{zhou2019} Zhou N, Jiang Y, Bergquist TR, et al. The CAFA challenge reports improved protein function prediction and new functional annotations for hundreds of genes through experimental screens. \textit{Genome Biol}. 2019;20:244. doi:10.1186/s13059-019-1835-8.
\bibitem{jiang2016cafa2} Jiang Y, Oron TR, Clark WT, et al. An expanded evaluation of protein function prediction methods shows an improvement in accuracy. \textit{Genome Biol}. 2016;17:184. doi:10.1186/s13059-016-1037-6.
\bibitem{lin2024review} Lin B, Luo X, Liu Y, et al. A comprehensive review and comparison of existing computational methods for protein function prediction. \textit{Brief Bioinform}. 2024;25(4):bbae289. doi:10.1093/bib/bbae289.
\bibitem{go2023} The Gene Ontology Consortium. The Gene Ontology knowledgebase in 2023. \textit{Genetics}. 2023;224(1):iyad031. doi:10.1093/genetics/iyad031.
\bibitem{skunca2012quality} \v{S}kunca N, Altenhoff A, Dessimoz C. Quality of computationally inferred Gene Ontology annotations. \textit{PLoS Comput Biol}. 2012;8(5):e1002533. doi:10.1371/journal.pcbi.1002533.
\bibitem{clark2013information} Clark WT, Radivojac P. Information-theoretic evaluation of predicted ontological annotations. \textit{Bioinformatics}. 2013;29(13):i53--i61. doi:10.1093/bioinformatics/btt228.
\bibitem{piovesan2024evaluator} Piovesan D, Zago D, Joshi P, et al. CAFA-evaluator: a Python tool for benchmarking ontological classification methods. \textit{Bioinform Adv}. 2024;4(1):vbae043. doi:10.1093/bioadv/vbae043.
\bibitem{kulmanov2020deepgoplus} Kulmanov M, Hoehndorf R. DeepGOPlus: improved protein function prediction from sequence. \textit{Bioinformatics}. 2020;36(2):422--429. doi:10.1093/bioinformatics/btz595.
\bibitem{yao2021netgo2} Yao S, You R, Wang S, et al. NetGO 2.0: improving large-scale protein function prediction with massive sequence, text, domain, family and network information. \textit{Nucleic Acids Res}. 2021;49(W1):W469--W475. doi:10.1093/nar/gkab398.
\bibitem{tawfiq2026completeness} Tawfiq R, Kulmanov M, Hoehndorf R. Evaluating completeness, coherence, and consistency of genome-scale function annotations. \textit{Brief Bioinform}. 2026;27(3):bbag336. doi:10.1093/bib/bbag336.
\bibitem{zhang2024search} Zhang C, Freddolino PL. A large-scale assessment of sequence database search tools for homology-based protein function prediction. \textit{Brief Bioinform}. 2024;25(4):bbae349. doi:10.1093/bib/bbae349.
\bibitem{zhu2025mkfgo} Zhu YH, Zhu S, Yu X, et al. MKFGO: integrating multi-source knowledge fusion with pretrained language model for high-accuracy protein function prediction. \textit{Brief Bioinform}. 2025;26(4):bbaf420. doi:10.1093/bib/bbaf420.
\bibitem{hu2025dahgl} Hu S, Zhang W, Zhao B. DA-HGL: a domain-augmented heterogeneous graph learning framework for protein function prediction. \textit{Brief Bioinform}. 2025;26(5):bbaf511. doi:10.1093/bib/bbaf511.
\bibitem{kan2026protdml} Kan Y, Yi G. ProtDML: label-aware representation learning for broad-spectrum protein function prediction. \textit{Brief Bioinform}. 2026;27(3):bbag293. doi:10.1093/bib/bbag293.
\bibitem{wang2026ssas} Wang D, Wang H, Jiang T, et al. SSAS-GO: structure-sequence adaptive synergy network for protein function prediction. \textit{Brief Bioinform}. 2026;27(4):bbag408. doi:10.1093/bib/bbag408.
\bibitem{mostafavi2008} Mostafavi S, Ray D, Warde-Farley D, et al. GeneMANIA: a real-time multiple association network integration algorithm for predicting gene function. \textit{Genome Biol}. 2008;9(Suppl 1):S4. doi:10.1186/gb-2008-9-s1-s4.
\bibitem{kustatscher2019} Kustatscher G, Grabowski P, Schrader TA, et al. Co-regulation map of the human proteome enables identification of protein functions. \textit{Nat Biotechnol}. 2019;37:1361--1371. doi:10.1038/s41587-019-0298-5.
\bibitem{wainberg2021} Wainberg M, Kamber RA, Balsubramani A, et al. A genome-wide atlas of co-essential modules assigns function to uncharacterized genes. \textit{Nat Genet}. 2021;53:638--649. doi:10.1038/s41588-021-00840-z.
\bibitem{angelopoulos2023gentle} Angelopoulos AN, Bates S. Conformal prediction: a gentle introduction. \textit{Found Trends Mach Learn}. 2023;16(4):494--591. doi:10.1561/2200000101.
\bibitem{bates2021} Bates S, Angelopoulos AN, Lei L, et al. Distribution-free, risk-controlling prediction sets. \textit{J ACM}. 2021;68(6):43. doi:10.1145/3478535.
\bibitem{angelopoulos2024crc} Angelopoulos AN, Bates S, Fisch A, et al. Conformal risk control. In: \textit{Proceedings of the Twelfth International Conference on Learning Representations (ICLR 2024)}. 2024. \url{https://openreview.net/forum?id=33XGfHLtZg}.
\bibitem{angelopoulos2025ltt} Angelopoulos AN, Bates S, Cand\`es EJ, et al. Learn Then Test: calibrating predictive algorithms to achieve risk control. \textit{Ann Appl Stat}. 2025;19(2):1641--1662. doi:10.1214/24-AOAS1998.
\bibitem{boger2025} Boger RS, Chithrananda S, Angelopoulos AN, et al. Functional protein mining with conformal guarantees. \textit{Nat Commun}. 2025;16:85. doi:10.1038/s41467-024-55676-y.
\bibitem{ding2024cpec} Ding K, Luo J, Luo Y. Leveraging conformal prediction to annotate enzyme function space with limited false positives. \textit{PLoS Comput Biol}. 2024;20(5):e1012135. doi:10.1371/journal.pcbi.1012135.
\bibitem{lopezdecastro2025conformal} L\'opez-De-Castro M, Garc\'ia-Galindo A, Gonz\'alez-Gomariz J, et al. Conformal inference for reliable single cell RNA-seq annotation. \textit{Bioinformatics}. 2025;41(10):btaf521. doi:10.1093/bioinformatics/btaf521.
\bibitem{corbetta2026cellontology} Corbetta D, Finos L, Geistlinger L, et al. Conformal inference for cell-type annotation with graph-structured constraints. \textit{J R Stat Soc Ser C Appl Stat}. 2026;qlag037. doi:10.1093/jrsssc/qlag037.
\bibitem{mortier2026hierarchical} Mortier T, Javanmardi A, Sale Y, et al. Conformal prediction in hierarchical classification with constrained representation complexity. In: \textit{Proceedings of the 29th International Conference on Artificial Intelligence and Statistics (AISTATS 2026)}. PMLR; 2026. arXiv:2501.19038.
\bibitem{zhang2025structured} Zhang B, Li S, Bastani O. Conformal structured prediction. In: \textit{Proceedings of the Thirteenth International Conference on Learning Representations (ICLR 2025)}. 2025. \url{https://openreview.net/forum?id=mKfmLQXP6J}.
\bibitem{shao2007} Shao Y, Tseng CH. Sample size calculation with dependence adjustment for FDR-control in microarray studies. \textit{Stat Med}. 2007;26(23):4219--4237. doi:10.1002/sim.2862.
\bibitem{shang2012fdp} Shang S, Liu M, Shao Y. A tight prediction interval for false discovery proportion under dependence. \textit{Open J Stat}. 2012;2(2):163--171. doi:10.4236/ojs.2012.22018.
\bibitem{hajek1960} H\'ajek J. Limiting distributions in simple random sampling from a finite population. \textit{Publ Math Inst Hungar Acad Sci}. 1960;5:361--374.
\bibitem{erdos1959} Erd\H{o}s P, R\'enyi A. On the central limit theorem for samples from a finite population. \textit{Publ Math Inst Hungar Acad Sci}. 1959;4:49--61.
\bibitem{hoeffding1963} Hoeffding W. Probability inequalities for sums of bounded random variables. \textit{J Am Stat Assoc}. 1963;58:13--30. doi:10.1080/01621459.1963.10500830.
\bibitem{behan2019} Behan FM, Iorio F, Picco G, et al. Prioritization of cancer therapeutic targets using CRISPR-Cas9 screens. \textit{Nature}. 2019;568:511--516. doi:10.1038/s41586-019-1103-9.
\bibitem{mcdonald2017} McDonald ER III, de Weck A, Schlabach MR, et al. Project DRIVE: a compendium of cancer dependencies and synthetic lethal relationships uncovered by large-scale, deep RNAi screening. \textit{Cell}. 2017;170:577--592.e10. doi:10.1016/j.cell.2017.07.005.
\bibitem{billmann2026} Billmann M, Costanzo M, Zhang X, et al. Global genetic interaction network of a human cell maps conserved principles and informs functional interpretation of gene co-essentiality profiles. \textit{Cell}. 2026;189(12):3736--3759.e58. doi:10.1016/j.cell.2026.03.044.
\bibitem{shi2025funmap} Shi Z, Lei JT, Elizarraras JM, et al. Mapping the functional network of human cancer through machine learning and pan-cancer proteogenomics. \textit{Nat Cancer}. 2025;6(1):205--222. doi:10.1038/s43018-024-00869-z.
\bibitem{szklarczyk2023string} Szklarczyk D, Kirsch R, Koutrouli M, et al. The STRING database in 2023: protein--protein association networks and functional enrichment analyses for any sequenced genome of interest. \textit{Nucleic Acids Res}. 2023;51(D1):D638--D646. doi:10.1093/nar/gkac1000.
\bibitem{bentkus2004} Bentkus V. On Hoeffding's inequalities. \textit{Ann Probab}. 2004;32(2):1650--1673. doi:10.1214/009117904000000360.
\bibitem{maurer2009bernstein} Maurer A, Pontil M. Empirical Bernstein bounds and sample-variance penalization. In: \textit{Proceedings of the 22nd Conference on Learning Theory (COLT 2009)}. 2009. \url{https://arxiv.org/abs/0907.3740}.
\bibitem{mcdiarmid1989} McDiarmid C. On the method of bounded differences. In: \textit{Surveys in Combinatorics, 1989}. Cambridge: Cambridge University Press; 1989:148--188. doi:10.1017/CBO9781107359949.008.
\bibitem{janson2004} Janson S. Large deviations for sums of partly dependent random variables. \textit{Random Struct Algorithms}. 2004;24(3):234--248. doi:10.1002/rsa.20008.
\bibitem{newey1987} Newey WK, West KD. A simple, positive semi-definite, heteroskedasticity and autocorrelation consistent covariance matrix. \textit{Econometrica}. 1987;55(3):703--708. doi:10.2307/1913610.
\bibitem{platt1999} Platt J. Probabilistic outputs for support vector machines and comparisons to regularized likelihood methods. In: Smola AJ, Bartlett P, Sch\"olkopf B, Schuurmans D, eds. \textit{Advances in Large Margin Classifiers}. Cambridge, MA: MIT Press; 1999:61--74.
\bibitem{zadrozny2002} Zadrozny B, Elkan C. Transforming classifier scores into accurate multiclass probability estimates. In: \textit{Proceedings of the Eighth ACM SIGKDD International Conference on Knowledge Discovery and Data Mining (KDD 2002)}. 2002:694--699. doi:10.1145/775047.775151.
\bibitem{shi2021n4bp1} Shi H, Sun L, Wang Y, et al. N4BP1 negatively regulates NF-$\kappa$B by binding and inhibiting NEMO oligomerization. \textit{Nat Commun}. 2021;12:1379. doi:10.1038/s41467-021-21711-5.
\bibitem{jin2023selection} Jin Y, Cand\`es EJ. Selection by prediction with conformal p-values. \textit{J Mach Learn Res}. 2023;24(244):1--41.
\bibitem{bates2023outliers} Bates S, Cand\`es E, Lei L, et al. Testing for outliers with conformal p-values. \textit{Ann Stat}. 2023;51(1):149--178. doi:10.1214/22-AOS2244.
\bibitem{vovk2013conditional} Vovk V. Conditional validity of inductive conformal predictors. \textit{Mach Learn}. 2013;92(2--3):349--376. doi:10.1007/s10994-013-5355-6.
\bibitem{gibbs2025conditional} Gibbs I, Cherian JJ, Cand\`es EJ. Conformal prediction with conditional guarantees. \textit{J R Stat Soc Ser B Stat Methodol}. 2025;87(4):1100--1126. doi:10.1093/jrsssb/qkaf008.
\bibitem{ignatiadis2016ihw} Ignatiadis N, Klaus B, Zaugg JB, et al. Data-driven hypothesis weighting increases detection power in genome-scale multiple testing. \textit{Nat Methods}. 2016;13(7):577--580. doi:10.1038/nmeth.3885.
\bibitem{lei2018adapt} Lei L, Fithian W. AdaPT: an interactive procedure for multiple testing with side information. \textit{J R Stat Soc Ser B Stat Methodol}. 2018;80(4):649--679. doi:10.1111/rssb.12274.
\bibitem{vovk2005book} Vovk V, Gammerman A, Shafer G. \textit{Algorithmic Learning in a Random World}. New York: Springer; 2005.
\bibitem{warwick2020negative} Warwick Vesztrocy A, Dessimoz C. Benchmarking gene ontology function predictions using negative annotations. \textit{Bioinformatics}. 2020;36(Suppl 1):i210--i218. doi:10.1093/bioinformatics/btaa466.
\bibitem{jiang2014incomplete} Jiang Y, Clark WT, Friedberg I, et al. The impact of incomplete knowledge on the evaluation of protein function prediction: a structured-output learning perspective. \textit{Bioinformatics}. 2014;30(17):i609--i616. doi:10.1093/bioinformatics/btu472.
\bibitem{zhapa2024pugo} Zhapa-Camacho F, Tang Z, Kulmanov M, et al. Predicting protein functions using positive-unlabeled ranking with ontology-based priors. \textit{Bioinformatics}. 2024;40(Suppl 1):i401--i409. doi:10.1093/bioinformatics/btae237.
\bibitem{littmann2021embedding} Littmann M, Heinzinger M, Dallago C, et al. Embeddings from deep learning transfer GO annotations beyond homology. \textit{Sci Rep}. 2021;11:1160. doi:10.1038/s41598-020-80786-0.
\bibitem{wang2026gocozenodo} [dataset] Wang C, Shao Y, Liu P. GoCo: structure-guided conformal calibration for error-controlled Gene Ontology annotation release --- reference implementation, analysis inputs and split-level outputs. \textit{Zenodo}. 2026. https://doi.org/10.5281/zenodo.22677009.
\end{thebibliography}
\end{document}